\theoremstyle{plain}
\newtheorem{thm}{Theorem}
\newtheorem{assum}{Assumption}
\newtheorem{defi}{Definition}
\newtheorem{rmk}{Remark}
\newtheorem{prop}{Proposition}
\newtheorem{lem}{Lemma}
\def\Rset{\mathbb{R}}
\def\Nset{\mathbb{N}}
\def\Zset{\mathbb{Z}}
\newcommand{\AAA}{{\mathcal A}}
\newcommand{\BB}{{\mathcal B}}
\newcommand{\CC}{{\mathcal C}}
\newcommand{\II}{{\mathcal I}}
\newcommand{\KK}{{\mathcal K}}
\newcommand{\MM}{{\mathcal M}}
\newcommand{\NN}{{\mathcal N}}
\newcommand{\PP}{{\mathcal P}}
\newcommand{\SSS}{{\mathcal S}}
\newcommand{\ZZ}{{\mathcal Z}}
\newcommand{\hx}{{\hat x}}
\newcommand{\bu}{{\bar u}}
\newcommand{\bv}{{\bar v}}
\newcommand{\bz}{{\bar z}}
\newcommand{\mbf}[1]{\mathbf{#1}}
\newcommand{\px}{{x^+}}
\newcommand{\subss}[2]{#1{_{[#2]}}}
\newcommand{\tx}{{\tilde x}}
\newcommand{\tz}{{\tilde z}}
\newcommand{\hpx}{{\hat{x}}^+}
\newcommand{\pz}{{{z}^+}}
\newcommand{\diag}{\mbox{diag}}
\newcommand{\Xset}{\mathbb{X}}
\newcommand{\bUset}{{\bar{\mathbb{U}}}}
\newcommand{\bZset}{{\bar{\mathbb{Z}}}}
\newcommand{\Uset}{\mathbb{U}}
\newcommand{\hXset}{\mathbb{{\hat{X}}}}
\newcommand{\Wset}{\mathbb{{{W}}}}
\newcommand{\Vset}{\mathbb{V}}
\newcommand{\ball}[1]{{B_{#1}}}
\newcommand{\abs}[1]{{|{#1}|}}
\newcommand{\norme}[2]{{||{#1}||_{#2}}}
\newcommand{\Pset}{\mathbb{P}}
\newcommand{\One}{\textbf{1}}
\newcommand{\convh}{\mbox{convh}}
\newcommand{\hbx}{{\bar{\hat x}}}
\newcommand{\bkappa}{{\bar{\kappa}}}
\newcommand{\ba}[1]{\begin{array}{#1}}
\newcommand{\ea}{\end{array}}
\newcommand{\matr}[1]{
\begin{bmatrix}
    #1
\end{bmatrix}
}
\begin{document}

     \title{Plug-and-Play Model Predictive Control based on robust control invariant sets\thanks{The research leading to these results has received funding from the European Union Seventh Framework Programme [FP7/2007-2013]  under grant agreement n$^\circ$ 257462 HYCON2 Network of excellence.}}
     \author{Stefano Riverso%
       \thanks{Electronic address: \texttt{stefano.riverso@unipv.it}; Corresponding author}}

     \author{Marcello Farina%
       \thanks{Electronic address: \texttt{farina@elet.polimi.it}}}

     \author{Giancarlo Ferrari-Trecate%
       \thanks{Electronic address: \texttt{giancarlo.ferrari@unipv.it}\\S. Riverso and G. Ferrari-Trecate are with Dipartimento di Ingegneria Industriale e dell'Informazione, Universit\`a degli Studi di Pavia, via Ferrata 1, 27100 Pavia, Italy\\M. Farina is with Dipartimento di Elettronica e Informazione, Politecnico di Milano, via Ponzio 34/5, 20133 Milan, Italy}}

     \affil{Dipartimento di Ingegneria Industriale e dell'Informazione\\Universit\`a degli Studi di Pavia}
     \date{\textbf{Technical Report}\\ December, 2013}

     \maketitle

     \begin{abstract}
          In this paper we consider a linear system represented by a coupling graph between subsystems and propose a distributed control scheme capable to guarantee asymptotic stability and satisfaction of constraints on system inputs and states. Most importantly, as in \cite{Riverso2012a,Riverso2013c} our design procedure enables plug-and-play (PnP) operations, meaning that (i) the addition or removal of subsystems triggers the design of local controllers associated to successors to the subsystem only and (ii) the synthesis of a local controller for a subsystem requires information only from predecessors of the subsystem and it can be performed using only local computational resources. Our method hinges on local tube MPC controllers based on robust control invariant sets and it advances the PnP design procedure proposed in \cite{Riverso2012a,Riverso2013c} in several directions. Quite notably, using recent results in the computation of robust control invariant sets, we show how critical steps in the design of a local controller can be solved through linear programming. Finally, an application of the proposed control design procedure to frequency control in power networks is presented.\\
          \emph{Key Words: Decentralized Control; Distributed Control; Decentralized Synthesis; Large-scale Systems; Model Predictive Control; Plug-and-Play Control; Robust Control}
     \end{abstract}

     \newpage

     \section{Introduction}
          \label{sec:intro}
          Centralized advanced control systems are nowadays ubiquitous for guaranteeing optimality, stability, robustness and reliability in a wide range of applications, spanning from process industry to chemical plants and automotive/transportation systems. Their implementation requires measurements to be conveyed instantaneously and simultaneously to a central station, where the control law is computed. Then, the control variables must be transmitted from the central station to each actuator. As a consequence, when actuators and sensors are sparse over a wide geographical area, communication and computational requirements can be very demanding. Moreover, centralized design often hinges on the detailed knowledge of the whole plant and the availability of a dynamical model that must be amenable to designing control laws through advanced tools including, e.g. optimization steps on solution to matrix inequalities.

          The ever-increasing complexity and size of process plants, manufacturing systems, transportation systems, and power distribution networks call for the development of distributed control architectures. Novel control approaches should enable decisions to be taken by a number of distributed regulators with low computational burden and collocated with sensors and actuators. Furthermore, synthesis problems should be decomposed into independent (or almost independent) sub-problems.

          Decentralized and distributed control methods and architectures have been studied in the last decades to provide a solution to these issues. They are based on specific system representations requiring to split of the overall system into interacting subsystems. Interestingly, the interactions between subsystems can be described using directed graph representations which can be extremely useful to unveil the so-called system's structural properties \cite{Siljak1991,Lunze1992}. Consistently with this graph-theoretical characterization, in this work the concepts of predecessor and successor will be used. Namely, a subsystem $j$ is said to be a predecessor (respectively, a successor) of subsystem $i$ if its state variables directly influence the dynamics of $i$ (respectively, if its dynamics is directly influenced by the input/state variables of $i$).

          In the last years many decentralized and distributed control schemes based on Model Predictive Control (MPC) \cite{Scattolini2009} have been proposed, in view of the possibility of coping with constraints on system's variables \cite{Mayne2000} and the fact that predictions of the input and state trajectories can be exchanged among distributed regulators to guarantee stability, robustness, and global optimality. Available distributed MPC schemes span from cooperative \cite{Sanchez2008,Stewart2010} (guaranteeing system-wide optimality), to non-cooperative ones, which require limited computational load, memory, and transmission of information \cite{Trodden2010,Camponogara2002,Farina2012,Riverso2012e}.

        One the main problems of the above MPC-based controllers is the need of a centralized off-line design phase. While simplified and even decentralized stability analysis tools for distributed systems are available, based on aggregated models (see, e.g., the vector Lyapunov function method \cite{Lunze1992}), the design phase can in general be carried out in a decentralized fashion only in special cases \cite{Bakule1988}. To overcome this problem, in \cite{Riverso2012a,Riverso2013c} a novel solution based on the PnP design paradigm \cite{Stoustrup2009} has been proposed. PnP design besides synthesis decentralization, imposes the following constraints: when a subsystem is added to an existing plant (i) local controllers have to be designed only for the subsystem and its successors; (ii) the design of a local controller uses only information from the subsystem and its predecessors. Quite remarkably, these requirements allow controllers to be computed using computational resources collocated with subsystems. Furthermore, the complexity of controller design and implementation, for a given subsystem, scales with the number of its predecessors only.

          As in \cite{Riverso2012a,Riverso2013c} we propose a PnP design procedure hinging on the notion of tube MPC \cite{Rakovic2005b} for handling coupling among subsystems, and aim at stabilizing the origin of the whole closed-loop system while guaranteeing satisfaction of constraints on local inputs and states. However, we advance the design procedure in \cite{Riverso2012a,Riverso2013c} in several directions. First, in \cite{Riverso2012a,Riverso2013c} the most critical step in the design of local MPC controllers required the solution to nonlinear optimization problems. In this paper, using local tube MPC regulators based on Robust Control Invariant (RCI) sets, we guarantee overall stability and constraints satisfaction solving Linear Programming (LP) problems only. Second, in \cite{Riverso2012a,Riverso2013c} stability requirements where fulfilled imposing an aggregate sufficient small-gain condition for networks. In the present paper, we resort instead to set-based conditions that are usually less conservative. Third, while methods in \cite{Riverso2012a,Riverso2013c} were tailored to decentralized control only, the new PnP-DeMPC can also admit a distributed implementation tacking advantage of pieces of information received online by predecessors. As for any decentralized synthesis procedure our method involves some degree of conservativeness \cite{Bakule1988} and its potential application to real-world systems will be discussed through examples. In particular, as in \cite{Riverso2012a,Riverso2013c} we present an application of PnP-DeMPC to frequency control in power networks. Furthermore we highlight computational advantages brought about by our method by considering the control of a large array of masses connected by springs and dampers.

          The paper is structured as follows. The design of decentralized controllers is introduced in Section \ref{sec:modelcontroller} with a focus on the assumptions needed for guaranteeing asymptotic stability of the origin and constraint satisfaction. In Section \ref{sec:main} we discuss how to design local controllers in a distributed fashion by solving LP problems and in Section \ref{sec:plugplay} we describe PnP operations. In Section \ref{sec:distrcontr} we show how to enhance the control scheme taking advantage of pieces of information received from predecessors. In Section \ref{sec:examples} we present applicative examples and Section \ref{sec:conclusions} is devoted to concluding remarks.

          \textbf{Notation.} We use $a:b$ for the set of integers $\{a,a+1,\ldots,b\}$. The column vector with $s$ components $v_1,\dots,v_s$ is $\mbf v=(v_1,\dots,v_s)$. The function $\diag(G_1,\ldots,G_s)$ denotes the block-diagonal matrix composed by $s$ block $G_i$, $i\in 1:s$. The symbols $\oplus$ and $\ominus$ denote the Minkowski sum and difference, respectively, i.e. $A=B\oplus C$ if $A=\{a:a=b+c,\mbox{ for all }b\in B \mbox{ and }c\in C\}$ and $A=B\ominus C$ if $a\oplus C\subseteq B,~\forall a\in A$. Moreover, $\bigoplus_{i=1}^sG_i=G_1\oplus\ldots\oplus G_s$. $\ball{\rho}(z)\subset\Rset^n$ denotes the open ball of radius $\rho$ centered in $z\in\Rset^n$. Given a set $\Xset\subset\Rset^n$, $\convh(\Xset)$ denotes its convex hull. The symbol $\One$ denotes a column vector of suitable dimension with all elements equal to $1$.
          \begin{defi}[Robust Control Invariant (RCI) set]
            Consider the discrete-time Linear Time-Invariant (LTI) system $x(t+1)=Ax(t)+Bu(t)+w(t)$, with $x(t)\in\Rset^n$, $u(t)\in\Rset^m$, $w(t)\in\Wset^n$ and subject to constraints $u(t)\in\Uset\subseteq\Rset^m$ and $w(t)\in\Wset\subset\Rset^n$. The set $\Xset\subseteq\Rset^n$ is an RCI set with respect to $w(t)\in\Wset$, if $\forall x(t)\in\Xset$ then there exist $u(t)\in\Uset$ such that $x(t+1)\in\Xset$, $\forall w(t)\in\Wset$.
          \end{defi}

     \section{Decentralized tube-based MPC of linear systems}
          \label{sec:modelcontroller}
          We consider a discrete-time LTI system
          \begin{equation}
            \label{eq:system}
            \mbf{\px}=\mbf{A}\mbf{x}+\mbf{B}\mbf{u}
          \end{equation}
          where $\mbf x\in\Rset^n$ and $\mbf u\in\Rset^m$ are the state and the input, respectively, at time $t$ and $\mbf{\px}$ stands for $\mbf{x}$ at time $t+1$. We will use the notation $\mbf x(t)$, $\mbf u(t)$ only when necessary. The state is composed by $M$ state vectors $\subss x i\in\Rset^{n_i}$, $i\in\MM=1:M$ such that $\mbf{x}=(\subss x 1,\dots,\subss x M)$, and $n=\sum_{i\in\MM}n_i$. Similarly, the input is composed by into $M$ vectors $\subss u i\in\Rset^{m_i}$, $i\in\MM$ such that $\mbf{u}=(\subss u 1,\dots,\subss u M)$ and $m=\sum_{i\in\MM}m_i$.

          We assume the dynamics of the $i-th$ subsystem is given by
          \begin{equation}
            \label{eq:subsystem}
            \subss\Sigma i:\quad\subss \px i=A_{ii}\subss x i+B_i\subss u i+\subss w i
          \end{equation}
          \begin{equation}
            \label{eq:couplingW}
            \subss w i = \sum_{j\in\NN_i}A_{ij}\subss x j
          \end{equation}
          where $A_{ij}\in\Rset^{n_i\times n_j}$, $i,j\in\MM$, $B_i\in\Rset^{n_i\times m_i}$ and $\NN_i$ is the set of predecessors to subsystem $i$ defined as $\NN_i=\{j\in\MM:A_{ij}\neq 0, i\neq j\}$.

          According to \eqref{eq:subsystem}, the matrix $\mbf A$ in \eqref{eq:system} is decomposed into blocks $A_{ij}$, $i,j\in\MM$. We also define $\mbf{A_D}=\diag(A_{11},\dots,A_{MM})$ and $\mbf{A_C}=\mbf{A}-\mbf{A_D}$, i.e. $\mbf{A_D}$ collects the state transition matrices of every subsystem and $\mbf{A_C}$ collects coupling terms between subsystems. From \eqref{eq:subsystem} one also obtains $\mbf{B}=\diag(B_1,\dots,B_M)$ because submodels \eqref{eq:subsystem} are input decoupled.\\

          \begin{assum}
            \label{ass:controllable}
            The matrix pair $(A_{ii},B_i)$ is controllable, $\forall i\in\MM$.
            \begin{flushright}$\blacksquare$\end{flushright}
          \end{assum}

          In this Section we propose a decentralized controller for \eqref{eq:system} guaranteeing asymptotic stability of the origin of the closed-loop system and constraints satisfaction.

          In the spirit of optimized tube-based control \cite{Rakovic2005b}, we treat $\subss w i$ as a disturbance and define the nominal system $\subss{\hat\Sigma} i$ as
          \begin{equation}
            \label{eq:nominalsubsystem}
            \subss{\hat\Sigma} i:\quad\subss \hpx i=A_{ii}\subss\hx i+B_i\subss v i
          \end{equation}
          where $\subss v i$ is the input. Note that \eqref{eq:nominalsubsystem} has been obtained from \eqref{eq:subsystem} by neglecting the disturbance term $\subss w i$.

          We equip subsystems $\subss\Sigma i$, $i\in\MM$ with the constraints $\subss x i \in \Xset_i,~\subss u i \in \Uset_i$, define the sets $\Xset=\prod_{i\in\MM}\Xset_i$, $\Uset=\prod_{i\in\MM}\Uset_i$ and consider the collective constrained system \eqref{eq:system} with
          \begin{equation}
            \label{eq:constraints}
            \mbf x \in \Xset, ~
            \mbf u \in \Uset.
          \end{equation}
          The next Assumption characterizes the shape of constraints $\Xset_i$ and $\Uset_i$, $i\in\MM$.
          \begin{assum}
            \label{ass:shapesets}
            Constraints $\Xset_i$ and $\Uset_i$ are compact and convex polytopes containing the origin in their nonempty interior.
            \begin{flushright}$\blacksquare$\end{flushright}
          \end{assum}

          As in \cite{Rakovic2005b} our goal is to relate inputs $\subss v i$ in \eqref{eq:nominalsubsystem} to $\subss u i$ in \eqref{eq:subsystem} and compute sets $\Zset_i\subseteq\Rset^n$, $i\in\MM$ such that
          \begin{equation*}
            \subss x i(0)\in\subss\hx i(0)\oplus\Zset_i\Rightarrow\subss x i(t)\in\subss\hx i(t)\oplus\Zset_i,~\forall t\geq 0.
          \end{equation*}
          In other words, we want to confine $\subss x i(t)$ in a tube around $\subss\hx i(t)$ of section $\Zset_i$.

          To achieve our aim, we define the set $\Zset_i$, $\forall i\in\MM$ as an RCI set for the constrained system \eqref{eq:subsystem}, with respect to the disturbance $w_i\in\Wset_i=\bigoplus_{j\in\NN_i}A_{ij}\Xset_j$. From the definition of RCI set, we have that if $\subss x i\in\Zset_i$, then there exist $\subss u i=\bkappa_i(\subss x i):\Zset_i\rightarrow\Uset_i$ such that $\subss\px i\in\Zset_i$, $\forall\subss w i\in\Wset_i$. Note that, by construction, one has $\Zset_i\subseteq\Xset_i$ and therefore the RCI set $\Zset_i$ could not exist if $\Wset_i$ is ``too big'', i.e. $\Wset_i\supseteq\Xset_i$. Moreover if $\subss x i\in\subss\hx i\oplus\Zset_i$ and one uses the controller
          \begin{equation}
            \label{eq:tubecontrol}
            \subss\CC i:\quad\subss u i=\subss v i+\bkappa_i(\subss x i-\subss\hx i)
          \end{equation}
          then for all $\subss v i\in\Rset^{m_i}$, one has $\subss\px i\in\subss\hpx i\oplus\Zset_i$.

          The next goal is to compute tightened constraints $\hXset_i\subseteq\Xset_i$ and input constraints $\Vset_i\subseteq\Uset_i$ guaranteeing that
          \begin{align}
            \label{eq:cnstr_satisfied}
            &\subss \hx i (t)\in\hXset_i,~\subss v i (t)\in\Vset_i,~\forall i\in\MM\\
            &\quad \Rightarrow \mbf{x}(t +1)\in \Xset,~\mbf{u}(t)\in\Uset.\nonumber
          \end{align}
          To this purpose, we introduce the following assumption.
          \begin{assum}
            \label{ass:constr_satisf}
            There exist $\rho_{i,1}>0$, $\rho_{i,2}>0$ such that $\Zset_i\oplus\ball{\rho_{i,1}}(0)\subseteq\Xset_i$ and $\Uset_{z_i}\oplus\ball{\rho_{i,2}}(0)\subseteq\Uset_i$, where $\ball{\rho_{i,1}}(0) \subset\Rset^{n_i}$ and $\ball{\rho_{i,2}}(0)\subset\Rset^{m_i}$ and $\Uset_{z_i}=\bkappa_i(\Zset_i)$.
            \begin{flushright}$\blacksquare$\end{flushright}
          \end{assum}
          Assumption~\ref{ass:constr_satisf} implies that the coupling of subsystems connected in a cyclic fashion must be sufficiently small. As an example, for two subsystems $\Sigma_1$ and $\Sigma_2$ where each one is parent of the other one, Assumption~\ref{ass:constr_satisf} implies that $\Zset_1\subseteq\Xset_1$ and $\Zset_1\subseteq\Xset_1$. Since, by construction, $\Zset_i\supseteq\Wset_i$, one has $A_{21}\Xset_1\subseteq \Xset_2$ and $A_{12}\Xset_2\subseteq\Xset_1$ that implies $A_{12}A_{21}\Xset_1\subseteq\Xset_1$. These conditions are similar to the ones arising in the small gain theorem for networks \cite{Dashkovskiy}.

          If Assumption \ref{ass:constr_satisf} is verified, there are constraint sets $\hXset_i$ and $\Vset_i$, $i\in\MM$, that verify
          \begin{equation}
            \label{eq:stateinclu}
            \hXset_i\oplus\Zset_i\subseteq\Xset_i
          \end{equation}
          \begin{equation}
            \label{eq:inputinclu}
            \Vset_i\oplus\Uset_{z_i}\subseteq\Uset_i.
          \end{equation}

          Under Assumptions \ref{ass:controllable}-\ref{ass:constr_satisf}, we set in \eqref{eq:tubecontrol}
          \begin{equation}
            \label{eq:def_kappa_eta}
            \subss v i(t) = \kappa_i(\subss x i(t))=\subss v i(0|t),\qquad\subss\hx i (t)=\eta_i(\subss x i(t))=\subss\hx i(0|t)
          \end{equation}
          where $\subss v i(0|t)$ and $\subss\hx i(0|t)$ are optimal values of variables $\subss v i(0)$ and $\subss\hx i(0)$, respectively, appearing in the following MPC-$i$ problem, to be solved at time $t$
          \begin{subequations}
            \label{eq:decMPCProblem}
            \begin{align}
              &\label{eq:costMPCProblem}\Pset_i^N(\subss x i(t)) = \min_{\substack{\subss\hx i(0)\\\subss v i(0:N_i-1)}}\sum_{k=0}^{N_i-1}\ell_i(\subss\hx i(k),\subss v i(k))+V_{f_i}(\subss\hx i(N_i))&\\
              &\label{eq:inZproblem}\subss x i(t)-\subss \hx i(0)\in\Zset_i&\\
              &\label{eq:dynproblem}\subss \hx i(k+1)=A_{ii}\subss \hx i(k)+B_i\subss v i(k)&k\in0:N_i-1 \\
              &\label{eq:inhXproblem}\subss \hx i(k)\in\hXset_i&k\in0:N_i-1 \\
              &\label{eq:inVproblem}\subss v i(k)\in\Vset_i&k\in0:N_i-1 \\
              &\label{eq:inTerminalSet}\subss \hx i(N_i)\in\hXset_{f_i}&
            \end{align}
          \end{subequations}
          In \eqref{eq:decMPCProblem}, $N_i\in\Nset$ is the prediction horizon and $\ell_i(\subss\hx i(k),\subss v i(k)):\Rset^{n_i\times m_i}\rightarrow\Rset_+$ is the stage cost. Moreover $V_{f_i}(\subss\hx i(N_i)):\Rset^{n_i}\rightarrow\Rset_+$ is the final cost and $\hXset_{f_i}$ is the terminal set fulfilling the following assumption.
          \begin{assum}
            \label{ass:axiomsMPC}
            For all $i\in\MM$, there exist an auxiliary control law $\kappa_i^{aux}(\subss\hx i)$ and a $\KK_\infty$ function $\BB_i$ such that:
            \begin{enumerate}[(i)]
            \item\label{enu:boundstagecost}$\ell_i(\subss \hx i,\subss v i)\geq \BB_i(\norme{(\subss \hx i,\subss v i)}{})$, for all $\subss \hx i\in\Rset^{n_i}$,  $\subss v i\in\Rset^{m_i}$ and $\ell_i(0,0)=0$;
            \item\label{enu:invariantAux}$\hXset_{f_i}\subseteq\hXset_i$ is an invariant set for $\subss\hpx i=A_{ii}\subss\hx i+B_i\kappa_i^{aux}(\subss\hx i)$;
            \item\label{enu:terminalSetAux}$\forall \subss \hx i\in\hXset_{f_i}$, $\kappa_i^{aux}(\subss\hx i)\in\Vset_i$;
            \item\label{enu:decterminal}$\forall \subss \hx i\in\hXset_{f_i}$, $V_{f_i}(\subss\hpx i)-V_{f_i}(\subss\hx i)\leq-\ell_i(\subss\hx i,\kappa_i^{aux}(\subss\hx i))$.
            \end{enumerate}
            \begin{flushright}$\blacksquare$\end{flushright}
          \end{assum}
          We highlight that there are several methods, discussed e.g. in \cite{Rawlings2009}, for computing $\ell_i(\cdot)$, $V_{f_i}(\cdot)$ and $\Xset_{f_i}$ verifying Assumption \ref{ass:axiomsMPC}. In summary, the controller $\subss\CC i$ is given by \eqref{eq:tubecontrol}, \eqref{eq:def_kappa_eta} and \eqref{eq:decMPCProblem} and it is completely decentralized since it depends upon quantities of system $\subss\Sigma i$ only.

          The main problem that still has to be solved is the following one.
          \paragraph{Problem $\PP$:} Compute RCIs $\Zset_i$, $i\in\MM_i$ for \eqref{eq:subsystem}, if they exist, verifying Assumption \ref{ass:constr_satisf}.
               \begin{flushright}$\blacksquare$\end{flushright}

          In the next section, we show how to solve it in a distributed fashion with efficient computations under Assumption \ref{ass:controllable} and \ref{ass:shapesets}. In this case, we will also show how sets $\hXset_i$ and $\Vset_i$ verifying \eqref{eq:stateinclu} and \eqref{eq:inputinclu} can be readily computed.

          \begin{rmk}
            \label{rmk:oldvsnew1}
            In this section, as in \cite{Riverso2012a,Riverso2013c}, we introduced a DeMPC scheme based on tube-based control. In \cite{Riverso2012a,Riverso2013c}, using the robust control scheme proposed in \cite{Mayne2005}, we set $\bkappa_i(\cdot)$ as a linear function, i.e. $\bkappa_i(\subss x i-\subss\hx i)=K_i(\subss x i-\subss\hx i)$. This choice has the disadvantage of requiring the computation of matrices $K_i$, $i\in\MM$, fulfilling a global stability assumption. Differently, in the next section, using the control scheme proposed in \cite{Rakovic2005b}, we will guarantee the overall stability for the closed-loop system through a suitable local computation of sets $\Zset_i$.
          \end{rmk}

     \section{Decentralized synthesis of DeMPC}
          \label{sec:main}
          From Assumption \ref{ass:shapesets} constraints $\Xset_i$ and $\Uset_i$ are polytopes given by
          \begin{equation}
            \label{eq:setsXpoly}
            \Xset_i=\{\subss x i\in\Rset^{n_i}: c_{x_{i,r}}^T\subss x i\leq d_{x_{i,r}},~\forall r\in 1:g_i \}
          \end{equation}
          \begin{equation}
            \label{eq:setsUpoly}
            \Uset_i=\{\subss u i\in\Rset^{m_i}: c_{u_{i,r}}^T\subss u i\leq d_{u_{i,r}},~\forall r\in 1:l_i \},
          \end{equation}
          where $c_{x_{i,r}}\in\Rset^{n_i}$, $d_{x_{i,r}}\in\Rset$, $c_{u_{i,r}}\in\Rset^{m_i}$ and $d_{u_{i,r}}\in\Rset$.
          Using the procedure proposed in \cite{Rakovic2010} we can compute an RCI set $\Zset_i\subset\Xset_i$ using an appropriate parametrization. As in Section VI of \cite{Rakovic2010}, we define $\forall i\in\MM$ the set of variables $\theta_i$ as
          \begin{subequations}
            \label{eq:thetai}
            \begin{align}
              \theta_i =\{&\subss\bz i^{(s,f)}\in\Rset^{n_i}&\forall s\in 1:k_i,~\forall f\in 1:q_i;\\
                                &\subss\bu i^{(s,f)}\in\Rset^{m_i}&\forall s\in 0:k_i-1,~\forall f\in 1:q_i;\\
                                &\rho_i^{(f_1,f_2)}\in\Rset&\forall f_1\in 1:q_i,~\forall f_2\in 1:q_i;\\
                                &\psi_i^{(r,s)}\in\Rset&\forall r\in 1:l_i,~\forall s\in 0:k_i-1;\\
                                &\gamma_i^{(r,s)}\in\Rset&\forall r\in 1:g_i,~\forall s\in 0:k_i-1;\\
                                &\alpha_i\in\Rset
                                \}
            \end{align}
          \end{subequations}
          where $k_i$, $q_i\in\Nset$ are parameters of the procedure that can be chosen by the user as well as the set
          \begin{equation}
            \label{eq:Zio}
            \bZset_i^0=\convh(\{\subss\bz i^{(0,f)}\in\Rset^{n_i},~\forall f\in 1:q_i\}),~\mbox{with}~\subss\bz i^{(0,1)}=0.
          \end{equation}
          Let us define the sets
          \begin{equation}
            \label{eq:Zis}
            \bZset_i^s=\convh(\{\subss\bz i^{(s,f)}\in\Rset^{n_i},~\forall f\in 1:q_i\}),~\forall s\in 1:k_i,~\mbox{with}~\subss\bz i^{(s,1)}=0,
          \end{equation}
          and
          \begin{equation}
            \label{eq:Uzis}
            \bUset_{z_i}^s=\convh(\{\subss\bu i^{(s,f)}\in\Rset^{m_i},~\forall f\in 1:q_i\}),~\forall s\in 0:k_i-1,~\mbox{with}~\subss\bu i^{(s,1)}=0.
          \end{equation}
          and consider the following set of affine constraints on the decision variable $\theta_i$
          \begin{subequations}
            \label{eq:Thetai}
            \begin{align}
              \Theta_i=\{\theta_i:~&\alpha_i<1;~-\alpha_i\leq 0;\\
                                               &\label{eq:SkS01}\subss z i^{(k_i,f_1)}-\sum_{f_2=1}^{q_i}\rho_i^{(f_1,f_2)}\subss z i^{(0,f_2)}=0&\forall f_1\in 1:q_i;\\
                                               &\label{eq:SkS02}-\alpha_i+\sum_{f_2=1}^{q_i}\rho_i^{(f_1,f_2)}\leq 0&\forall f_1\in1:q_i;\\
                                               &\label{eq:SkS03}-\rho_i^{(f_1,f_2)}\leq 0&\forall f_1\in 1:q_i,~\forall f_2\in 1:q_i;\\
                                               &\sum_{s=0}^{k_i-1}\psi_i^{(r,s)}+d_{u_{i,r}}\alpha_i< d_{u_{i,r}}&\forall r\in1:l_i\\
                                               &c_{u_{i,r}}^T\subss\bu i^{(s,f)}-\psi_i^{(r,s)}\leq 0&\forall r\in1:l_i,\forall s\in0:k_i-1,\forall f\in1:q_i\\
                                               &\sum_{s=0}^{k_i-1}\gamma_i^{(r,s)}+d_{x_{i,r}}\alpha_i< d_{x_{i,r}}&\forall r\in1:g_i\\
                                               &c_{x_{i,r}}^T\subss\bz i^{(s,f)}-\gamma_i^{(r,s)}\leq 0&\forall r\in1:g_i,\forall s\in0:k_i-1,\forall f\in1:q_i\\
                                               &\subss\bz i^{(s+1,f)}-A_{ii}\subss\bz i^{(s,f)}-B_i\subss\bu i^{(s,f)}=0&\forall s\in0:k_i-1,\forall f\in1:q_i~
                                               \}.
            \end{align}
          \end{subequations}
          The following assumption is needed to compute the RCI set $\Zset_i$.
          \begin{assum}
            \label{ass:Z0} The set $\bZset_i^0$ is such that there is $\omega_i>0$ verifying $\Wset_i\oplus\ball{\omega_i}(0)\subseteq\bZset_i^0$.
            \begin{flushright}$\blacksquare$\end{flushright}
          \end{assum}
          We highlight that, in view of Assumption \ref{ass:shapesets}, the set $\Wset_i$ contains the origin in its nonempty relative interior. Hence, under Assumption \ref{ass:Z0}, the set $\bZset_i^0$ also contains the origin in its nonempty interior.\\
          The relation between elements of $\Theta_i$ and the RCI sets in Problem $\PP$ is established in the next proposition.
          \begin{prop}
            \label{prop:feasytheta}
            Let Assumptions \ref{ass:controllable} and \ref{ass:Z0} hold and sets $\Xset_i$ and $\Uset_i$ be defined as in \eqref{eq:setsXpoly} and \eqref{eq:setsUpoly} respectively. Let $k_i\geq\CC\II(A_{ii},B_i)$. If there exist an admissible $\theta_i\in\Theta_i$, then
            \begin{equation}
              \label{eq:Zrcilambda}
              \Zset_i=(1-\alpha_i)^{-1}\bigoplus_{s=0}^{k_i-1}\bZset_i^s\subset\Xset_i\\
            \end{equation}
            is an RCI set and the corresponding set $\Uset_{z_i}$ is given by
            \begin{equation}
              \label{eq:Uzrcilambda}
              \Uset_{z_i}=(1-\alpha_i)^{-1}\bigoplus_{s=0}^{k_i-1}\bUset_{z_i}^s\subset\Uset_i.\\
            \end{equation}
            \begin{flushright}$\blacksquare$\end{flushright}
          \end{prop}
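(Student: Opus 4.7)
The plan is to adapt the construction of \cite{Rakovic2010} to the subsystem level by verifying directly the two conditions in the definition of an RCI set: that any $\subss x i\in\Zset_i$ admits an input $\subss u i\in\Uset_{z_i}$ with $A_{ii}\subss x i+B_i\subss u i+\subss w i\in\Zset_i$ for all $\subss w i\in\Wset_i$, and that the ambient inclusions $\Zset_i\subseteq\Xset_i$ and $\Uset_{z_i}\subseteq\Uset_i$ hold. Given $\theta_i\in\Theta_i$, I would first decompose any $\subss x i\in\Zset_i$ via \eqref{eq:Zrcilambda}--\eqref{eq:Zis} as $\subss x i=(1-\alpha_i)^{-1}\sum_{s=0}^{k_i-1}\sum_{f=1}^{q_i}\lambda_i^{(s,f)}\subss\bz i^{(s,f)}$ with nonnegative $\lambda_i^{(s,f)}$ that sum to one in $f$ for each $s$, and then select the candidate input $\subss u i=(1-\alpha_i)^{-1}\sum_{s,f}\lambda_i^{(s,f)}\subss\bu i^{(s,f)}\in\Uset_{z_i}$. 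By the vertex-level dynamics constraint in $\Theta_i$, namely $\subss\bz i^{(s+1,f)}=A_{ii}\subss\bz i^{(s,f)}+B_i\subss\bu i^{(s,f)}$, and convexity of $\bZset_i^{s+1}$, this yields $A_{ii}\subss x i+B_i\subss u i\in(1-\alpha_i)^{-1}\bigoplus_{s=1}^{k_i}\bZset_i^s$.

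The second step is to close the loop by absorbing the terminal slice $\bZset_i^{k_i}$ together with the disturbance. Constraints \eqref{eq:SkS01}--\eqref{eq:SkS03} express each vertex of $\bZset_i^{k_i}$ as a conic combination of the vertices of $\bZset_i^0$ with coefficients summing to at most $\alpha_i$, so $\bZset_i^{k_i}\subseteq\alpha_i\bZset_i^0$. Using Assumption \ref{ass:Z0} (which gives $\Wset_i\subseteq\bZset_i^0$) together with convexity of $\bZset_i^0$ and $0\in\bZset_i^0$, one obtains
\[
(1-\alpha_i)^{-1}\bZset_i^{k_i}\oplus\Wset_i\;\subseteq\;(1-\alpha_i)^{-1}\alpha_i\bZset_i^0\oplus\bZset_i^0\;=\;\bigl((1-\alpha_i)^{-1}\alpha_i+1\bigr)\bZset_i^0\;=\;(1-\alpha_i)^{-1}\bZset_i^0.
\]
Combining with the first step yields $A_{ii}\subss x i+B_i\subss u i+\subss w i\in(1-\alpha_i)^{-1}\bigoplus_{s=0}^{k_i-1}\bZset_i^s=\Zset_i$ for all $\subss w i\in\Wset_i$, proving robust control invariance.

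For the ambient inclusions, I would combine the remaining half-space constraints in $\Theta_i$. Summing $c_{x_{i,r}}^T\subss\bz i^{(s,f)}\leq\gamma_i^{(r,s)}$ over $s,f$ with the convex weights $\lambda_i^{(s,f)}$, then applying $\sum_{s=0}^{k_i-1}\gamma_i^{(r,s)}+d_{x_{i,r}}\alpha_i<d_{x_{i,r}}$ and dividing by $(1-\alpha_i)$, delivers $c_{x_{i,r}}^T z\leq d_{x_{i,r}}$ for every $z\in\Zset_i$, hence $\Zset_i\subseteq\Xset_i$; the argument for $\Uset_{z_i}\subseteq\Uset_i$ is identical via $\psi_i^{(r,s)}$ and $d_{u_{i,r}}$.

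The main obstacle is the Minkowski-sum bookkeeping in the second step: the scheme only closes because of the precise algebraic identity $(1-\alpha_i)^{-1}\alpha_i+1=(1-\alpha_i)^{-1}$, which is exactly matched to the scaling in \eqref{eq:Zrcilambda}; the strict inequality $\alpha_i<1$ enforced by $\Theta_i$ is therefore indispensable, as is Assumption \ref{ass:Z0}, without which the disturbance could not be folded into the $s=0$ slice. The controllability hypothesis $k_i\geq\CC\II(A_{ii},B_i)$ does not enter the sufficiency argument sketched above; it guarantees only that $\Theta_i$ is nonempty, which is needed to make the hypothesis of the proposition nonvacuous.
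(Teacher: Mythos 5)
Your argument is correct and is in substance the same as the paper's: the paper proves Proposition \ref{prop:feasytheta} simply by citing Section 3 of \cite{Rakovic2010}, and what you have written out --- the vertex-level decomposition of points of $\Zset_i$, the absorption of $(1-\alpha_i)^{-1}\bZset_i^{k_i}\oplus\Wset_i$ into $(1-\alpha_i)^{-1}\bZset_i^0$ via $\bZset_i^{k_i}\subseteq\alpha_i\bZset_i^0$ and Assumption \ref{ass:Z0}, and the half-space bookkeeping giving $\Zset_i\subset\Xset_i$ and $\Uset_{z_i}\subset\Uset_i$ --- is precisely the argument contained in that reference. Your closing observation that the hypothesis $k_i\geq\CC\II(A_{ii},B_i)$ plays no role in the sufficiency direction, serving only to make nonemptiness of $\Theta_i$ plausible, is also accurate.
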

          \begin{proof}
            In Section 3 of \cite{Rakovic2010}, the authors prove that set $\Zset_i$ defined as in \eqref{eq:Zrcilambda} is an RCI set and that the inclusions in \eqref{eq:stateinclu} and \eqref{eq:inputinclu} hold.
          \end{proof}
          \begin{rmk}
            Under Assumption \ref{ass:shapesets} the feasibility problem \eqref{eq:Thetai} is an LP problem, since the constraints in $\Theta_i$ are affine. In \cite{Rakovic2010} the authors propose to find $\theta\in\Theta_i$ while minimizing different cost functions under constraints $\Theta_i$ in order to achieve different aims. In our context the most important goal is the minimization of $\alpha_i$ that corresponds to the minimization of the size of the set $\Zset_i$. We also note that the inclusion of $0$ in the definition of sets $\bZset_i^s$, $\forall s\in 0:k_i$, ensures that $\bZset_i^s$ contains the origin and hence, under Assumption \ref{ass:Z0}, $\Zset_i$ contains the origin in its nonempty interior.
          \end{rmk}

          We highlight that the set of constraints $\Theta_i$ depends only upon local fixed parameters $\{A_{ii},B_{i},\Xset_i,\Uset_i\}$, fixed parameters $\{A_{ij},\Xset_j\}_{j\in\NN_i}$ of predecessors of $\subss{\hat\Sigma} i$ (because from Assumption \ref{ass:Z0} the set $\bZset_i^0 $ must be defined in such a way that $\bZset_i^0\supseteq\Wset_i=\bigoplus_{j\in\NN_i}A_{ij}\Xset_j$) and local tunable parameters $\theta_i$ (the decision variables \eqref{eq:thetai}). Moreover $\Theta_i$ does not depend on tunable parameters of predecessors. This implies that the computation of sets $\Zset_i$ and $\Uset_{z_i}$ in \eqref{eq:Zrcilambda} and \eqref{eq:Uzrcilambda} does not influence the choice of $\Zset_j$ and $\Uset_{z_j}$, $j\neq i$ and therefore Problem $\PP$ is decomposed in the following independent LP problems for $i\in\MM$.
          \paragraph{Problem $\PP_i$:} Solve the feasibility LP problem $\theta_i\in\Theta_i$.
               \begin{flushright}$\blacksquare$\end{flushright}

          If Problem $\PP_i$ is solved, then $\forall i\in\MM$ we can compute sets $\hXset_i$ and $\Vset_i$ in \eqref{eq:inhXproblem} and \eqref{eq:inVproblem} as
          \begin{equation}
            \label{eq:tightset}
            \hXset_i=\Xset_i\circleddash\Zset_i,\qquad\Vset_i=\Uset_i\circleddash\Uset_{z_i}.
          \end{equation}
          The overall procedure for the decentralized synthesis of local controllers $\subss\CC i$, $i\in\MM$ is summarized in Algorithm \ref{alg:distrisynt}.
          \begin{prop}
            \label{prop:asshold}
            Under Assumption \ref{ass:controllable} and \ref{ass:shapesets} if, for all $i\in\MM$, controllers $\subss\CC i$ are designed according to Algorithm \ref{alg:distrisynt}, then also Assumptions \ref{ass:constr_satisf}, \ref{ass:axiomsMPC} and \ref{ass:Z0} are verified.
          \end{prop}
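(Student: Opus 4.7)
The plan is to show each of Assumptions \ref{ass:constr_satisf}, \ref{ass:axiomsMPC}, and \ref{ass:Z0} separately, in the order that exploits the dependencies introduced by Algorithm \ref{alg:distrisynt}. The algorithm first constructs $\bZset_i^0$, then solves the LP feasibility problem $\PP_i$ to obtain $\theta_i\in\Theta_i$ (and hence $\Zset_i$, $\Uset_{z_i}$), then builds the tightened constraints $\hXset_i$, $\Vset_i$ via \eqref{eq:tightset}, and finally designs the MPC ingredients $\ell_i$, $V_{f_i}$, $\kappa_i^{aux}$, $\hXset_{f_i}$. Verifying the three assumptions mirrors these four stages.

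First I would verify Assumption \ref{ass:Z0}. By Assumption \ref{ass:shapesets} each $\Xset_j$ is a compact polytope, hence $\Wset_i=\bigoplus_{j\in\NN_i}A_{ij}\Xset_j$ is a compact polytope containing the origin. Algorithm \ref{alg:distrisynt} selects $\bZset_i^0$ (a convex hull of vertices including $0$) so as to strictly contain $\Wset_i$; equivalently, a mild outward scaling of $\Wset_i$ serves as $\bZset_i^0$. Compactness of $\Wset_i$ and openness of the interior of $\bZset_i^0$ yield a uniform $\omega_i>0$ with $\Wset_i\oplus\ball{\omega_i}(0)\subseteq\bZset_i^0$, which is exactly Assumption \ref{ass:Z0}. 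Next I would verify Assumption \ref{ass:constr_satisf}. Applying Proposition \ref{prop:feasytheta} with the $\theta_i$ returned by the LP, the set $\Zset_i$ defined in \eqref{eq:Zrcilambda} and the set $\Uset_{z_i}$ defined in \eqref{eq:Uzrcilambda} satisfy $\Zset_i\subset\Xset_i$ and $\Uset_{z_i}\subset\Uset_i$. The key observation is that these inclusions are strict by virtue of the strict inequalities appearing in $\Theta_i$ (the lines $\sum_s\gamma_i^{(r,s)}+d_{x_{i,r}}\alpha_i<d_{x_{i,r}}$ and $\sum_s\psi_i^{(r,s)}+d_{u_{i,r}}\alpha_i<d_{u_{i,r}}$, together with the bounding constraints $c_{x_{i,r}}^T\subss\bz i^{(s,f)}\le\gamma_i^{(r,s)}$ and $c_{u_{i,r}}^T\subss\bu i^{(s,f)}\le\psi_i^{(r,s)}$), which give a strictly positive slack on every facet of $\Xset_i$ and $\Uset_i$. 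Since $\Zset_i$ and $\Uset_{z_i}$ are compact and $\Xset_i$, $\Uset_i$ have finitely many facets, this facet-wise slack translates into the existence of $\rho_{i,1},\rho_{i,2}>0$ satisfying $\Zset_i\oplus\ball{\rho_{i,1}}(0)\subseteq\Xset_i$ and $\Uset_{z_i}\oplus\ball{\rho_{i,2}}(0)\subseteq\Uset_i$, which is Assumption \ref{ass:constr_satisf}.

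Finally I would verify Assumption \ref{ass:axiomsMPC}. Since $\Zset_i$ and $\Uset_{z_i}$ are compact polytopes containing the origin, the Pontryagin differences $\hXset_i=\Xset_i\ominus\Zset_i$ and $\Vset_i=\Uset_i\ominus\Uset_{z_i}$ defined in \eqref{eq:tightset} are compact polytopes and, by Assumption \ref{ass:constr_satisf}, they contain the origin in their interior (they in fact contain the balls $\ball{\rho_{i,1}}(0)$ and $\ball{\rho_{i,2}}(0)$, respectively). Combined with controllability of $(A_{ii},B_i)$ from Assumption \ref{ass:controllable}, this is the standard setting for the construction of MPC terminal ingredients: Algorithm \ref{alg:distrisynt} therefore applies one of the classical recipes recalled in \cite{Rawlings2009} (e.g., a stabilizing LQR gain as $\kappa_i^{aux}$, the associated quadratic Lyapunov function as $V_{f_i}$, and a scaled maximal positively invariant sublevel set inside $\hXset_i$ that keeps $\kappa_i^{aux}(\subss\hx i)\in\Vset_i$ as $\hXset_{f_i}$), together with a positive definite quadratic stage cost lower bounded by a $\KK_\infty$ function. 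This directly yields items (i)--(iv) of Assumption \ref{ass:axiomsMPC}.

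The main obstacle I anticipate is the argument that the strict inequalities in $\Theta_i$ imply the existence of a single ball with positive radius around $\Zset_i$ (resp.\ $\Uset_{z_i}$) that still lies in $\Xset_i$ (resp.\ $\Uset_i$), rather than just a facet-wise margin. The passage from the pointwise halfspace slack to a uniform Euclidean ball requires using compactness of $\Zset_i$, finiteness of the facet family of $\Xset_i$, and a bound on the norms of the normals $c_{x_{i,r}}$; these quantitatively relate each facet slack to a common radius $\rho_{i,1}$. Everything else is either a direct invocation of Proposition \ref{prop:feasytheta} or a textbook MPC construction.
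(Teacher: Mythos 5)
Your proposal is correct. Note that the paper states Proposition \ref{prop:asshold} without any proof, evidently regarding it as an immediate consequence of the construction in Algorithm \ref{alg:distrisynt}: Step 1 enforces Assumption \ref{ass:Z0} by fiat, Step 6 enforces Assumption \ref{ass:axiomsMPC} by fiat, and Assumption \ref{ass:constr_satisf} is attributed to Proposition \ref{prop:feasytheta} (whose proof is itself delegated to \cite{Rakovic2010}). Your write-up supplies the details the paper leaves implicit, and the one place where real work is needed is exactly the one you flag: passing from the strict facet-wise inequalities in $\Theta_i$ to a uniform ball radius $\rho_{i,1}$ (resp.\ $\rho_{i,2}$). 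That step does go through, because the constraints $c_{x_{i,r}}^T\subss\bz i^{(s,f)}\leq\gamma_i^{(r,s)}$ give the uniform bound $c_{x_{i,r}}^Tz\leq(1-\alpha_i)^{-1}\sum_{s}\gamma_i^{(r,s)}<d_{x_{i,r}}$ for every $z\in\Zset_i$, so each facet has a fixed positive slack $\epsilon_r$ and $\rho_{i,1}=\min_r\epsilon_r/\norme{c_{x_{i,r}}}{}$ works; the same argument applies to $\Uset_{z_i}$ (and covers $\bkappa_i(\Zset_i)\subseteq\Uset_{z_i}$, the set actually named in Assumption \ref{ass:constr_satisf}). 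The only caveat worth recording is that all three verifications are conditional on Algorithm \ref{alg:distrisynt} terminating successfully (Step 1 requires $\Wset_i$ to fit strictly inside $\Xset_i$ and Step 2 may be infeasible), which is consistent with the hypothesis of the proposition as stated.
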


          \begin{algorithm}
            \caption{Design of controller $\subss\CC i$ for system $\subss \Sigma i$}
            \label{alg:distrisynt}
            \textbf{Input}: $A_{ii}$, $B_i$, $\Xset_i$, $\Uset_i$, $\NN_i$, $\{A_{ij}\}_{j\in\NN_i}$, $\{\Xset_{j}\}_{j\in\NN_i}$, $k_i\geq\CC\II(A_{ii},B_i)$\\
            \textbf{Output}: controller $\subss \CC i$ given by \eqref{eq:tubecontrol}, \eqref{eq:def_kappa_eta} and \eqref{eq:decMPCProblem}\\
            \begin{enumerate}[1)]
            \item Compute sets $\Wset_i=\bigoplus_{j\in\NN_i}A_{ij}\Xset_j$ and set $\bZset_i^0\supseteq\Wset_i\oplus\ball{\omega_i}(0)$ for a sufficiently small $\omega_i>0$ guaranteeing $\bZset_i^0\subset\Xset_i$.
            \item\label{enu:feasProb} Solve the feasibility LP problem $\theta_i\in\Theta_i$. If it is unfeasible \textbf{stop} (the controller $\subss\CC i$ cannot be designed).
            \item\label{enu:rciAlg} Compute $\Zset_i$ as in \eqref{eq:Zrcilambda} and $\Uset_{z_i}$ as in \eqref{eq:Uzrcilambda}.
            \item\label{enu:hXsetAlg} Compute $\hXset_i=\Xset_i\circleddash\Zset_i$.
            \item\label{enu:VsetAlg} Compute $\Vset_i=\Uset_i\circleddash\Uset_{z_i}$.
            \item\label{enu:VfXfAlg}Choose $\ell_i(\cdot)$, $V_{f_i}(\cdot)$ and $\Xset_{f_i}$ verifying Assumption \ref{ass:axiomsMPC}.
            \end{enumerate}
          \end{algorithm}

          If in Step \ref{enu:feasProb} of Algorithm \ref{alg:distrisynt} the LP problem is infeasible, we can restart the Algorithm with a different $k_i$. However the existence of a parameter $k_i$ such that the LP problem is feasible is not guaranteed \cite{Rakovic2010}. Steps \ref{enu:rciAlg}, \ref{enu:hXsetAlg} and \ref{enu:VsetAlg} of Algorithm \ref{alg:distrisynt}, that provide constraints appearing in the MPC-$i$ problem \eqref{eq:decMPCProblem}, are the most computationally expensive ones because they involve Minkowski sums and differences of polytopic sets. Next, we show how to avoid burdensome computations.

          \subsection{Implicit representation of sets $\Zset_i$ and $\Uset_{z_i}$}
               \label{sec:computZi}
               In this section we show how to rewrite constraint \eqref{eq:inZproblem} by exploiting the implicit representation of set $\Zset_i$ proposed in Section VI.B of \cite{Rakovic2010}. Recalling that $\Zset_i$ is the Minkowski sum of $k_i$ polytopes and that, for all $s\in 0:k_i-1$, polytope $\bZset_i^s$ is described by the convex combination of points ${\subss \bz i}^{(s,f)}$, we have
               \begin{align*}
                 {\subss \tz i}^s\in\bZset_i^s\quad\mbox{ if }\forall f\in1:q_i,\exists\beta_i^{(s,f)}\geq 0\quad\mbox{ such that }\sum_{f=1}^{q_i}\beta_i^{(s,f)}=1 , {\subss \tz i}^s=\sum_{f=1}^{q_i}\beta_i^{(s,f)}{\subss \bz i}^{(s,f)}.
               \end{align*}
               Hence we have that $\subss x i(t)-\subss\hx i(0|t)\in\Zset_i$ if and only if $\forall f\in1:q_i, \forall s\in 0:k_i-1$ there exist $\beta_i^{(s,f)}\in\Rset$ such that\\
               \begin{subequations}
                 \label{eq:implicitZi}
                 \begin{align}
                   &\label{eq:boundBetaimplicitZi}\beta_i^{(s,f)}\ge 0\\
                   &\label{eq:sumBetaimplicitZi}\sum_{f=1}^{q_i}\beta_i^{(s,f)}=1\\
                   &\label{eq:inZimplicitZi}\subss x i(t)-\subss\hx i(0|t)=(1-\alpha_i)^{-1}\sum_{s=0}^{k_i-1}\sum_{f=1}^{q_i}\beta_i^{(s,f)}\subss \bz i^{(s,f)}.
                \end{align}
               \end{subequations}
               In other words we add to the optimization problem \eqref{eq:decMPCProblem} the variables $\beta_i^{(s,f)}$ and replace \eqref{eq:inZproblem} with constraints \eqref{eq:boundBetaimplicitZi}-\eqref{eq:inZimplicitZi}.

               With similar arguments, we can also provide an implicit representation of sets $\Uset_{z_i}$. In particular, we have that $\subss{u_z}{i}\in\Uset_{z_i}$ if and only if $\forall f\in1:q_i, \forall s\in 0:k_i-1$ there exist $\phi_i^{(s,f)}\in\Rset$ such that\\
               \begin{subequations}
                 \label{eq:implicitUzi}
                 \begin{align}
                   &\label{eq:boundBetaimplicitUzi}\phi_i^{(s,f)}\ge 0\\
                   &\label{eq:sumBetaimplicitUzi}\sum_{f=1}^{q_i}\phi_i^{(s,f)}=1\\
                   &\label{eq:inZimplicitUzi}\subss{u_z}{i}=(1-\alpha_i)^{-1}\sum_{s=0}^{k_i-1}\sum_{f=1}^{q_i}\phi_i^{(s,f)}\subss \bu i^{(s,f)}.
                \end{align}
               \end{subequations}

          \subsection{Computation of sets $\hXset_i$ and $\Vset_i$}
               In this section we show how to compute sets $\hXset_i$ and $\Vset_i$ in \eqref{eq:tightset} using the implicit representation of $\Zset_i$ and $\Uset_{z_i}$.

               Using \eqref{eq:Zrcilambda} we can rewrite $\hXset_i=\Xset_i\circleddash (1-\alpha_i)^{-1}\bigoplus_{s=0}^{k_i-1}\bZset_i^s$. Recalling that $\bZset_i^s, \forall s\in 0:k_i-1$ are defined as the convex hull of points $\subss \bz i^{(s,f)}$, $f\in1:q_i$, we can compute the set $\hXset_i$ using Algorithm \ref{alg:hXset}.
               \begin{algorithm}
                 \caption{}
                 \label{alg:hXset}
                 \textbf{Input}: set $\Xset_i$ defined as in \eqref{eq:setsXpoly}, points $\subss \bz i^{(s,f)}, \forall s\in 0:k_i-1, \forall f\in1:q_i$ and scalar $\alpha_i$.\\
                 \textbf{Output}: set $\hXset_i$.\\
                 \begin{enumerate}[(I)]
                 \item $\bar C_i=(c_{x_{i,1}}^T,\ldots,c_{x_{i,g_i}}^T)\in\Rset^{{g_i}\times{n_i}}$ and $\bar D_i = (d_{x_{i,1}},\ldots,d_{x_{i,g_i}})\in\Rset^{g_i}$
                 \item \textbf{For each $s\in 0:k_i-1$}
                   \begin{enumerate}[(i)]
                   \item \textbf{For each $f\in 1:q_i$}\\
                     $\tilde C_i = (\bar C_i,\bar C_i)$ and $\tilde D_i = ( \bar D_i,\bar D_i-(1-\alpha_i)^{-1}\bar C_i\subss \bz i^{(s,f)} )$
                   \item\label{enu:setsXhateffic} Remove redundant constraints from $\tilde C_i\subss\hx i\leq\tilde D_i$ so obtaining the inequalities $\bar C_i\subss\hx i\leq\bar D_i$
                   \end{enumerate}
                 \item Set $\hXset_i=\{\subss\hx i:\bar C_i\subss\hx i\leq\bar D_i\}$ where $\bar C_i\in\Rset^{\hat g_i\times n_i}$ and $\bar D_i\in\Rset^{\hat g_i}$
                 \end{enumerate}
               \end{algorithm}

               In particular, the operation in Step (\ref{enu:setsXhateffic}) amounts to solve suitable LP problems. We can compute $\Vset_i$ using the implicit representation of $\Uset_{z_i}$ in a similar way. Indeed it suffices to use Algorithm \ref{alg:hXset} replacing $\Xset_i$ with $\Uset_i$ defined in \eqref{eq:setsUpoly} and points $\subss \bz i^{(s,f)}$ with points $\subss \bu i^{(s,f)}, \forall s\in 0:k_i-1, \forall f\in1:q_i$.

          \subsection{Computation of control law $\bkappa_i(\cdot)$}
               In Section \ref{sec:modelcontroller}, we introduced local controllers $\subss\CC i$. Note that in \eqref{eq:tubecontrol} the control law $\subss u i$ is composed by the term $\subss v i$, that is computed by solving the local MPC-$i$ problem \eqref{eq:decMPCProblem}, and the term $\bkappa(\subss z i)$ with $\subss z i=\subss x i-\subss\hx i$. Since $\bkappa_i(\cdot)$ depends on $\subss\hx i$, we need to solve the MPC-$i$ problem \eqref{eq:decMPCProblem} and then compute $\bkappa_i(\subss z i)$. The control law $\bkappa(\subss z i)\in\Uset_{z_i}$ guarantees that if $\subss x i(t)-\subss\hx i(0|t)\in\Zset_i$ (i.e. MPC-$i$ problem \eqref{eq:decMPCProblem} is feasible) then there is a $\lambda_i>0$ such that $\subss x i(t+1)-\subss\hx i(1|t)\in\lambda_i\Zset_i$. To compute the control law $\bkappa_i(\subss z i)$ one can use the methods proposed in \cite{Blanchini1991} or in \cite{Rakovic2010}. In \cite{Blanchini1991} the authors propose to solve an LP problem in order to maximize the contractivity parameter $\lambda_i$, i.e. for a given $\subss z i$ we compute $\bkappa_i(\subss z i)\in\Uset_{z_i}$ by minimizing the scalar $\lambda_i$ such that $A_{ii}\subss z i+B_i\bkappa_i(\subss z i)\in\lambda\Zset_i\ominus\Wset_i$. In \cite{Rakovic2010} the authors propose an implicit representation of controller $\bkappa_i(\subss z i)$ based on the implicit representation \eqref{eq:implicitZi} of set $\Zset_i$. In our framework we want to take advantages of both approaches and compute the control law $\bkappa_i(\cdot)$ solving the following LP problem
               \begin{subequations}
                 \label{eq:kappainv}
                 \begin{align}
                   \bar\Pset_i(\subss z i):&\min_{\substack{  \mu , \beta_i^{(s,f)} }}\mu\\
                   \label{eq:kappainvBeta}&\beta_i^{(s,f)}\ge 0&\forall f\in1:q_i, \forall s\in 0:k_i-1\\
                   \label{eq:kappainvSum1}&\sum_{f=1}^{q_i}\beta_i^{(s,f)}=\mu&\forall s\in 0:k_i-1\\
                   \label{eq:kappainvmu}&\mu\ge 0\\
                   \label{eq:pzInvproblem}&\subss z i=(1-\alpha_i)^{-1}\sum_{s=0}^{k_i-1}\sum_{f=1}^{q_i}\beta_i^{(s,f)}\subss \bz i^{(s,f)}
                 \end{align}
               \end{subequations}
               and setting
               \begin{equation}
                 \label{eq:kappainvcomputed}
                 \bkappa_i(\subss z i)=(1-\alpha_i)^{-1}\sum_{s=0}^{k_i-1}\bar\kappa^s_i(\subss z i),~\bar\kappa^s_i(\subss z i)=\sum_{f=1}^{q_i}\bar\beta_i^{(s,f)}\subss \bu i^{(s,f)}
               \end{equation}
               where $\bar\beta_i^{(s,f)}$ are the optimizers to \eqref{eq:kappainv}. Solving LP problem \eqref{eq:kappainv}, we compute a control law $\bkappa_i(\cdot)$ that tries to keep the state $\subss x i$ and the nominal state $\subss \hx i$ as close as possible. According to \cite{Gal1995} we can assume without loss of generality that $\bkappa_i(\cdot)$ is a continuous piecewise affine map. Note that since $\bZset_i^0\subseteq\Zset_i$, if $\subss z i=0$ no control action is needed in order to guarantee robust invariance. Indeed, in this case an optimal solution to \eqref{eq:kappainv} is $\mu=0$ and $\beta_i^{(s,f)}=0$, $\forall f\in1:q_i, \forall s\in 0:k_i-1$ and therefore $\bkappa_i(\subss z i)=0$.

          We are now in a position to analyze the stability properties of the closed-loop system. Defining the collective variables $\mbf \hx = (\subss \hx 1,\ldots,\subss \hx M)\in\Rset^n$, $\mbf v = (\subss v 1,\ldots,\subss v M)\in\Rset^m$ and the function $\mbf\bkappa(x) = (\bkappa_1(\subss x 1),\ldots,\bkappa_M(\subss x M)):\Rset^n\rightarrow\Rset^m$, from \eqref{eq:subsystem} and \eqref{eq:tubecontrol} one obtains the collective model
          \begin{equation}
            \label{eq:controlled_model}
            \mbf \px = \mbf{Ax}+\mbf{Bv}+\mbf{B\bkappa(x-\hx)}.
          \end{equation}
          \begin{defi}
            The feasibility region for the MPC-$i$ problem is
            $$
            \Xset_i^N=\{\subss s i\in\Xset_i:~\mbox{~\eqref{eq:decMPCProblem} is feasible for}~\subss x i(t)=\subss s i\}
            $$
            and the collective feasibility region is $\Xset^N=\prod_{i\in\MM}\Xset_i^N$.
          \end{defi}
          \begin{thm}
            \label{thm:mainclosedloop}
            Let Assumptions \ref{ass:controllable} and \ref{ass:shapesets} hold and assume controllers $\subss\CC i$ in \eqref{eq:tubecontrol} are computed using Algorithm \ref{alg:distrisynt}. Then the origin of the closed-loop system \eqref{eq:controlled_model} is asymptotically stable, $\Xset^N$ is a region of attraction and $\mbf x (0)\in\Xset^N$ guarantees constraints \eqref{eq:constraints} are fulfilled at all time instants.
            \begin{flushright}$\blacksquare$\end{flushright}
          \end{thm}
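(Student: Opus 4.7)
The plan is to adapt the standard tube-MPC stability argument---based on recursive feasibility and a Lyapunov-style decrease of the MPC value function---to the present decentralized setting, in which the RCI sets $\Zset_i$ and the tightened sets $\hXset_i,\Vset_i$ come from Algorithm~\ref{alg:distrisynt}. By Proposition~\ref{prop:asshold}, Assumptions~\ref{ass:constr_satisf}, \ref{ass:axiomsMPC} and \ref{ass:Z0} are automatically in force, which is precisely what is needed to run the argument subsystem by subsystem.

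First I would establish recursive feasibility of each local problem MPC-$i$. Given the optimal sequence $(\subss\hx i(0|t),\subss v i(0:N_i-1|t))$ at time $t$, I would build the usual shifted candidate $\subss\hx i^c(0|t+1)=A_{ii}\subss\hx i(0|t)+B_i\subss v i(0|t)$, $\subss v i^c(k|t+1)=\subss v i(k+1|t)$ for $k\in 0:N_i-2$, and $\subss v i^c(N_i-1|t+1)=\kappa_i^{aux}(\subss\hx i^c(N_i-1|t+1))$. The only nonstandard constraint to verify is the tube inclusion~\eqref{eq:inZproblem}: starting from $\subss x i(t)-\subss\hx i(0|t)\in\Zset_i$ and $\subss w i(t)\in\Wset_i$, a direct computation gives
\[
\subss x i(t+1)-\subss\hx i^c(0|t+1)=A_{ii}\bigl(\subss x i(t)-\subss\hx i(0|t)\bigr)+B_i\bkappa_i\bigl(\subss x i(t)-\subss\hx i(0|t)\bigr)+\subss w i(t),
\]
so the RCI property of $\Zset_i$ together with the construction of $\bkappa_i$ in~\eqref{eq:kappainv}--\eqref{eq:kappainvcomputed} places this quantity in $\Zset_i$. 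The remaining constraints~\eqref{eq:inhXproblem}--\eqref{eq:inTerminalSet} at $t+1$ are inherited by shifting for $k\in 0:N_i-2$, while the terminal piece uses items (ii)--(iii) of Assumption~\ref{ass:axiomsMPC}. Constraint satisfaction~\eqref{eq:constraints} then follows from~\eqref{eq:tubecontrol} combined with the inclusions~\eqref{eq:stateinclu}--\eqref{eq:inputinclu}.

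Next, the standard MPC cost-decrease computation using the above candidate and Assumption~\ref{ass:axiomsMPC}(iv) delivers, for every $i\in\MM$,
\[
\Pset_i^N(\subss x i(t+1))\leq\Pset_i^N(\subss x i(t))-\ell_i(\subss\hx i(0|t),\subss v i(0|t)).
\]
Summing over $i$ produces the aggregate candidate Lyapunov function $\mathcal V(\mbf x)=\sum_{i\in\MM}\Pset_i^N(\subss x i)$, which is lower bounded by a class-$\KK_\infty$ function via Assumption~\ref{ass:axiomsMPC}(i) and upper bounded near the origin by continuity of each $\Pset_i^N$ at the origin (the all-zero trajectory is admissible because $0$ lies in the interior of $\Zset_i$, $\hXset_i$, $\Vset_i$ and $\ell_i(0,0)=V_{f_i}(0)=0$). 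Standard Lyapunov arguments applied to~\eqref{eq:controlled_model} then yield asymptotic stability of the origin with $\Xset^N$ as a region of attraction.

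The step I expect to be the most delicate is turning the decrease of $\mathcal V$ into genuine convergence $\mbf x(t)\to 0$, since the tube inclusion alone forces $\mbf x(t)$ only into $\prod_{i\in\MM}\Zset_i$. To close this gap I would rely on two facts: first, as observed right after~\eqref{eq:kappainvcomputed}, $\bkappa_i(0)=0$, so the ancillary controller is idle at the origin; second, the coupling term $\subss w i(t)=\sum_{j\in\NN_i}A_{ij}\subss x j(t)$ shrinks as the collective state converges, reducing the effective disturbance. Combining these with a graph-based induction on the interconnection---or, in the cyclic case, a small-gain/comparison argument closely related to the condition highlighted after Assumption~\ref{ass:constr_satisf}---should complete the conclusion $\mbf x(t)\to 0$.
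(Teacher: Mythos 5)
Your recursive-feasibility argument and the cost-decrease inequality match the paper's proof essentially line for line, and both are correct. The genuine gap is in the last two steps, which together constitute the actual content of the theorem. First, the aggregate function $\mathcal V(\mbf x)=\sum_{i\in\MM}\Pset_i^N(\subss x i)$ is \emph{not} positive definite in $\mbf x$: for any $\subss x i\in\Zset_i$ the choice $\subss\hx i(0)=0$, $\subss v i\equiv 0$ is feasible with zero cost, so $\mathcal V$ vanishes identically on $\Zset=\prod_i\Zset_i$, and the lower bound from Assumption~\ref{ass:axiomsMPC}(i) controls $(\subss\hx i,\subss v i)$, not $\subss x i$. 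Consequently "standard Lyapunov arguments" applied to $\mathcal V$ yield only $\subss\hx i(0|t)\to 0$ and $\subss v i(0|t)\to 0$, hence convergence of $\mbf x(t)$ to the set $\Zset$ — exactly the obstruction you flag yourself — and they yield neither attractivity of the origin nor Lyapunov stability.

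Second, the route you sketch to close this gap ("the coupling term shrinks as the collective state converges, plus a graph-based induction or small-gain argument") is circular as stated: the coupling $\subss w i=\sum_{j\in\NN_i}A_{ij}\subss x j$ shrinks only if the states converge, which is what you are trying to prove; moreover the interconnection graph may contain cycles, so no induction over predecessors is available. The paper's proof replaces this with a concrete two-stage contraction argument. (i) Using the global Lipschitz continuity of the piecewise-affine $\bkappa$ and the strict margins of Assumptions~\ref{ass:constr_satisf} and~\ref{ass:Z0}, it shows $\bigoplus_{j\in\NN_i}A_{ij}\Zset_j\subseteq\xi_i\Wset_i$ with $\xi_i<1$, which forces $\mbf x(t)$ to enter $\Zset$ in \emph{finite} time. (ii) Once inside $\Zset$ the MPC layer switches off ($\subss\hx i(0|t)=0$, $\subss v i(0|t)=0$) and the pure ancillary dynamics are shown, via the homogeneity of $\bkappa_i$ (Lemma~\ref{lem:homo}) and the shrinking property of Lemma~\ref{lem:shrink}, to map $\lambda_i(t)\Zset_i$ into $a_i\lambda_i(t)\Zset_i$ with $a_i=1-(1-\xi_i)\omega_i\psi_i<1$, giving geometric convergence of the scaling factors and hence $\mbf x(t)\to 0$; Lyapunov stability is then a separate homogeneity argument showing $\rho\Zset$ is invariant for every $\rho\in(0,1]$. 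None of this machinery is present in your proposal, so the proof is incomplete precisely where the theorem departs from the standard tube-MPC template.
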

          The proof of Theorem \ref{thm:mainclosedloop} is provided in Appendix \ref{sec:proofthmmain}.

          \begin{rmk}
            \label{rmk:oldvsnew2}
            In Remark \ref{rmk:oldvsnew1}, we highlighted that the main difference with the PnP scheme proposed in \cite{Riverso2012a,Riverso2013c} is the computation of sets $\Zset_i$ and functions $\bar\kappa_i(\cdot)$, $\forall i\in\MM$. We note that in \cite{Riverso2012a,Riverso2013c}, the computation of $K_i$ and $\Zset_i$ requires the solution to a nonlinear optimization problem. In this section, we have shown that for the PnP scheme proposed in Section \ref{sec:modelcontroller}, using results from \cite{Rakovic2010}, we can compute set $\Zset_i$ and function $\bar\kappa_i(\cdot)$ solving LP problems only.
          \end{rmk}

     \section{Plug-and-play operations}
          \label{sec:plugplay}
          In this Section we discuss the synthesis of new controllers and the redesign of existing ones when subsystems are added to or removed from system \eqref{eq:subsystem}. The goal will be to preserve stability of the origin and constraint satisfaction for the new closed-loop system. Note that plugging in and unplugging of subsystems are here considered as off-line operations. As a starting point, we consider a plant composed by subsystems $\subss \Sigma i$, $i\in\MM$ equipped with local controllers $\subss \CC i$, $i\in\MM$ produced by Algorithm \ref{alg:distrisynt}.

          \subsection{Plugging in operation}
               \label{sec:plugin}
               We start considering the plugging in of subsystem $\subss{\Sigma}{M+1}$, characterized by parameters $A_{M+1~M+1}$, $B_{M+1}$, $\Xset_{M+1}$, $\Uset_{M+1}$, $\NN_{M+1}$ and $\{A_{ij}\}_{j\in\NN_{M+1}}$. In particular, $\NN_{M+1}$ identifies the subsystems that will be physically coupled to $\subss{\Sigma}{M+1}$. For building the controller $\subss{\CC}{M+1}$ we execute Algorithm \ref{alg:distrisynt} that needs information only from systems $\subss{\Sigma}{j}$, $j\in\NN_{M+1}$. If there is no solution to the feasibility LP problem in Step \ref{enu:feasProb} of Algorithm \ref{alg:distrisynt}, we declare that $\subss{\Sigma}{M+1}$ cannot be plugged in. Let $\SSS_{M+1}=\{j:M+1\in\NN_j\}$ be the set of successors to system $M+1$. Since each system $\subss{\Sigma}{j}$, $j\in\SSS_{M+1}$ has the new predecessor $\subss{\Sigma}{M+1}$, we have that the set $\bZset_j^0$ already computed verifies $\bZset_j^0\subset\Wset_j$ and hence not all assumptions of Proposition \ref{prop:feasytheta} could be satisfied. Therefore, when $\NN_j$ gets larger, for each $j\in\SSS_{M+1}$ the controllers $\subss\CC j$ must be redesigned according to Algorithm \ref{alg:distrisynt}. Again, if Algorithm \ref{alg:distrisynt} stops in Step \ref{enu:feasProb} for some $j\in\SSS_{M+1}$, we declare that $\subss{\Sigma}{M+1}$ cannot be plugged in.\\
               Note that redesign of controllers that are farther in the network is not needed, i.e. even without changing controllers $\subss\CC i$, $i\notin\{M+1\}\bigcup\SSS_{M+1}$ convergence to zero of the origin and constraint satisfaction are guaranteed for the new closed-loop system.

          \subsection{Unplugging operation}
               \label{sec:unplug}
               We consider the unplugging of  system $\subss{\Sigma}{k}$, $k\in\MM$ and define the set $\SSS_{k}=\{k:i\in\NN_k\}$ of successors to system $k$. Since for each $i\in\SSS_k$ the set $\NN_i$ gets smaller, we have that the set $\bZset_i^0$ already computed verifies $\bZset_i^0\supset\Wset_i$ and hence assumptions of Proposition \ref{prop:feasytheta} are still satisfied. This means that for each $i\in\SSS_k$ the LP problem in Step \ref{enu:feasProb} of Algorithm \ref{alg:distrisynt} is still feasible and hence the controller $\subss\CC i$ does not have to be redesigned. Moreover since for each system $\subss\Sigma j$, $j\notin \{k\}\bigcup\SSS_k$ the set $\NN_j$ does not change, the redesign of controller $\subss\CC j$ is not required.\\
               In conclusion, removal of system $\subss\Sigma k$ does not require the redesign of any controller, in order to guarantee convergence to zero of the origin and constraints satisfaction for the new closed-loop system. However, we highlight that since systems $\subss\Sigma i$, $i\in\SSS_k$ have one predecessor less, the redesign of controllers $\subss\CC i$ through Algorithm \ref{alg:distrisynt} could improve the performance.

     \section{Distributed on-line implementation of controllers $\subss\CC i$}
          \label{sec:distrcontr}
          In Section \ref{sec:modelcontroller}, we introduced decentralized local controllers $\subss\CC i$ that, using the nominal model \eqref{eq:nominalsubsystem} and local information only, can control system $i$ without the knowledge of the state of the predecessors. However, our framework allows one to take advantage of information from predecessors systems without redesigning controllers $\subss\CC i$.

          If at time $t$ the controller of system $\subss\Sigma i$ can receive the value of states $\subss x j(t), \forall j\in\NN_i$ from predecessors, we can define the new controller ${\subss\CC i}^{dis}$ as
          \begin{equation}
            \label{eq:controllerDis}
            {\subss\CC i}^{dis}:\quad\subss u i=\subss v i+\bkappa_i^{dis}(\subss x i-\subss\hx i,\{\subss x j\}_{j\in\NN_i}).
          \end{equation}
          In \eqref{eq:controllerDis} the term $\subss v i$ is the same appearing in the controller $\subss\CC i$ and is obtained by solving the MPC-$i$ problem \eqref{eq:decMPCProblem}. Similarly to the control law $\bkappa_i(\cdot)$ in \eqref{eq:tubecontrol}, the second term in \eqref{eq:controllerDis} must guarantee robust invariance of the set $\Zset_i$ and it can be computed by solving \eqref{eq:kappainv} with constraint \eqref{eq:pzInvproblem} replaced by
          \begin{equation}
            \label{eq:pzInvproblemDis}
            A_{ii}\subss z i+B_i \subss{u_z}{i}+\sum_{j\in\NN_i}A_{ij}\subss x j=(1-\alpha_i)^{-1}\sum_{s=0}^{k_i-1}\sum_{f=1}^{q_i}\beta_i^{(s,f)}\subss \bz i^{(s,f)}.
          \end{equation}
          where $\subss{u_z}i\in\Rset^{m_i}$ are additional optimization variables.
          The desired control term is then given by $\bkappa_i^{dis}(\subss x i-\subss\hx i,\{\subss x j\}_{j\in\NN_i})=\subss{u_z}{i}$. Note that constraint \eqref{eq:pzInvproblemDis} allows us to compute $\bkappa_i^{dis}(\cdot)$ taking into account the real state of predecessors at time $t$. Using \eqref{eq:inputinclu} and \eqref{eq:controllerDis}, we can still guarantee input constraints \eqref{eq:setsUpoly} adding the following constraints in the LP problem $\bar\Pset_i$ in \eqref{eq:kappainv}
          $$
          c_{u_{i,r}}^T\subss{u_z}{i}\leq d_{u_{i,r}}-c_{u_{i,r}}^T\subss v i,~\forall r\in 1:l_i.
          $$
          We highlight that the LP problem \eqref{eq:kappainv} is feasible if and only if the new LP problem is feasible. In fact, using the definition of robust control invariance, the LP problem \eqref{eq:kappainv} is feasible if there exist $\subss{u_z}{i}\in\Uset_{z_i}$ such that $\subss\pz i=A_{ii}\subss z i+B_i \subss{u_z}{i}+\subss w i\in\Zset_i,  \forall \subss w i\in\Wset_i$. The fact that $\sum_{j\in\NN_i}A_{ij}\subss x j\in\Wset_i$ guarantees the feasibility of both LP problems.

          We show advantages of including information from predecessors through an example. Consider two dynamically coupled systems equipped with controllers synthesized using Algorithm \ref{alg:distrisynt} and assume $\subss x 1(0)=0$ and $\subss x 2(0)\neq 0\notin\Zset_2$. Without exchange of information, the solution to the MPC-$i$ problem  \eqref{eq:decMPCProblem} is $\subss v 1(0)=0$ and $\subss\hx 1(0)=0$ for the first system and $\subss v 2(0)\neq0$ and $\subss\hx 2(0)\neq0$ for the second system, hence the solution of the LP problem \eqref{eq:kappainv} will be $\bkappa_1(\subss z 1)=0$ and $\bkappa_2(\subss z 2)\neq0$. This means we apply a control action to system $2$ only. However, $\subss x 1(1)\neq 0$ because of coupling. Differently, solving the LP problem \eqref{eq:kappainv} with constraint \eqref{eq:pzInvproblem} replaced by \eqref{eq:pzInvproblemDis}, we obtain $\bkappa_1(\subss z 1)\neq0$ and $\bkappa_2(\subss z 2)\neq0$. Therefore, we apply a control action on both systems because system $1$ tries to counteract in advance coupling with system $2$.

     \section{Examples}
          \label{sec:examples}
          In this section, we illustrate three examples.
          \begin{enumerate}
          \item A low-order system composed by the interconnection of two mass-spring-damper systems, allowing decentralized and distributed implementations of local controllers to be compared.
          \item The Power Network Systems (PNS) previously introduced in \cite{Riverso2012f} and \cite{Riverso2012a,Riverso2013c} where we compare the performance of the proposed controllers with centralized MPC and with the plug-and-play controllers proposed in \cite{Riverso2012a,Riverso2013c}. Furthermore, we discuss plug-and-play operations corresponding to the addition and removal of power generation areas;
          \item A large-scale system composed by an array of 1024 mass-spring-damper systems.
          \end{enumerate}
          All examples and simulations are implemented using the \emph{PnPMPC-toolbox} for Matlab \cite{Riverso2012g} dedicated to the modeling of large-scale systems and the design of plug-and-play controllers.

          \subsection{Comparison of Decentralized and Distributed controllers}
               In this section, we compare the performance of controllers $\subss\CC i$  and ${\subss\CC i}^{dis}$. We consider the example illustrated in Figure \ref{fig:exampleCart1D}.
               \begin{figure}[htb]
                 \centering
                 \def\svgwidth{250pt}
                 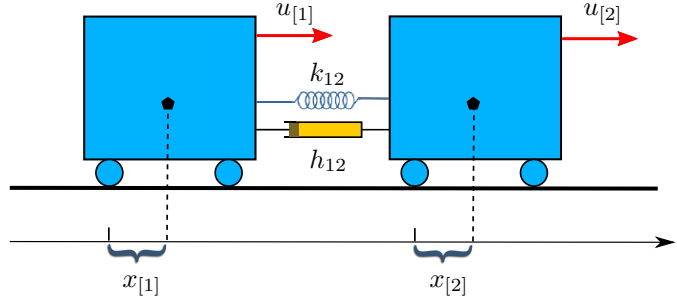
                 \caption{Example system.}
                 \label{fig:exampleCart1D}
               \end{figure}
               The system is composed by two trucks coupled by a spring and a damper. Parameters values are: $m_1=2$, $m_2=4$, $k_{12}=0.4$ and $h_{12}=0.3$. Each truck $i\in\MM=\{1,2\}$, is a subsystem with state variables $\subss x i=(\subss x {i,1},\subss x {i,2})$ and input $\subss u i$, where $\subss x {i,1}$ is the displacement of truck $i$ with respect to a given equilibrium position, $\subss x {i,2}$ is the velocity of the truck $i$ and $100\subss u i$ is a force applied to truck $i$. Subsystems are equipped with the state constraints $\abs{\subss x {i,1}}\leq 4.5$, $\abs{\subss x {i,2}}\leq 2$, $i\in\MM$ and with the input constraints $\abs{\subss u {i}}\leq 1.5$, $i\in\{1,2\}$. We obtain models $\subss\Sigma i$ by discretizing the second order continuous-time system representing each truck with $0.1~$sec sampling time, using exact discretization and treating $\subss u i$ and $\subss x j,~j\in\NN_i$ as exogenous signals. We synthesized controllers $\subss\CC i$, $i\in\MM$ using Algorithm \ref{alg:distrisynt}. At time $t$ we compute the control action $\subss u i$ and apply it to the continuous-time system, keeping the value constant between time $t$ and $t+1$. We assume $\subss x 1(0)=(0,0)$ and $\subss x 2(0)=(3,0)$.

               In Figure \ref{fig:resultsCart1D} we show the results obtained using controllers $\subss\CC i$ and $\subss\CC i^{dis}$ in the time interval from $0$ to $0.3~$sec. We note that for the controller $\subss\CC 1$, since $\subss x 1(0)=0$ one has $\subss u 1(0)=0$. Indeed the solutions to optimization problems \eqref{eq:decMPCProblem} and \eqref{eq:kappainv} are $\subss v 1(0)=0$ and $\kappa_1(\subss z 1(0))=0$. For the second truck the control action is $\subss u 2(0)=-0.76$ because $\subss x 2(0)\neq 0$. However, one has $\subss x 1(1)\neq 0$ because of coupling. Using the distributed controller ${\subss\CC 1}^{dis}$, since $\subss x 1(0)=0$ and $\subss x 2(0)\neq 0$, one has $\subss u 1(0)=-0.012$. Indeed the solution to the LP problem \eqref{eq:kappainv}, with \eqref{eq:pzInvproblem} replaced by \eqref{eq:pzInvproblemDis}, gives $\bkappa_1(\subss z 1)\neq0$. Figure \ref{fig:stateCart1D} shows the position of each truck: we note that using controllers ${\subss\CC i}^{dis}$, the position of the first truck does not change significantly because the controller tries to counteract in advance coupling with system $2$. This shows the benefits of a distributed implementation of local controllers. The state and input trajectories of the second truck are almost identical when using controllers $\subss\CC i$ and ${\subss\CC i}^{dis}$ because the state of the first truck is approximately zero.
               \begin{figure}[!ht]
                 \centering
                 \subfigure[\label{fig:stateCart1D}Displacement of truck $i$ controlled by $\subss\CC i$ (dashed line) and ${\subss\CC i}^{dis}$ (bold line).]{\includegraphics[scale=0.38]{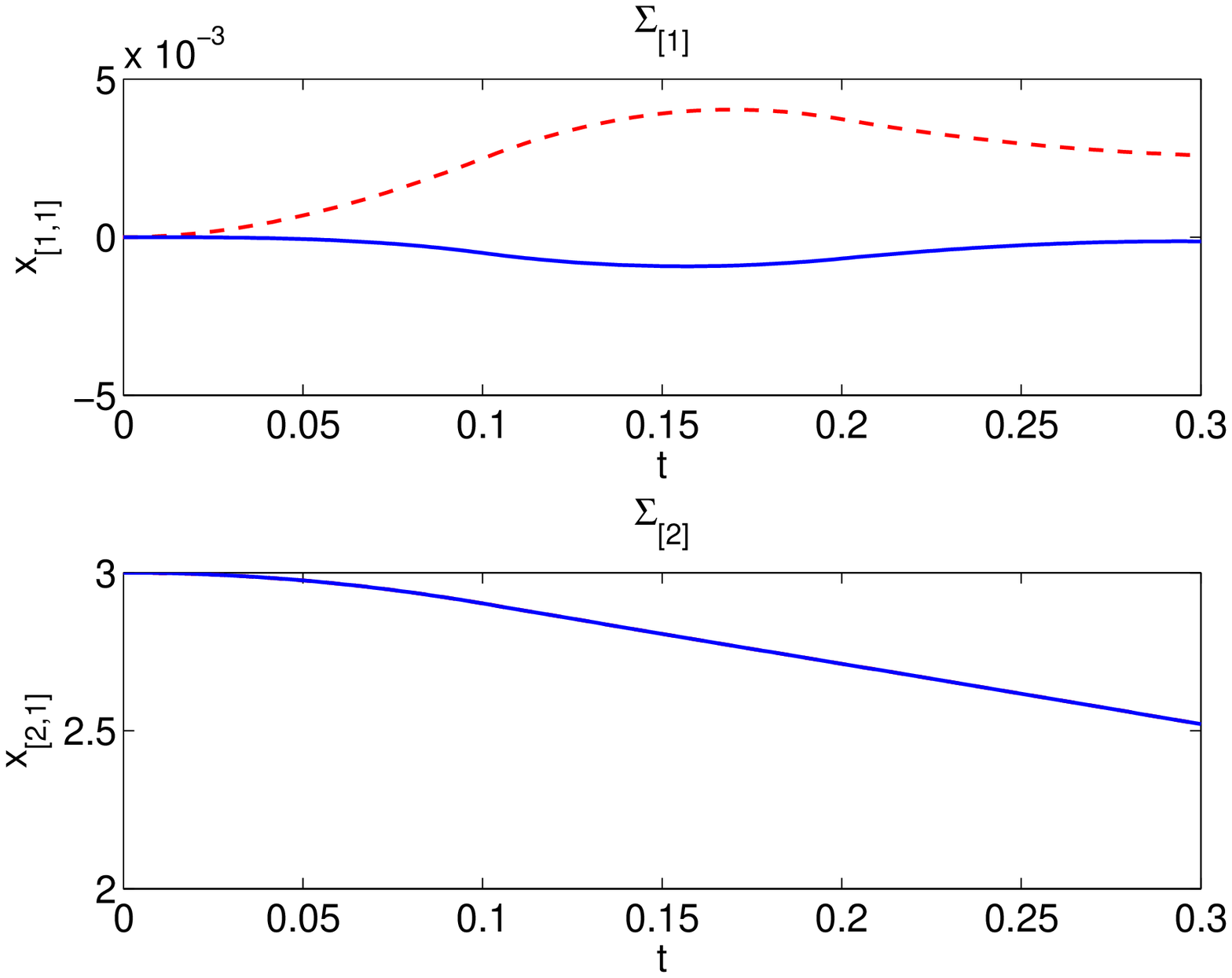}}~~~~~~
                 \subfigure[\label{fig:inputCart1D}Control law computed by using $\subss\CC i$ (dashed line) and ${\subss\CC i}^{dis}$ (bold line).]{\includegraphics[scale=0.38]{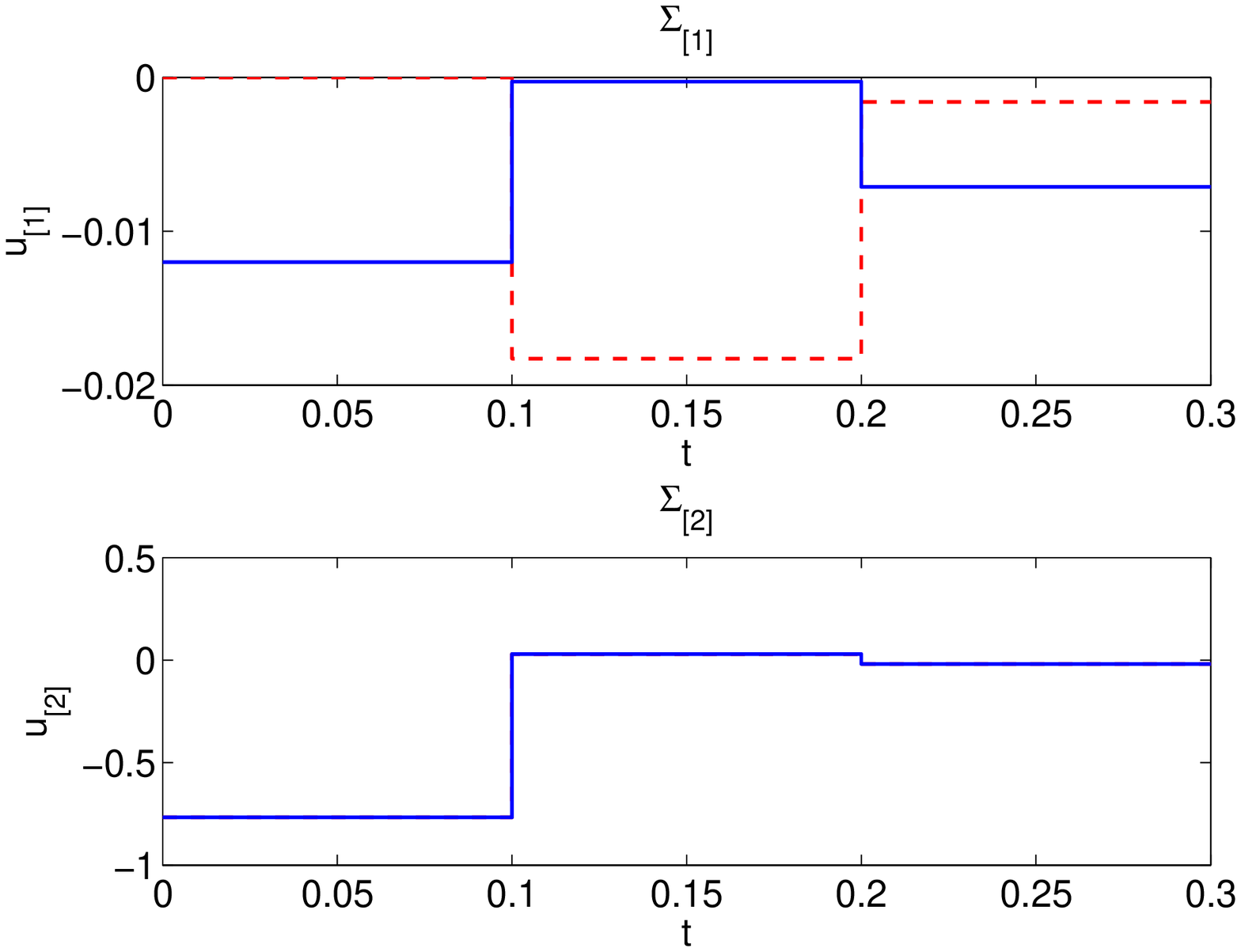}}
                 \caption{Simulation in the first three time-instants with initial state $\mbf x = (0,0,3,0)$.}
                 \label{fig:resultsCart1D}
               \end{figure}

               \subsubsection{Remarks}
                    The example proposed in Figure \ref{fig:exampleCart1D} is of particular interest for decentralized control. We will explain the reason in the continuous-time domain for clarity and simplicity. However, similar considerations apply also to discrete-time versions of the system.

                    The continuous-time system in Figure \ref{fig:exampleCart1D} is described by the following dynamics
                                        \begin{equation}
                      \label{eq:contmass}
                      {\matr{\dot{\subss x {1,1}} \\ \dot{\subss x {1,2}} \\ \dot{\subss x {2,1}} \\ \dot{\subss x {2,2}}}}=\matr{0 & 1 & 0 & 0 \\ -\frac{k_{12}}{m_1} & -\frac{h_{12}}{m_1} & \frac{k_{12}}{m_1} & \frac{h_{12}}{m_1} \\ 0 & 0 & 0 & 1 \\ \frac{k_{12}}{m_2} & \frac{h_{12}}{m_2} & -\frac{k_{12}}{m_2} & -\frac{h_{12}}{m_2} }\matr{\subss x {1,1} \\ \subss x {1,2} \\ \subss x {2,1} \\ \subss x {2,2}}+\matr{ 0 & 0 \\ \frac{100}{m_1}  & 0 \\ 0 & 0 \\ 0 &  \frac{100}{m_2} }\matr{\subss u {1,1} \\ \subss u {2,1}}
                    \end{equation}
                    where $k_{12}>0$, $h_{12}>0$, $m_1>0$ and $m_2>0$. System \eqref{eq:contmass} can be rewritten as
                    \begin{equation}
                      \label{eq:contmass2}
                      {\matr{\dot {\subss x {1}} \\ \dot{\subss x {2}} }}=\matr{ A_{11} & A_{12} \\ A_{21} & A_{22} }\matr{\subss x {1} \\ \subss x {2} }+\matr{ B_1 & 0 \\ 0 & B_2 }\matr{\subss u 1 \\ \subss u 2}
                    \end{equation}
                    that is the coupling of two subsystems $\subss\Sigma 1$ and $\subss\Sigma 2$ with states $\subss x 1 = (\subss x {1,1},\subss x {1,2})$ and $\subss x 1 = (\subss x {2,1},\subss x {2,2})$, respectively. Note that $A_{11}\in\Rset^{2\times 2}$ and $A_{22}\in\Rset^{2\times 2}$ are asymptotically stable matrices, while the matrix 
                    $$\mathbf{A}=\matr{ A_{11} & A_{12} \\ A_{21} & A_{22} }$$
                    is marginally stable, because no mass is bound to
                    a fixed coordinate frame through
                    springs. Importantly, the latter remarks apply
                    independently of (positive) values of parameters $h_{12}$ and $k_{12}$, i.e. for all possible coupling magnitudes.
                    
                    In order to design a decentralized auxiliary
                    control law, with the objective of stabilizing the
                    local systems without accounting for the coupling
                    terms, one could set $K_1=K_2=0$. This would
                    result in asymptotically stable ``local''
                    subsystems (i.e., $A_{11}+B_{1}K_1=A_{11}$ and
                    $A_{22}+B_2K_2=A_{22}$ are both Hurwitz stable
                    matrices), but a marginally stable global system (i.e., $\mathbf{A}+\mathbf{B}\mathbf{K}=\mathbf{A}$ is marginally stable).\\
                    An alternative choice could be to design $K_1$ and $K_2$ using linear quadratic regulators. Next, we show that, if this is done without accounting for couplings, asymptotically stability might not hold. Indeed, this is the case if weights $Q_i = \diag (0,q_i)$ and $R_i=r_i$, $q_i>0$, $r_i>0$, are used. The corresponding controllers are in the form $K_1 = \matr{ 0 & \sigma_{1} }$ and $K_2 = \matr{ 0 & \sigma_{2} }$ which, for all possible values of $q_i>0$, $r_i>0$, are not able to stabilize the global system $\mathbf{A}+\mathbf{B}\mathbf{K}$.
                    
                    These considerations show that, accounting for
                    couplings in the control design phase is
                    fundamental even in simple case studies like the
                    one analyzed in this Section. Again, it is worth
                    remarking that these considerations hold both for
                    small and large coupling terms, since they depend upon structural/physical considerations.

          \subsection{Power Network System}
               In this section, we apply the proposed DeMPC scheme to a power network system composed by several power generation areas coupled through tie-lines. We aim at designing the AGC layer with the goals of
               \begin{itemize}
               \item keeping the frequency approximately at the nominal value, at least asymptotically;
               \item controlling the tie-line powers in order to reduce power exchanges between areas. In the asymptotic regime each area should compensate for local load steps and produce the required power.
               \end{itemize}
               In particular we will show advantages brought about by PnP-DeMPC when generation areas are connected/disconnected to/from an existing network.

               The dynamics of an area equipped with primary control and linearized around equilibrium value for all variables can be described by the following continuous-time LTI model \cite{Saadat2002}
               \begin{equation}
                 \label{eq:ltipower}
                 \subss{\Sigma}{i}^C:\quad\subss{\dot{x}}{i} = A_{ii}\subss x i + B_{i}\subss u i + L_{i}\Delta P_{L_i} + \sum_{j\in\NN_i}A_{ij}\subss x j
               \end{equation}
               where $\subss x i=(\Delta\theta_i,~\Delta\omega_i,~\Delta P_{m_i},~\Delta P_{v_i})$ is the state, $\subss u i = \Delta P_{ref_i}$ is the control input of each area, $\Delta P_{L}$ is the local power load and $\NN_i$ is the sets of predecessor areas, i.e. areas directly connected to $\subss\Sigma i^C$ through tie-lines. The matrices of system \eqref{eq:ltipower}, the parameters values and the state and input constraints are provided in \cite{Riverso2012f}. For each scenario, discrete-time models $\subss\Sigma i$ with $T_s = 1~$sec sampling time are obtained from $\subss\Sigma i^C$ using discretization system-by-system, i.e. exact discretization for each area treating $\subss u i$, $\Delta P_{L_i}$ and $\subss x j,~j\in\NN_i$ as exogenous inputs. We note that the proposed discretization preserves the input-decoupled structure of $\subss\Sigma i^C$.\\
               In the following we first design the AGC layer for a power network composed by four areas (Scenario 1 in \cite{Riverso2012f}) and then we show how in presence of connection/disconnection of an area (Scenario 2 and 3 in \cite{Riverso2012f}, respectively) the AGC can be redesigned via plugging in and unplugging operations.

               \subsubsection{Control experiments}
                    Different control schemes will be compared with the centralized MPC schemes controller described in \cite{Riverso2012f}. For a given Scenario, for each area, at time $t$ control variables $\subss u i$ are obtained through \eqref{eq:tubecontrol} where $\subss v i=\kappa_i(\subss x i)$ and $\subss\hx i=\eta_i(\subss x i)$ are computed at each time $t$ solving the optimization problem \eqref{eq:decMPCProblem} and replacing \eqref{eq:costMPCProblem} with the following cost function depending upon $\subss x i^O=(0,~0,~\Delta P_{L_i},~\Delta P_{L_i})$ and $\subss u i^O=\Delta P_{L_i}$
                    \begin{equation}
                      \label{eq:decMPCproblemPower}
                      \Pset_i^N(\subss x i(t)) = \min_{\substack{\subss\hx i(t)\\\subss v i(t:t+N_i-1)}}~\sum_{k=t}^{t+N_i-1}(\norme{\subss\hx i(k)-\subss x i^O}{Q_i}+\norme{\subss v i(k)-\subss u i^O}{R_i})+\norme{\subss\hx i(t+N_i)-\subss x i^O}{S_i}.
                    \end{equation}
                    As described in \cite{Riverso2012f}, this modification is necessary for achieving compensation of local power load. In the cost function \eqref{eq:costMPCProblem} we set $N_i=15$, $Q_i = \diag(500,0.01,0.01,10)$ and $R_i = 10$. Weights $Q_i$ and $R_i$ have been chosen in order to penalize the angular displacement $\Delta\theta_i$ and to penalize slow reactions to power load steps. Since the power transfer between areas $i$ and $j$ is given by
                  \begin{equation}
                    \label{eq:powerexchanged}
                    \Delta P_{{tie}_{ij}}(k) = P_{ij}(\Delta\theta_i(k)-\Delta\theta_j(k))
                  \end{equation}
                  the first requirement also penalizes huge power transfers. For centralized MPC we consider the overall system composed by the four areas, use the cost function $\sum_{i\in\MM}\Pset_i^N(\subss x i(t))$ and impose the collective constraints \eqref{eq:constraints}. In order to guarantee the stability of the closed loop system, we design the matrix $S_i$ and the terminal constraint set $\hXset_{f,i}$ in two different ways.
                  \begin{itemize}
                  \item\emph{$S_i$ is full ($MPCdiag$)}: we compute the symmetric positive-definite matrix $S_i$ and the static state-feedback auxiliary control law $K_i^{aux}\subss x i$, by maximizing the volume of the ellipsoid described by $S_i$ inside the state constraints while fulfilling the matrix inequality $(A_{ii}+B_iK_i^{aux})'S_i(A_{ii}+B_iK_i^{aux})-S_i\leq-Q_i-K_i^{aux~'}R_iK_i^{aux}$. In order to compare centralized, decentralized and distributed controllers, for the centralized MPC problem we compute the decentralized symmetric positive-definite matrix $\mbf S$ and the decentralized static state-feedback auxiliary control law $\mbf{K^{aux}x}$, $\mbf{K^{aux}}=\diag(K_{1},\ldots,K_{M})$ by maximizing the volume of the ellipsoid described by $\mbf S$ inside the state constraints while fulfilling the matrix inequality $(\mbf{A+BK^{aux}})'\mbf S(\mbf{A+BK^{aux}})-\mbf S\leq-\mbf Q-\mbf{K^{aux~'}RK^{aux}}$.
                  \item\emph{Zero terminal constraint ($MPCzero$)}: we set $S_i=0$ and $\Xset_{f_i} = \subss{x^O}{i}$.
                  \end{itemize}

                  We propose the following performance criteria for evaluating different control schemes.
                  \begin{itemize}
                  \item\emph{$\eta$-index}
                    \begin{equation}
                      \label{eq:performanceeta}
                      \eta = \frac{1}{T_{sim}}\sum_{k=0}^{T_{sim}-1}\sum_{i=1}^M (\norme{\subss x i(k)-\subss x i^O(k)}{Q_i}+\norme{\subss u i(k)-\subss u i^O(k)}{R_i})
                    \end{equation}
                    where $T_{sim}$ is the time of the simulation. From \eqref{eq:performanceeta}, $\eta$ is a weighted average of the error between the real state and the equilibrium state and between the real input and the equilibrium input.
                  \item \emph{$\Phi$-index}
                    \begin{equation}
                      \label{eq:performancePhi}
                      \Phi = \frac{1}{T_{sim}}\sum_{k=0}^{T_{sim}-1}\sum_{i=1}^M\sum_{j\in\NN_i}\abs{\Delta P_{{tie}_{ij}}(k)}T_s
                    \end{equation}
                    where $T_{sim}$ is the time of the simulation and $\Delta P_{{tie}_{ij}}$ is the power transfer between areas $i$ and $j$ defined in \eqref{eq:powerexchanged}. This index gives the average power transferred between areas. In particular, if the $\eta$-index is equal for two regulators, the best controller is the one that has the lower value of $\Phi$.
                  \end{itemize}

               \subsubsection{Scenario 1}
                    \label{sec:scenario1}
                    We consider four areas interconnected as in Scenario 1 in Figure~\ref{fig:scenario1}.
                    \begin{figure}[!ht]
                      \centering
                      \includegraphics[scale=0.75]{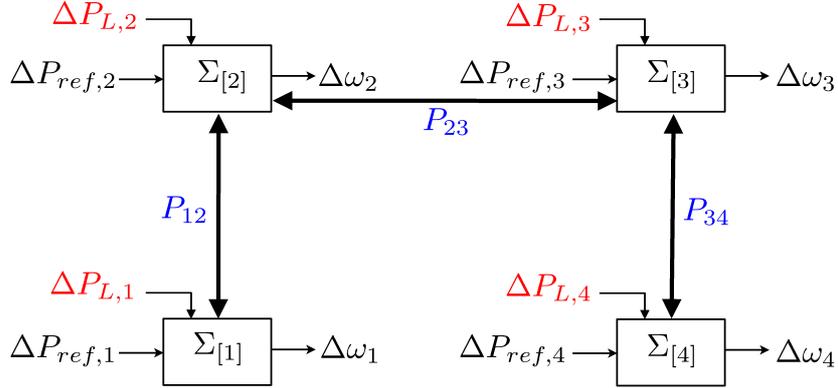}
                      \caption{Power network system of Scenario 1}
                      \label{fig:scenario1}
                    \end{figure}

                    For each system $\subss\Sigma i$ we synthesize the controller $\subss\CC i$, $i\in\MM$ using Algorithm \ref{alg:distrisynt}. Note that in Step \ref{enu:feasProb} of Algorithm \ref{alg:distrisynt} only the feasibility of LP problem is required. Therefore the synthesis of controllers $\subss\CC i$ is computationally more efficient than the nonlinear procedure proposed in \cite{Riverso2012a,Riverso2013c}.
                    \begin{figure}[!ht]
                      \centering
                      \subfigure[\label{fig:simulationscen1freq}Frequency deviation in each area controlled by the proposed DeMPC (bold line) and centralized MPC (dashed line).]{\includegraphics[scale=0.5]{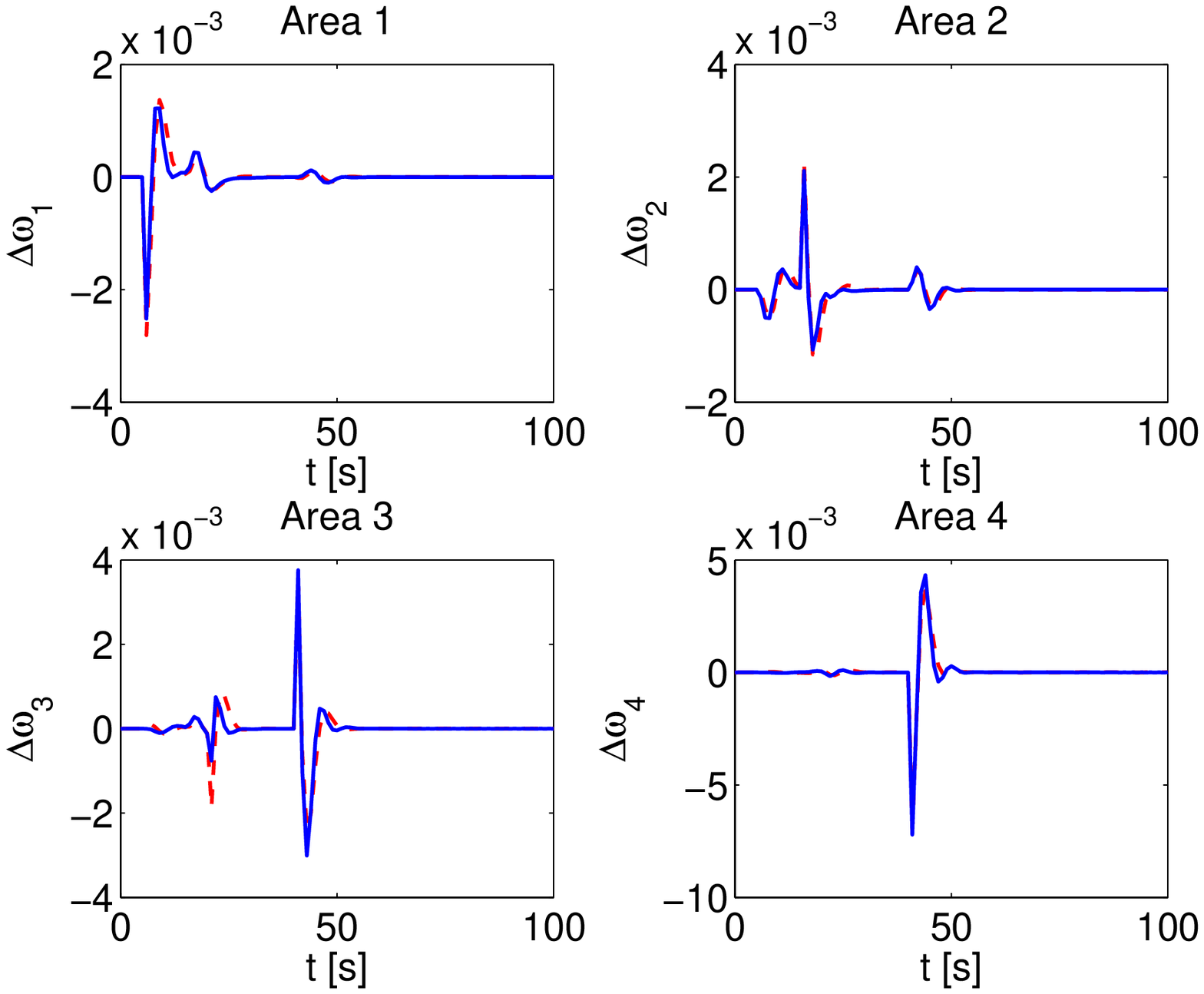}}\\
                      \subfigure[\label{fig:simulationscen1ref}Load reference set-point in each area controlled by the proposed DeMPC (bold line) and centralized MPC (dashed line).]{\includegraphics[scale=0.5]{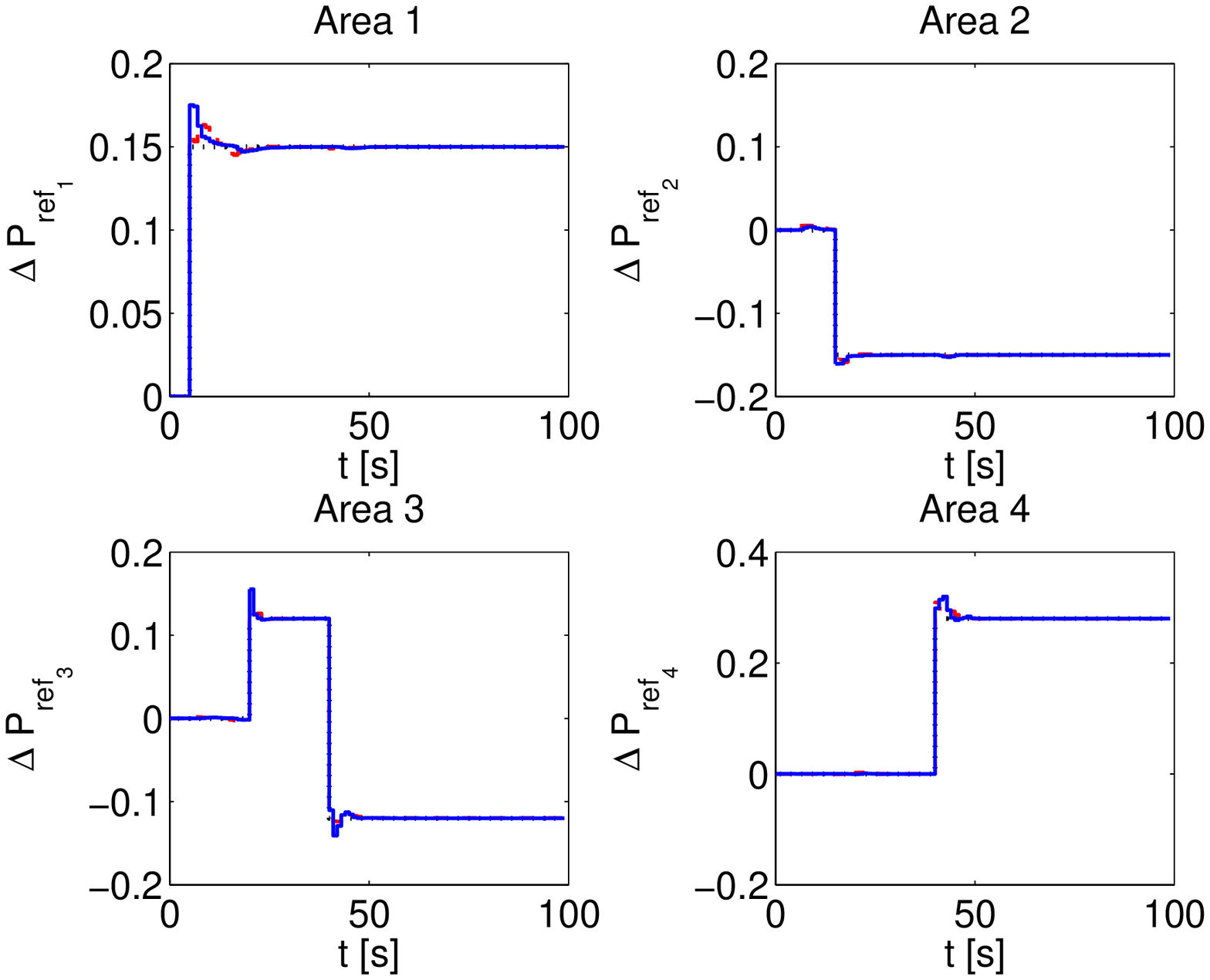}}
                      \caption{Simulation Scenario 1: \ref{fig:simulationscen1freq} Frequency deviation and \ref{fig:simulationscen1ref} Load reference in each area.}
                      \label{fig:simulationscen1}
                    \end{figure}
                    \begin{figure}[!ht]
                      \centering
                      \includegraphics[scale=0.5]{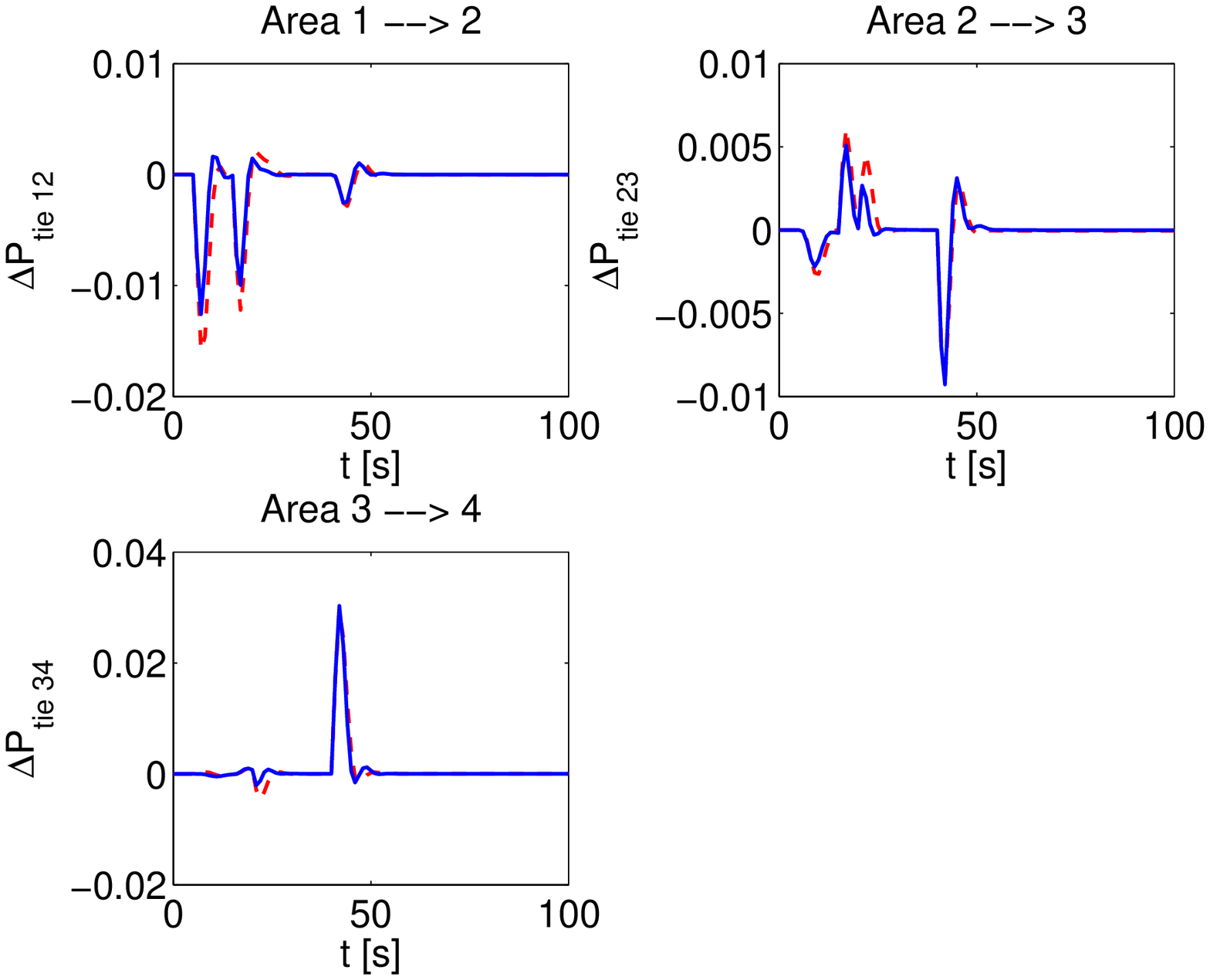}
                      \caption{Simulation Scenario 1: tie-line power between each area controlled by the proposed DeMPC (bold line) and centralized MPC (dashed line).}
                      \label{fig:simulationscen1tiepower}
                    \end{figure}

                    In Figure~\ref{fig:simulationscen1} we compare the performance of the proposed DeMPC scheme with the performance of the centralized MPC controller described in \cite{Riverso2012f}. In the control experiment, step power loads $\Delta P_{L_i}$ are specified in Table 3 of \cite{Riverso2012f} and they account for the step-like changes of the control variables in Figure \ref{fig:simulationscen1}. We highlight that the performance of decentralized and centralized MPC are totally comparable, in terms of frequency deviation (Figure~\ref{fig:simulationscen1freq}), control variables (Figure~\ref{fig:simulationscen1ref}) and power transfers $\Delta P_{{tie}_{ij}}$ (Figure \ref{fig:simulationscen1tiepower}). The values of performance parameter $\eta$ and $\Phi$ using different controllers are reported in Table \ref{tab:simulationsEta} and Table \ref{tab:simulationsPhi}, respectively. In terms of parameter $\eta$, plug-and-play controllers with decentralized and distributed online implementation are equivalent to centralized controller, however the performance of PnP-DeMPC are such that each area can absorb the local loads by producing more power locally ($\Delta P_{ref,i}$) instead of receiving power from predecessor areas: for this reason, PnP-DiMPC has performance more similar to centralized MPC. Compared with plug-and-play controllers proposed in \cite{Riverso2012a,Riverso2013c} (called P\&PMPC), PnP-DeMPC has better performances: we reduce the value of parameter $\eta$ (PnP-DeMPC $0.0256$, P\&PMPC $0.0263$) and especially the value of parameter $\Phi$ (PnP-DeMPC $0.0022$, P\&PMPC $0.0039$). This means that the proposed PnP-DeMPC scheme reduces tie-line powers.

               \subsubsection{Scenario 2}
                    \label{sec:scenario2}
                    We consider the power network proposed in Scenario 1 and we add a fifth area connected as in Figure \ref{fig:scenario2}. Therefore, the set of successors to system $5$ is $\SSS_5=\{2,4\}$.
                    \begin{figure}[!ht]
                      \centering
                      \includegraphics[scale=0.75]{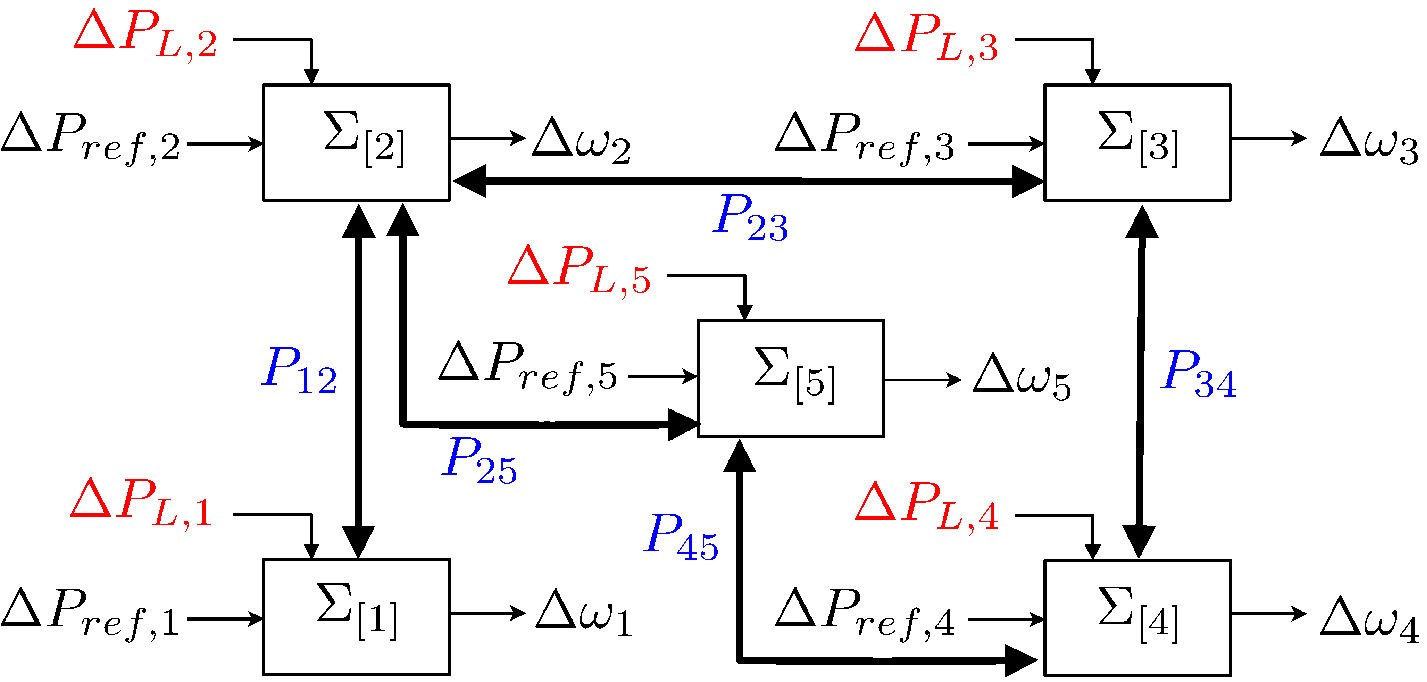}
                      \caption{Power network system of Scenario 2}
                      \label{fig:scenario2}
                    \end{figure}

                    Since systems $\subss\Sigma j$, $j\in\SSS_5$ depend on a parameter related to the added system $\subss\Sigma 5$, a retuning of their controllers is needed. We highlight that our framework, as also the plug-and-play method proposed in \cite{Riverso2012a,Riverso2013c}, allows for subsystems with parameters that depend upon their predecessors. In this case, as discussed in \cite{Riverso2012a,Riverso2013c}, even in the unplugging operation the successors systems have to retune their controllers to guarantee overall asymptotic stability and constraints satisfaction. The controllers $\subss\CC j$, $j\in\{5\}\bigcup\SSS_5$ are tuned using Algorithm \ref{alg:distrisynt}. We highlight that no retuning of controllers $\subss\CC 1$ and $\subss\CC 3$ is needed since $\subss\Sigma 1$ and $\subss\Sigma 3$ are not predecessors of system $\subss\Sigma 5$.
                    \begin{figure}[!ht]
                      \centering
                      \subfigure[\label{fig:simulationscen2freq}Frequency deviation in each area controlled by the proposed DeMPC (bold line) and centralized MPC (dashed line).]{
                        \includegraphics[scale=0.5]{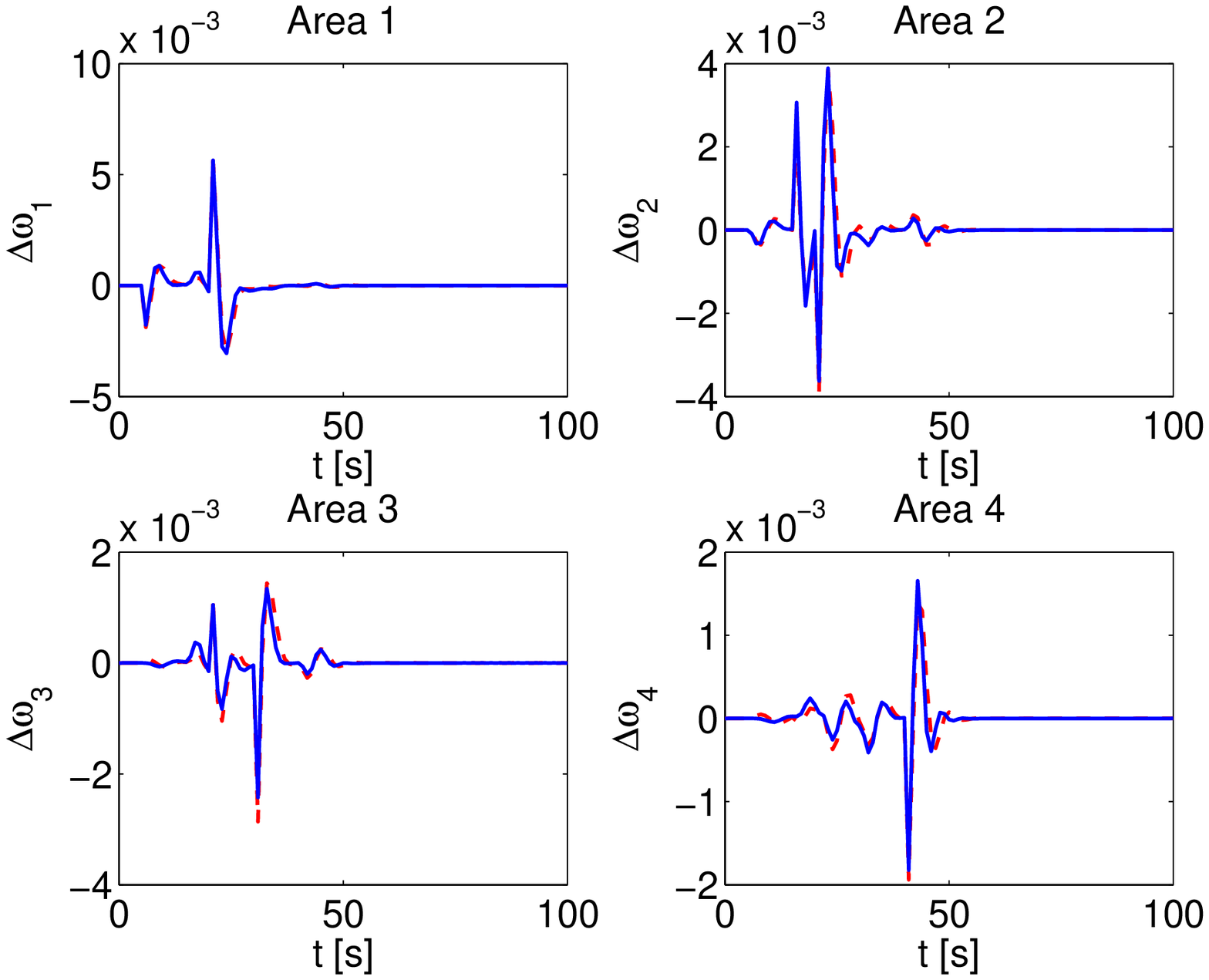}
                        \includegraphics[scale=0.475,width=4.5cm]{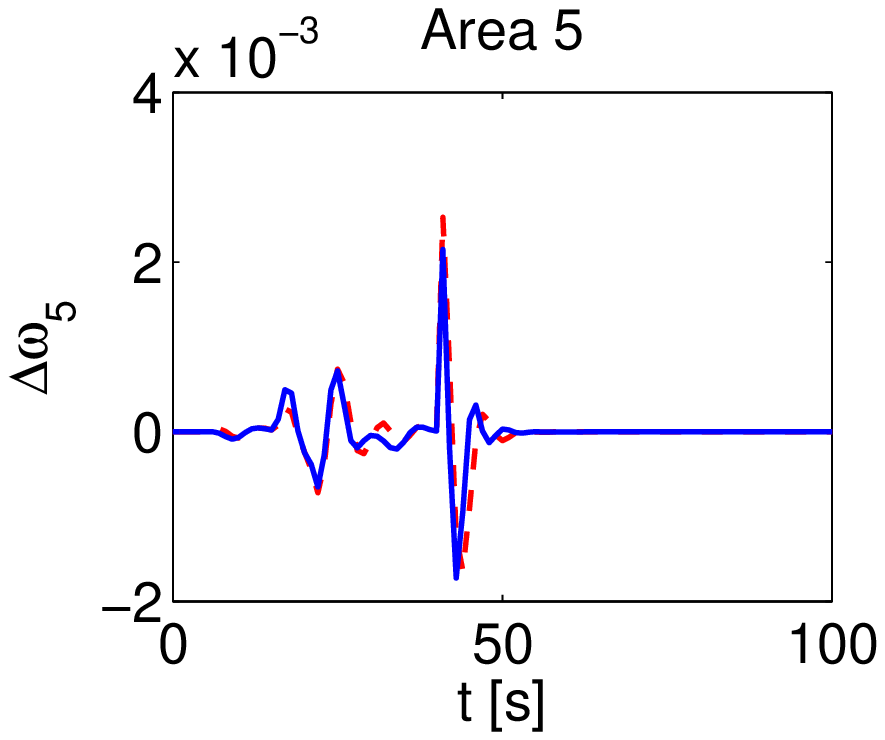}
                      }\\
                      \subfigure[\label{fig:simulationscen2ref}Load reference set-point in each area controlled by the proposed DeMPC (bold line) and centralized MPC (dashed line).]{
                        \includegraphics[scale=0.5]{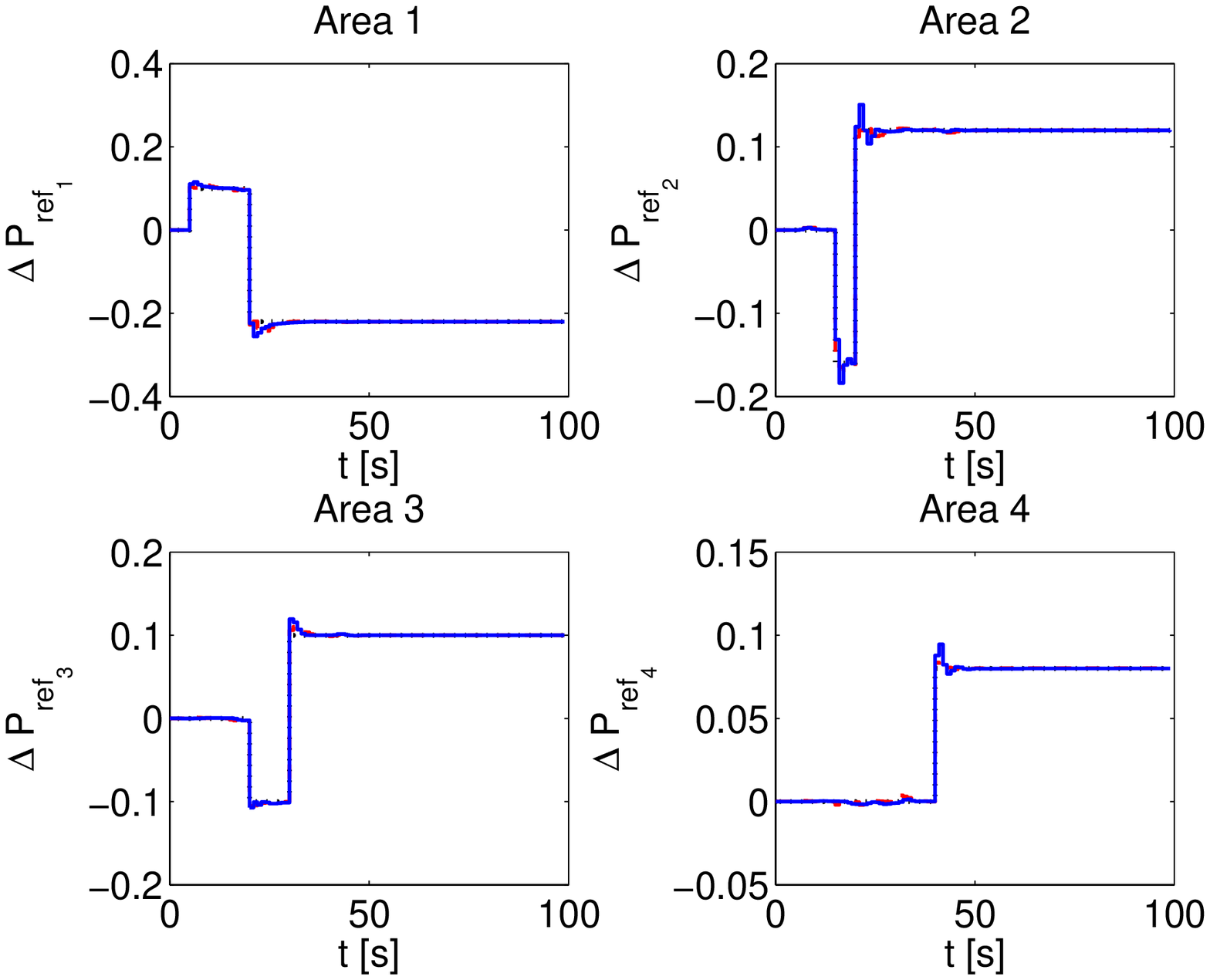}
                        \includegraphics[width=4.5cm,height=3.8cm]{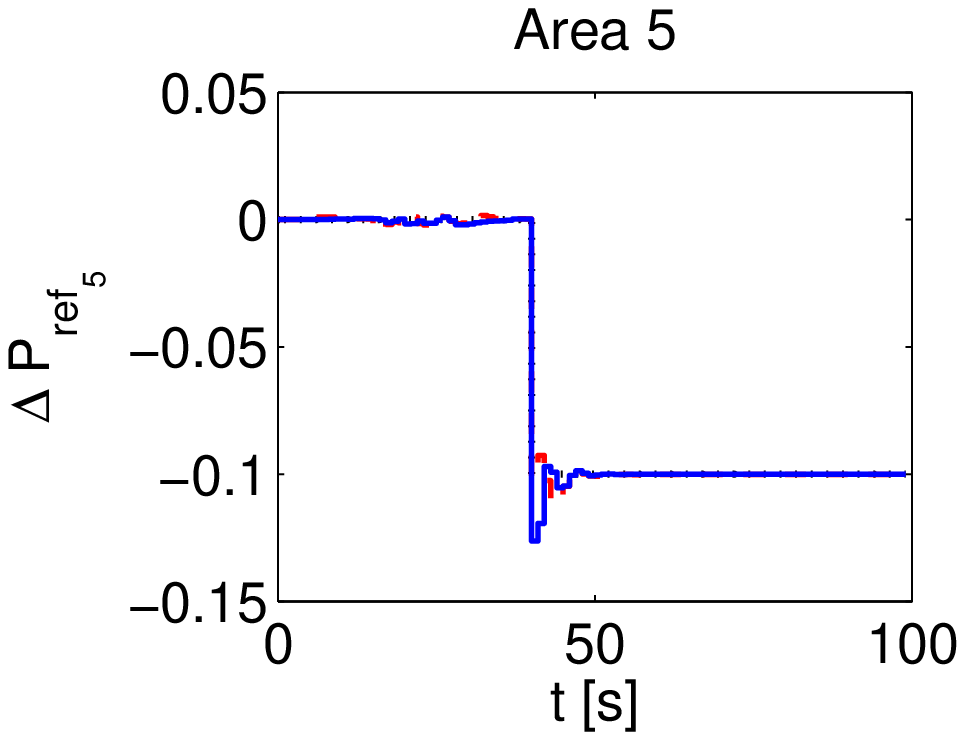}
                      }
                      \caption{Simulation Scenario 2: \ref{fig:simulationscen2freq} Frequency deviation and \ref{fig:simulationscen2ref} Load reference in each area.}
                      \label{fig:simulationscen2}
                    \end{figure}
                    \begin{figure}[!ht]
                      \centering
                      \includegraphics[scale=0.5]{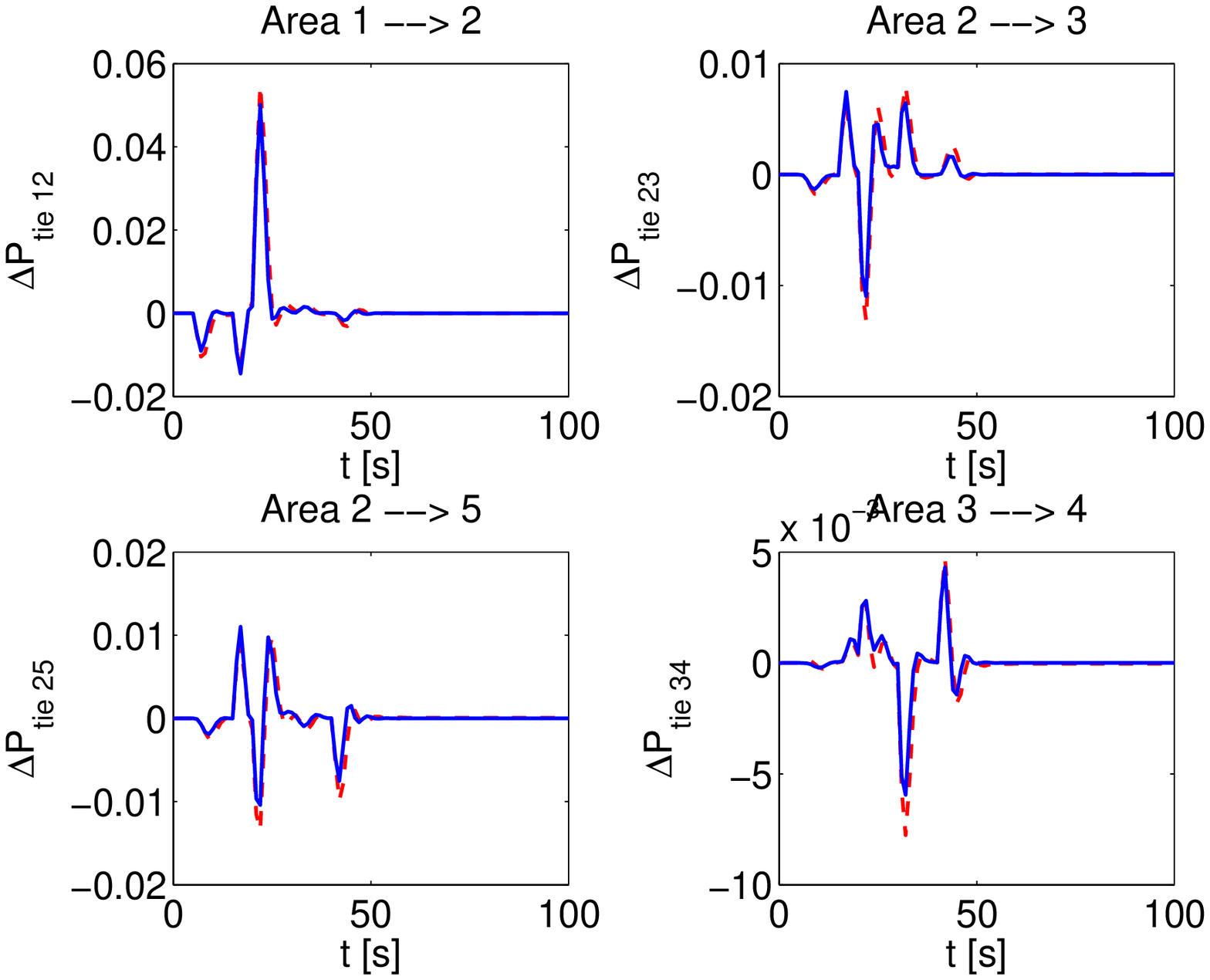}
                      \includegraphics[width=4.25cm,height=3.7cm]{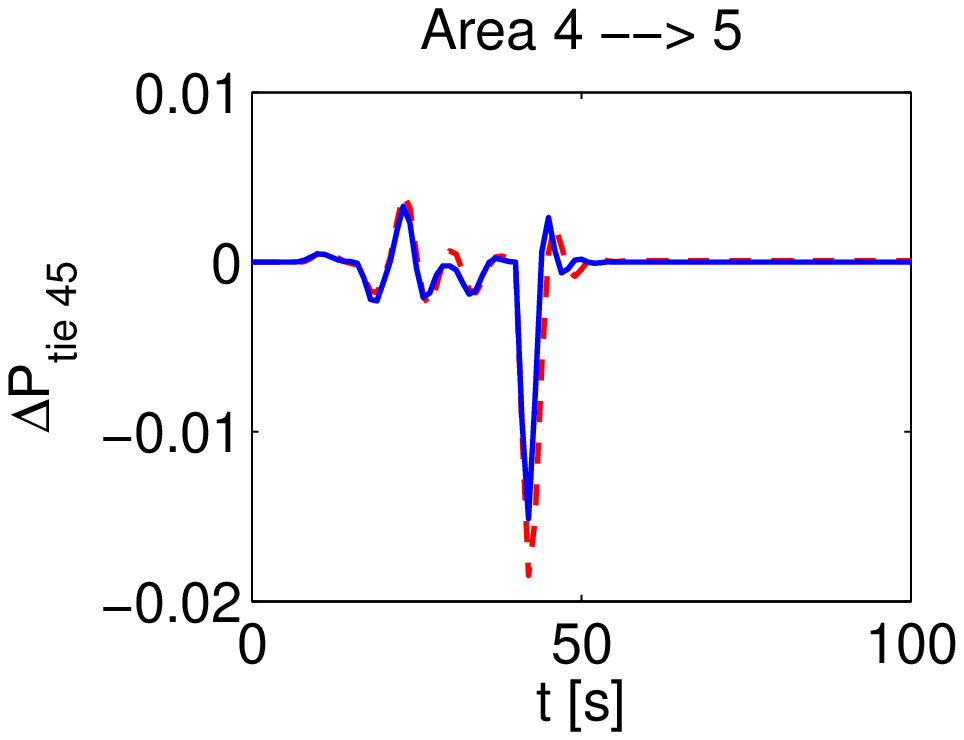}
                      \caption{Simulation Scenario 2: tie-line power between each area controlled by the proposed DeMPC (bold line) and centralized MPC (dashed line).}
                      \label{fig:simulationscen2tiepower}
                    \end{figure}

                    In Figure~\ref{fig:simulationscen2} we compare the performance of proposed DeMPC with the performance of centralized MPC. In the control experiment, step power loads $\Delta P_{L_i}$ specified in Table 4 of \cite{Riverso2012f} have been used and they account for the step-like changes of the control variables in Figure \ref{fig:simulationscen2}. We highlight that the performance of decentralized and centralized MPC are totally comparable, in terms of frequency deviation (Figure~\ref{fig:simulationscen2freq}), control variables (Figure~\ref{fig:simulationscen2ref}) and power transfers $\Delta P_{{tie}_{ij}}$ (Figure \ref{fig:simulationscen2tiepower}). The values of performance parameter $\eta$ and $\Phi$ using different controllers are reported in Table \ref{tab:simulationsEta} and Table \ref{tab:simulationsPhi}, respectively. In terms of parameter $\eta$, plug-and-play controllers with decentralized and distributed online implementation are equivalent to centralized controller, however, as in Scenario 1, the performance of PnP-DeMPC are such that each area can absorb the local loads by producing more power locally ($\Delta P_{ref,i}$) instead of receiving power from predecessor areas: for this reason, PnP-DiMPC has performance more similar to centralized MPC. Compared with P\&PMPC controllers proposed in \cite{Riverso2012a,Riverso2013c}, PnP-DeMPC has better performances in terms of parameter $\Phi$: this corresponds to a reduction of the exchanged power at the price of slightly worse tracking capabilities ($\eta$ increases).

               \subsubsection{Scenario 3}
                    \label{sec:scenario3}
                    We consider the power network described in Scenario 2 and disconnect the area $4$, hence obtaining the areas connected as in Figure \ref{fig:scenario3}. The set of successors to system 4 is $\SSS_4=\{3,5\}$.
                    \begin{figure}[!ht]
                      \centering
                      \includegraphics[scale=0.75]{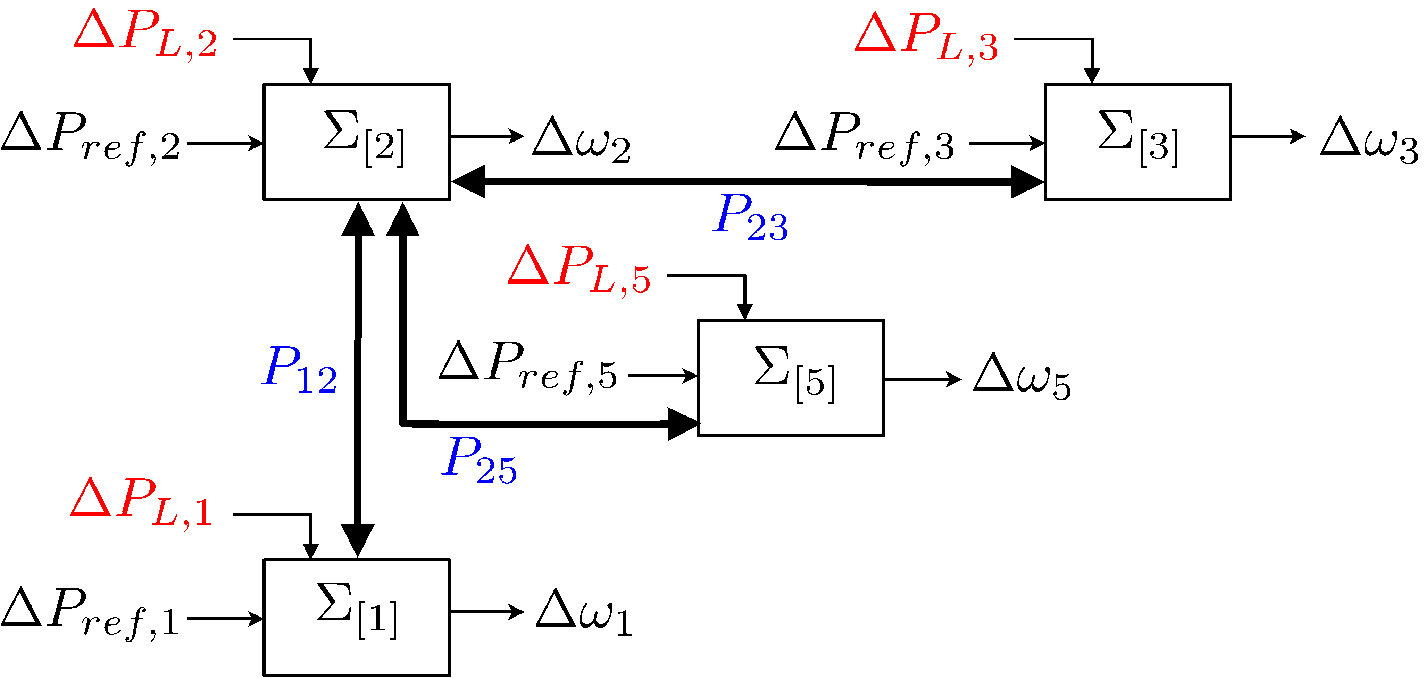}
                      \caption{Power network system of Scenario 3}
                      \label{fig:scenario3}
                    \end{figure}

                    Because of disconnection, systems $\subss\Sigma j$, $j\in\SSS_4$ change their predecessors and local dynamics $A_{jj}$. Then, as explained in Section \ref{sec:scenario2}, the retuning of controllers of successor systems is needed. We highlight that retuning of controllers $\subss\CC 1$ and $\subss\CC 2$ is not needed since systems $\subss\Sigma 1$ and $\subss\Sigma 2$ are not successors of system $\subss\Sigma 4$.
                    \begin{figure}[!ht]
                      \centering
                      \subfigure[\label{fig:simulationscen3freq}Frequency deviation in each area controlled by the proposed DeMPC (bold line) and centralized MPC (dashed line).]{\includegraphics[scale=0.5]{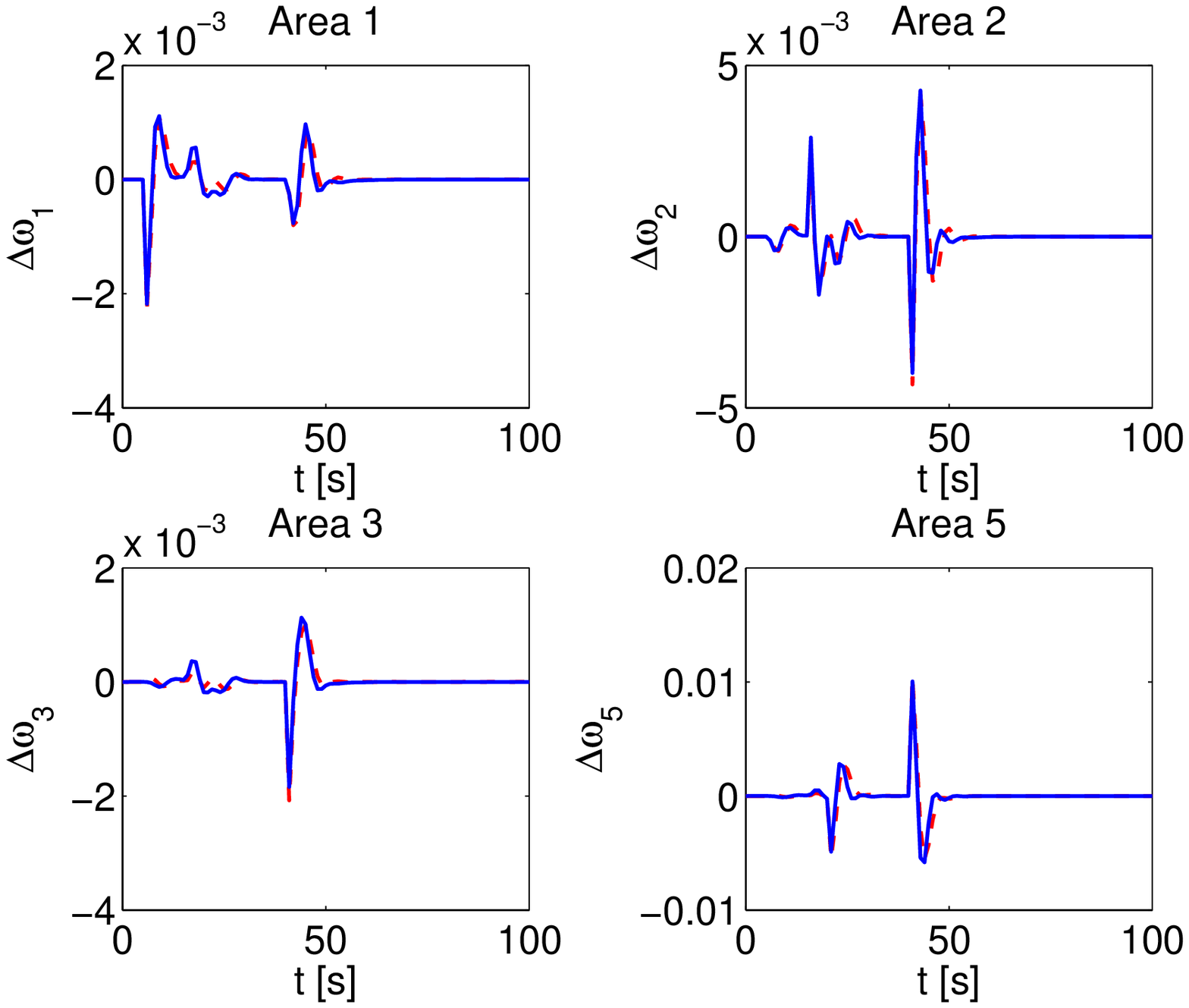}}\\
                      \subfigure[\label{fig:simulationscen3ref}Load reference set-point in each area controlled by the proposed DeMPC (bold line) and centralized MPC (dashed line).]{\includegraphics[scale=0.5]{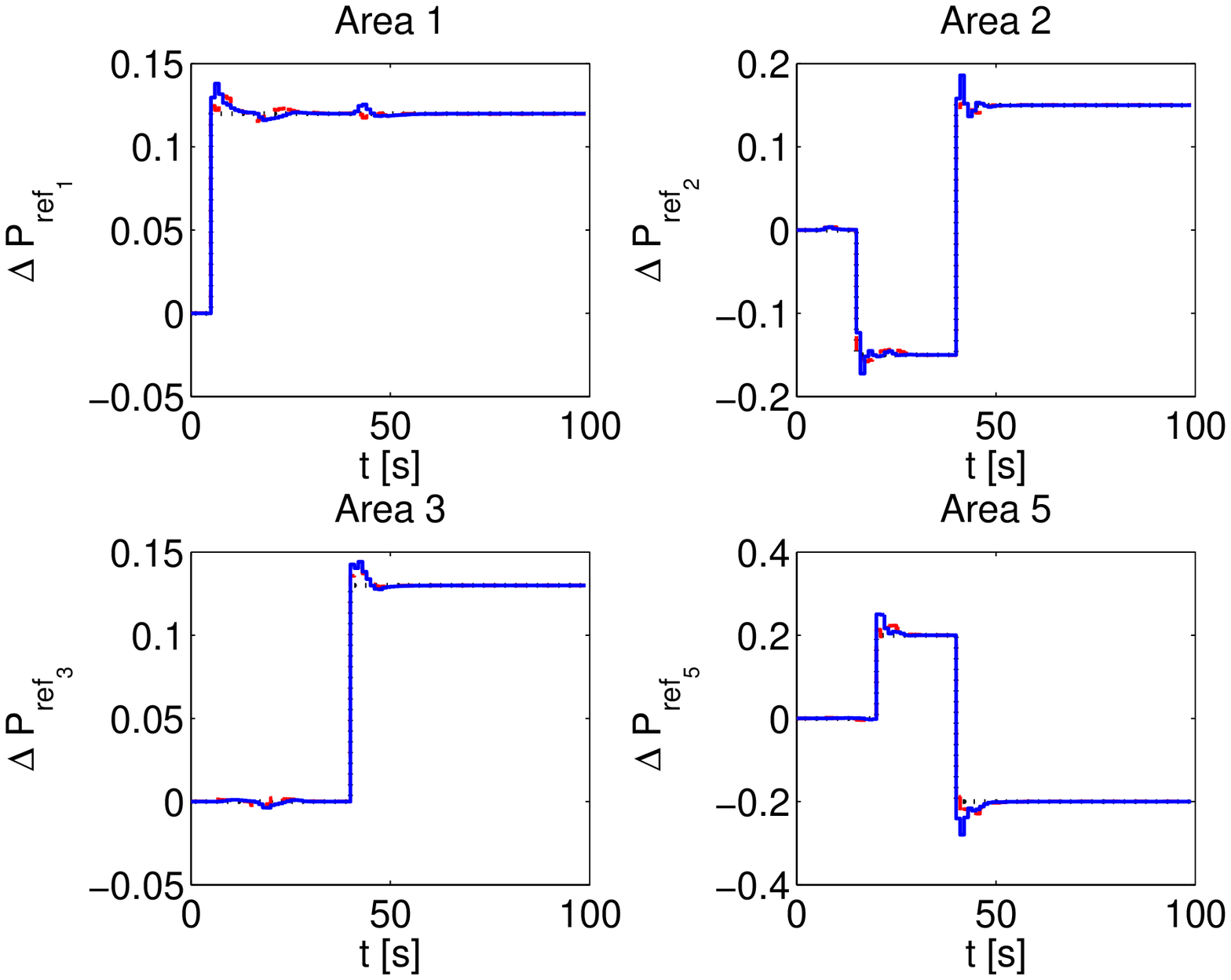}}
                      \caption{Simulation Scenario 3: \ref{fig:simulationscen3freq} Frequency deviation and \ref{fig:simulationscen3ref} Load reference in each area.}
                      \label{fig:simulationscen3}
                    \end{figure}
                    \begin{figure}[!ht]
                      \centering
                      \includegraphics[scale=0.5]{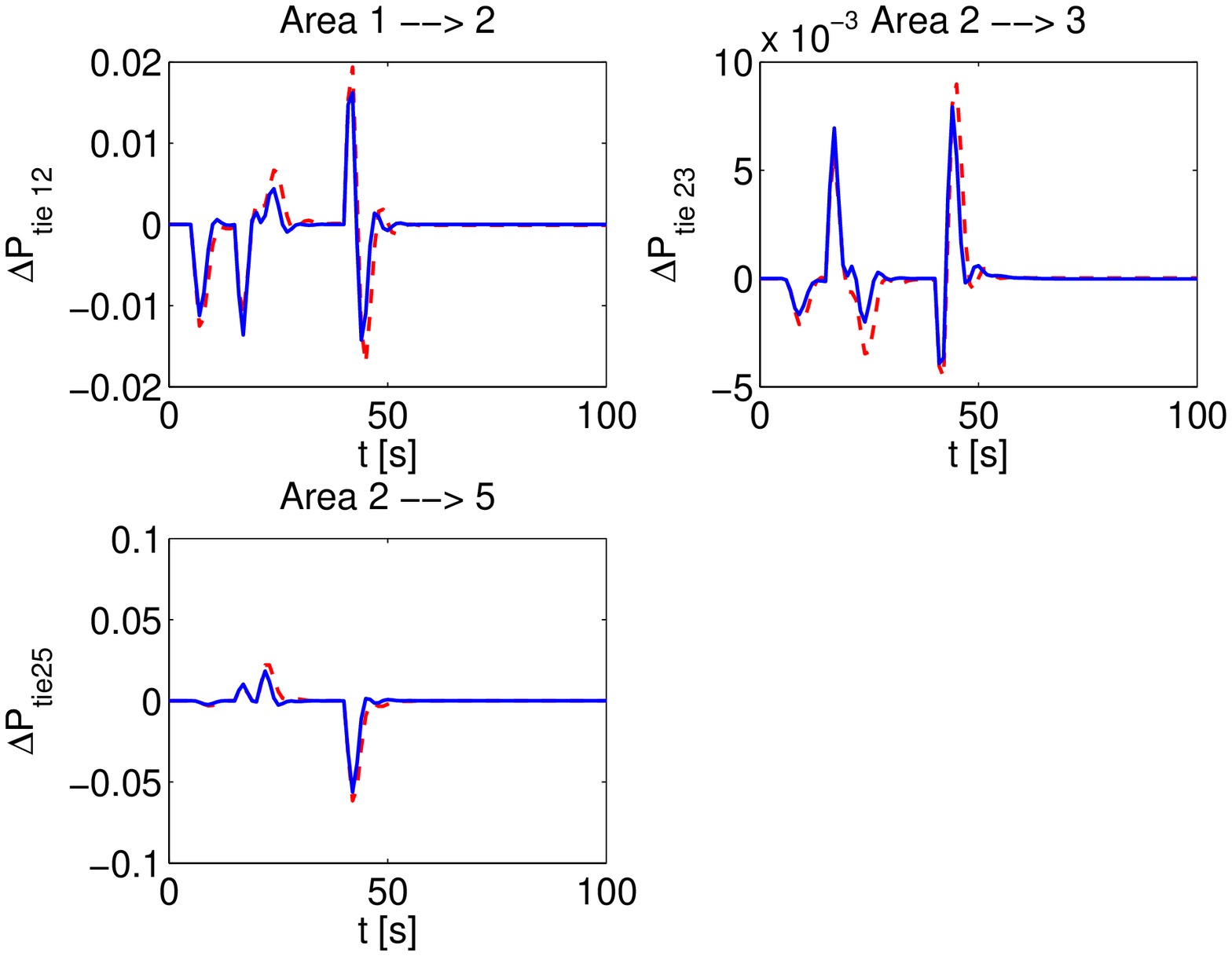}
                      \caption{Simulation Scenario 3: tie-line power between each area controlled by the proposed DeMPC (bold line) and centralized MPC (dashed line).}
                      \label{fig:simulationscen3tiepower}
                    \end{figure}

                    In Figure~\ref{fig:simulationscen3} we compare the performance of proposed DeMPC with the performance of the centralized MPC described in \cite{Riverso2012f}. In the control experiment, step power loads $\Delta P_{L_i}$ specified in Table 5 of \cite{Riverso2012f} have been used. We highlight that the performance of decentralized and centralized MPC are totally comparable in terms of frequency deviation (Figure~\ref{fig:simulationscen3freq}), control variables (Figure~\ref{fig:simulationscen3ref}) and power transfers $\Delta P_{{tie}_{ij}}$ (Figure \ref{fig:simulationscen3tiepower}). The values of performance parameter $\eta$ and $\Phi$ using different controllers are reported in Table \ref{tab:simulationsEta} and Table \ref{tab:simulationsPhi}, respectively. In terms of parameter $\eta$, plug-and-play controllers with decentralized and distributed online implementation are equivalent to centralized controller, however the performance of PnP-DeMPC are such that each area can absorb the local loads by producing more power locally ($\Delta P_{ref,i}$) instead of receiving power from predecessor areas: for this reason, PnP-DiMPC has performance more similar to centralized MPC and PnP-DeMPC reduces of $~40\%$ the performance parameter $\Phi$. Similarly to Scenario 2, compared with P\&PMPC controllers proposed in \cite{Riverso2012a,Riverso2013c}, PnP-DeMPC has better performances in terms of parameter $\Phi$ but worse tracking capabilities ($\eta$ increases).

                    \begin{table}[!ht]
                      \centering
                      \begin{tabular}{|c||c|c||c|c||c|c|}
                        \hline
                        & \multicolumn{2}{|c||}{Scenario 1} & \multicolumn{2}{|c||}{Scenario 2} & \multicolumn{2}{|c|}{Scenario 3} \\
                        \hline
                        &          \emph{MPCdiag}          &    \emph{MPCzero}       &          \emph{MPCdiag}          &    \emph{MPCzero}        &          \emph{MPCdiag}          &    \emph{MPCzero}         \\
                        \hline
                        Cen-MPC    &       0.0249           &   0.0249        &    0.0346            &   0.0347        &    0.0510           &      0.0511       \\
                        \hline
                        De-PnPMPC            &    $\mbf{+2.81\%}$        &   $\mbf{+2.81\%}$  &    $+5.02\%$      &   $+4.90\%$        &   $+7.65\%$            &   $+7.65\%$            \\
                        \hline
                        Di-PnPMPC             &    $+3.61\%$        &   $+3.61\%$  &   $\mbf{+2.31\%}$       &  $\mbf{+2.31\%}$  &   $\mbf{+2.15\%}$            &   $\mbf{+2.15\%}$            \\
                        \hline
                        P\&PMPC \cite{Riverso2012a,Riverso2013c}                   &    $+5.62\%$        &   $+5.62\%$  &   $+4.62\%$       &   $+4.62\%$  &    $+5.69\%$          &    $+5.69\%$         \\
                        \hline
                      \end{tabular}
                      \caption{Value of the performance parameter $\eta$ for centralized MPC (first line) and percentage change using decentralized and distributed MPC schemes for the AGC layer. Best values for PnP controllers are in bold.}
                      \label{tab:simulationsEta}
                    \end{table}

                    \begin{table}[!ht]
                      \centering
                      \begin{tabular}{|c||c|c||c|c||c|c|}
                        \hline
                        & \multicolumn{2}{|c||}{Scenario 1} & \multicolumn{2}{|c||}{Scenario 2} & \multicolumn{2}{|c|}{Scenario 3} \\
                        \hline
                        &          \emph{MPCdiag}          &    \emph{MPCzero}        &         \emph{MPCdiag}         &    \emph{MPCzero}        &          \emph{MPCdiag}          &    \emph{MPCzero}         \\
                        \hline
                        Cen-MPC \cite{Riverso2012f}      &       0.0030           &    0.0029        &    0.0063              &    0.0061        &     0.0060           &      0.0058       \\
                        \hline
                        De-PnPMPC            &    $\mbf{-26.66\%}$        &   $\mbf{-24.14\%}$  &    $\mbf{-31.25\%}$      &   $\mbf{-27.08\%}$        &   $\mbf{-42.85\%}$            &   $\mbf{-38.09\%}$            \\
                        \hline
                        Di-PnPMPC            &    $+0.00\%$        &   $+3.44\%$  &    $-8.62\%$      &   $-5.17\%$        &   $-7.14\%$            &   $-3.57\%$            \\
                        \hline
                        P\&PMPC \cite{Riverso2012a,Riverso2013c}            &    $+16.66\%$     &   $+16.66\%$  &    $+19.05\%$      &   $+22.95\%$        &   $-11.66\%$            &   $-11.66\%$            \\
                        \hline
                      \end{tabular}
                      \caption{Value of the performance parameter $\Phi$ for centralized MPC (first line) and percentage change using decentralized and distributed MPC schemes for the AGC layer. Best values for PnP controllers are in bold.}
                      \label{tab:simulationsPhi}
                    \end{table}

          \subsection{Large-scale System}
               \label{sec:lssexample}
               We consider a large-scale system composed by $1024$ masses coupled as in Figure \ref{fig:lastFrame} where the four edges connected to a point correspond to springs and dampers arranged as in Figure \ref{fig:exampleCart2D}.
               \begin{figure}[htb]
                 \centering
                 \def\svgwidth{250pt}
                 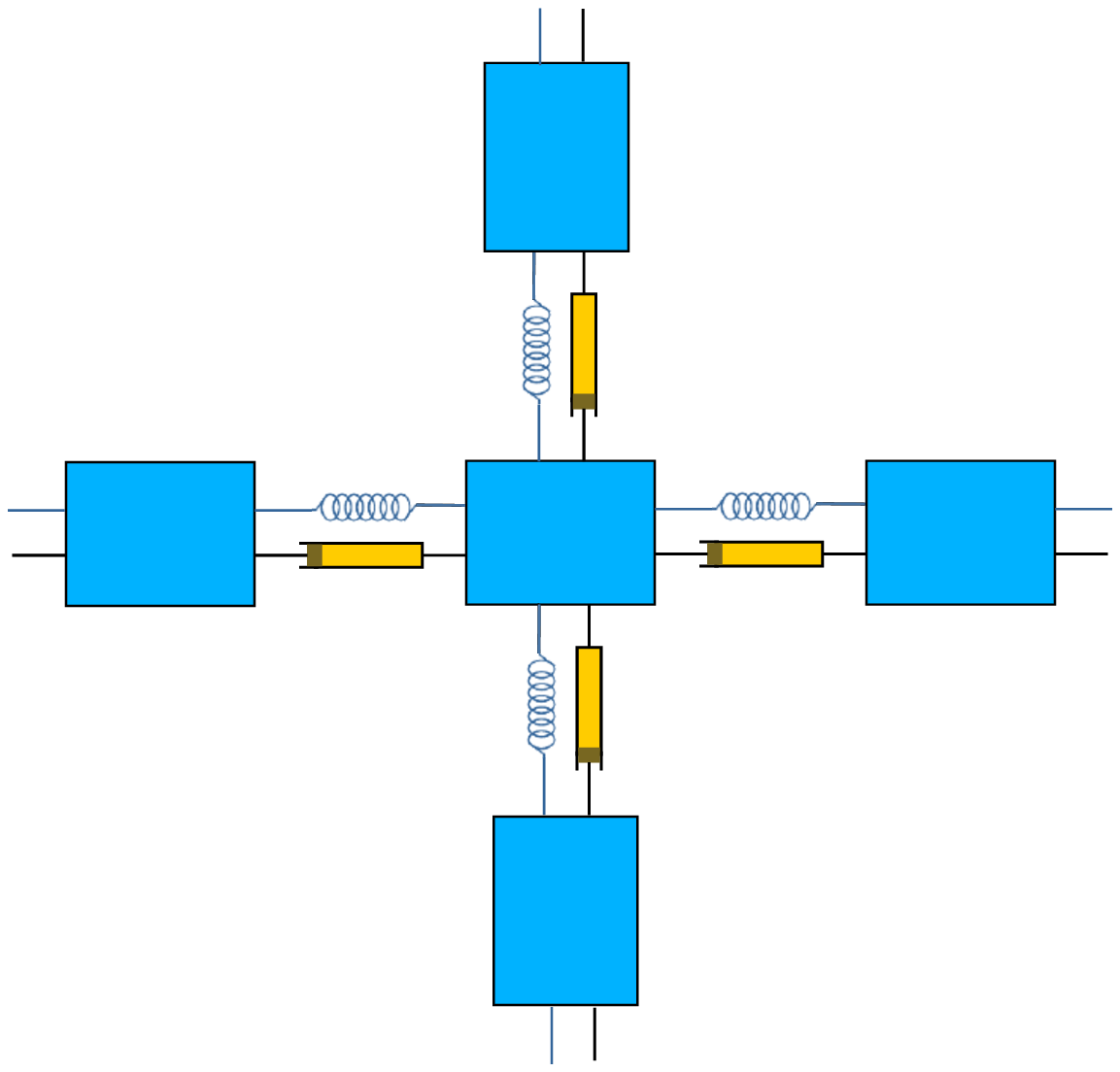
                 \caption{Array of masses: details of interconnections.}
                 \label{fig:exampleCart2D}
               \end{figure}
               Each mass $i\in\MM=1:1024$, is a subsystem with state variables $\subss x i=(\subss x {i,1},\subss x {i,2},\subss x {i,3},\subss x {i,4})$ and input $\subss u i=(\subss u {i,1},\subss u {i,2})$, where $\subss x {i,1}$ and $\subss x {i,3}$ are the displacements of mass $i$ with respect to a given equilibrium position on the plane (equilibria lie on a regular grid), $\subss x {i,2}$ and  $\subss x {i,4}$ are the horizontal and vertical velocity of the mass $i$, respectively, and $100\subss u {i,1}$ (respectively $100\subss u {i,2}$) is the force applied to mass $i$ in the horizontal (respectively, vertical) direction. The values of $m_i$ have been extracted randomly in the interval $[5,10]$ while spring constants and damping coefficients are identical and equal to $0.5$. Subsystems are equipped with the state constraints $\norme{\subss x {i,j}}{\infty}\leq 1.5$, $j=1,3$, $\norme{\subss x {i,l}}{\infty}\leq 0.8$, $i\in\MM$, $l=2,4$ and with the input constraints $\norme{\subss u {i}}{\infty}\leq \beta_i$, where $\beta_i$ have been randomly chosen in the interval $[1,1.5]$. We obtain models $\subss\Sigma i$ by discretizing continuous-time models with $0.2~$sec sampling time, using zero-order hold discretization for the local dynamics and treating $\subss x j,~j\in\NN_i$ as exogenous signals. We synthesized controllers $\subss\CC i$, $i\in\MM$ using Algorithm \ref{alg:distrisynt}. In the worst case the time required to solve Step \ref{enu:feasProb} of Algorithm \ref{alg:distrisynt} is $0.2598~$sec (best case $0.0140~$sec). Note also that the use of a centralized MPC is prohibitive since for the overall system one has $\mbf x\in\Rset^{4096}$, $\mbf u\in\Rset^{2048}$ and the overall state and input constraints are composed respectively by $8192$ and $4096$ affine constraints. In Figure \ref{fig:frameCart1024} we show a simulation where, at time $t=0$, the masses are placed as in Figure \ref{fig:firstFrame}. At all time steps $t$, the control action $\subss u i(t)$ computed by the controller $\CC_i$, for all $i\in\MM$, is kept constant during the sampling interval and applied to the continuous-time system. In the worst case, the solution of the MPC-$i$ problem \eqref{eq:decMPCProblem} requires $0.1047~$sec on a processor Intel Core i7-2600 3.4 GHz, Ram 8GB 1.33 GHz  running Matlab $r2011b$. Convergence is obtained for all masses to their equilibrium position while fulfilling input and state constraints. State and input variables are depicted in Figure \ref{fig:variablesCart1024}. From Figure \ref{fig:variablesCart1024}, the settling time at $95\%$ is $5.37~$sec. For this large-scale system, we also have considered the use of PnP-DeMPC controllers proposed in \cite{Riverso2012a,Riverso2013c}, but since the design of local controllers requires the solution to nonlinear optimization problem and also the number of local constraints is large, the design procedure in \cite{Riverso2012a,Riverso2013c} did not give conclusive results after several hours of computation.

               \begin{figure}[!ht]
                 \centering
                 \subfigure[\label{fig:firstFrame}Position of the masses at initial time.]{\includegraphics[scale=0.52]{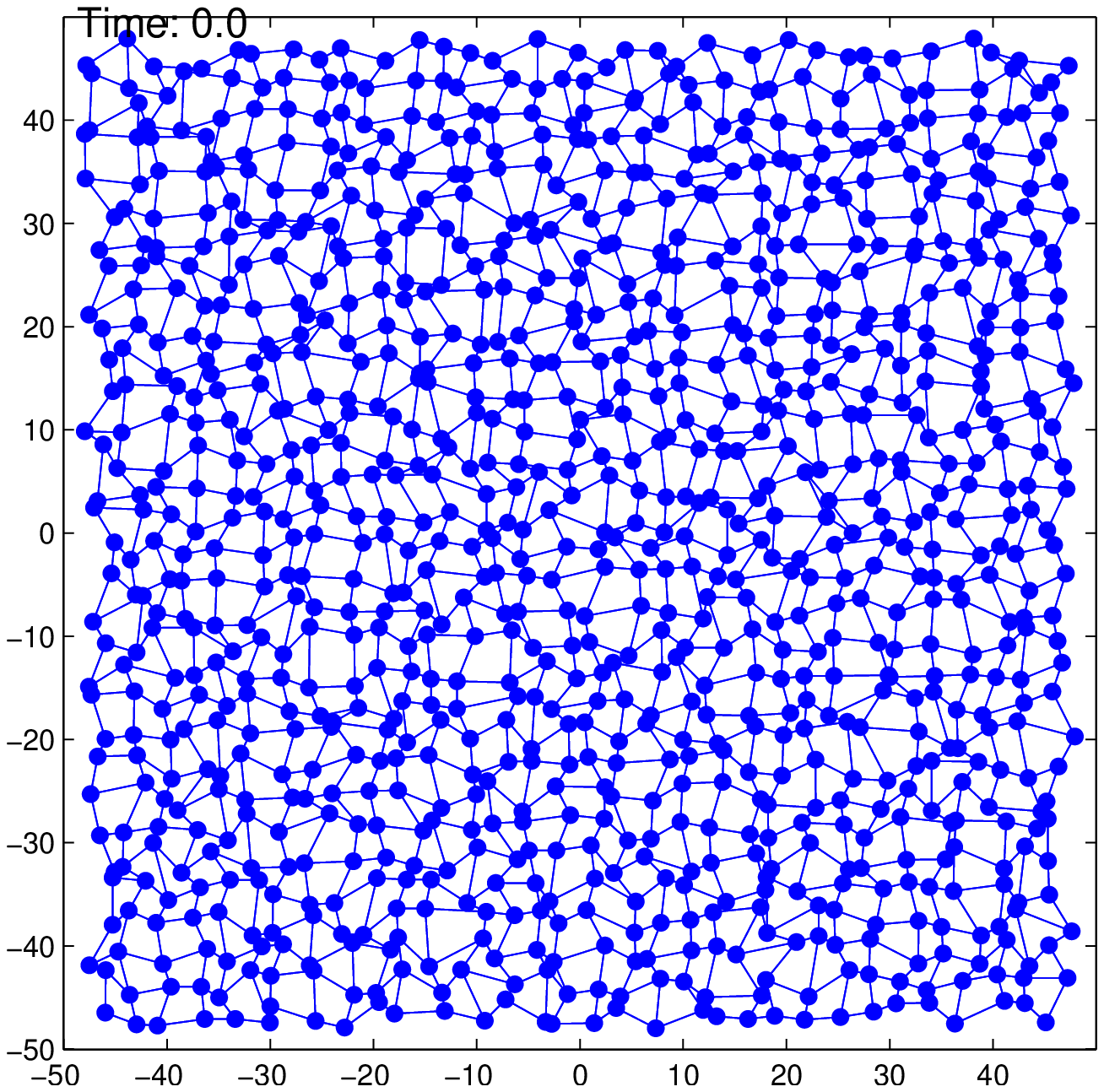}}~~~
                 \subfigure[\label{fig:lastFrame}Position of the masses at time $10~$s ($100$ instant times).]{\includegraphics[scale=0.52]{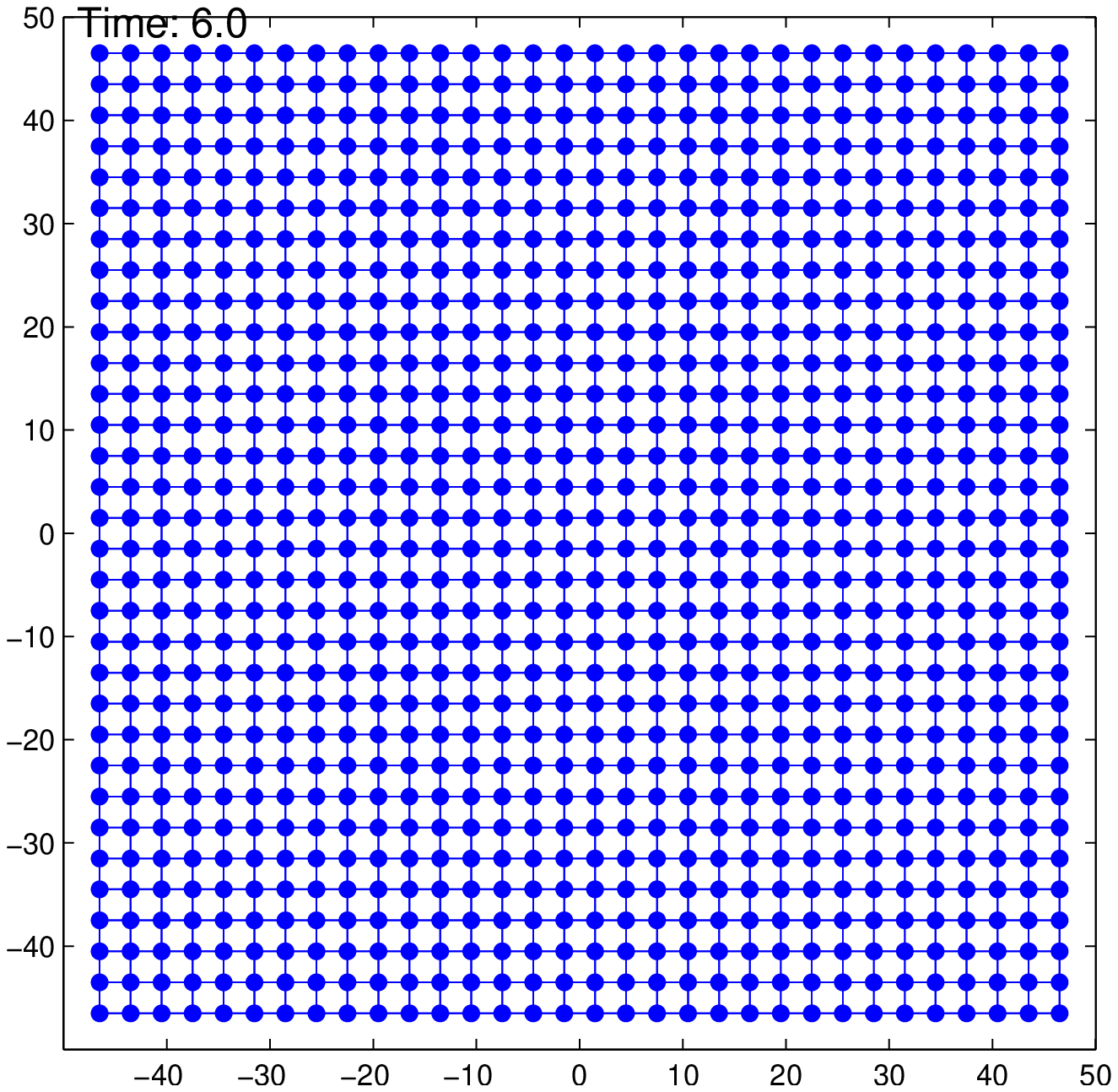}}
                 \caption{Position of the $1024$ trucks on the plane.}
                 \label{fig:frameCart1024}
               \end{figure}
               \begin{figure}[!ht]
                 \centering
                 \subfigure[\label{fig:position1024}Displacements of the masses with respect their equilibrium positions.]{\includegraphics[scale=0.35]{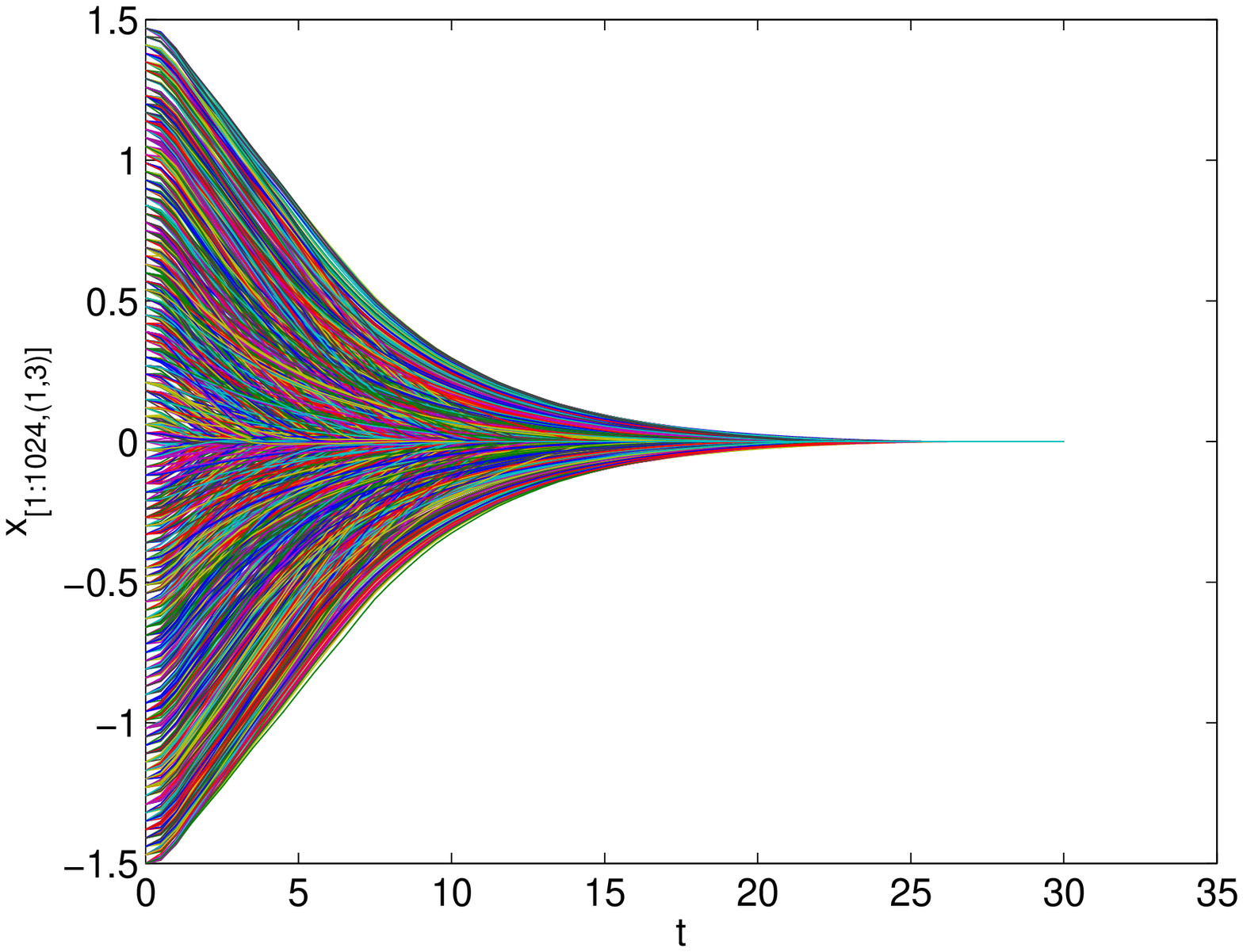}}~~~
                 \subfigure[\label{fig:velocity1024}Velocities, i.e. states $\subss{x}{i,(2,4)}$, $i\in\MM$.]{\includegraphics[scale=0.35]{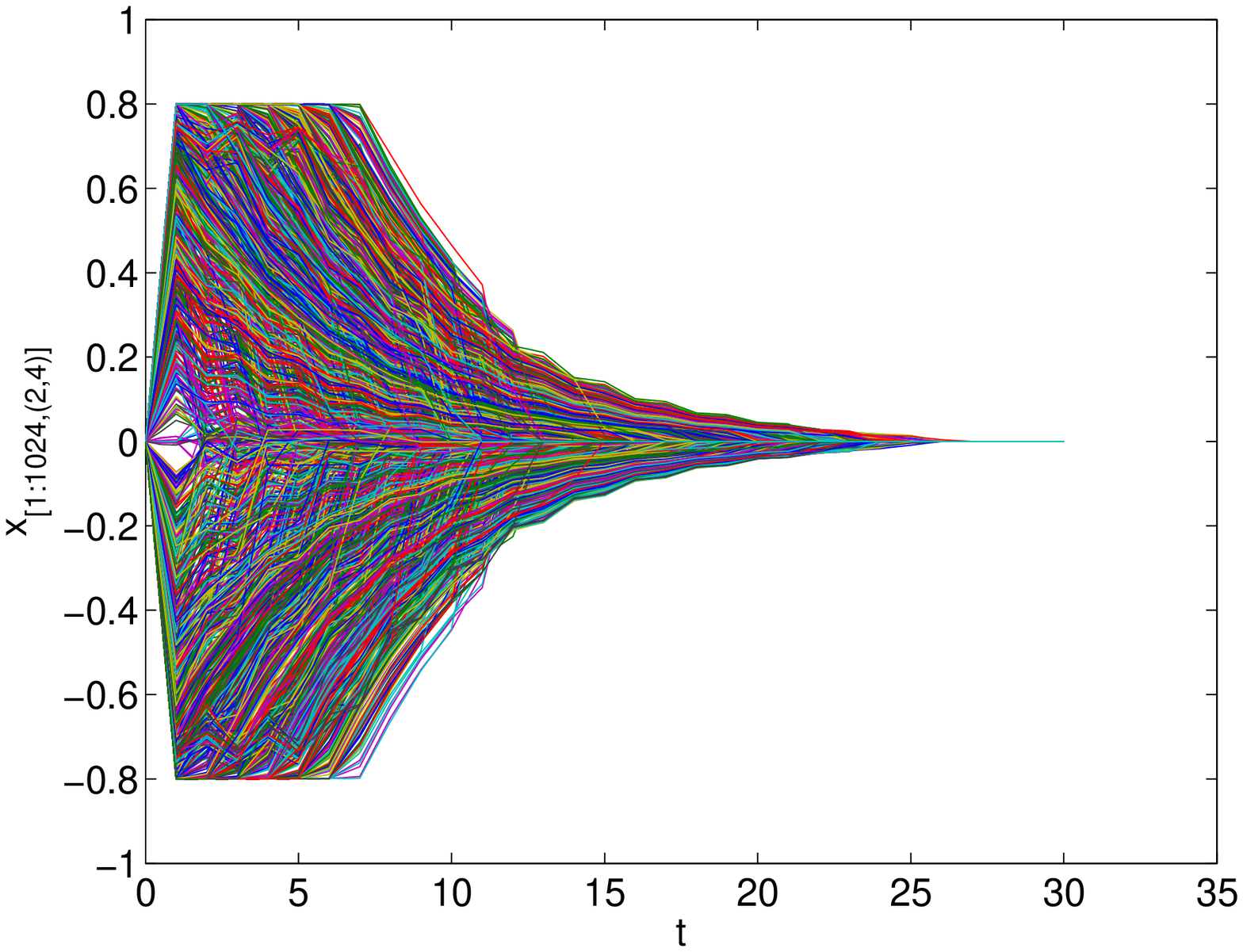}}\\
                 \subfigure[\label{fig:inputH1024}Inputs $\subss{u}{i,1}$, $i\in\MM$.]{\includegraphics[scale=0.35]{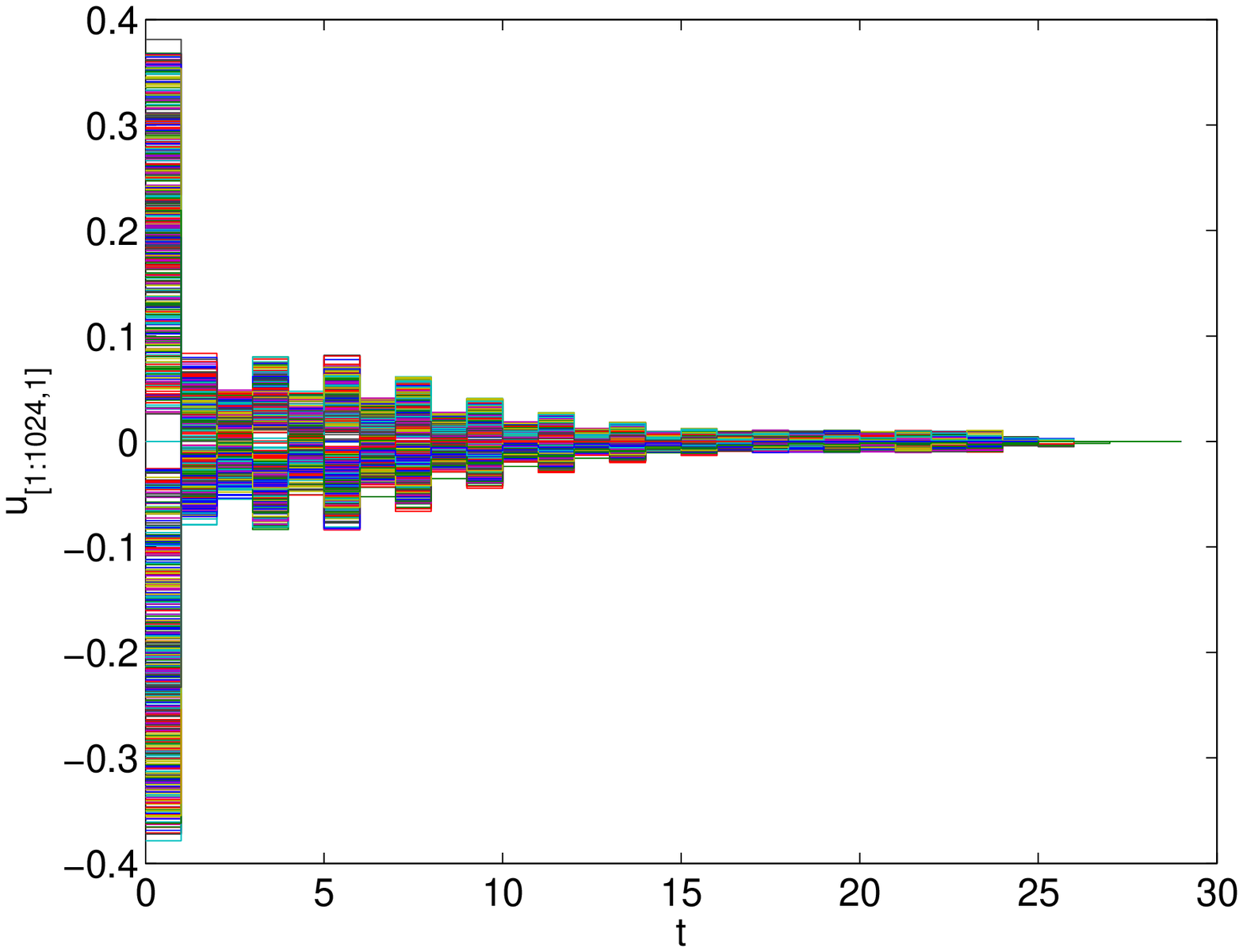}}~~~
                 \subfigure[\label{fig:inputV1024}Inputs $\subss{u}{i,2}$, $i\in\MM$.]{\includegraphics[scale=0.35]{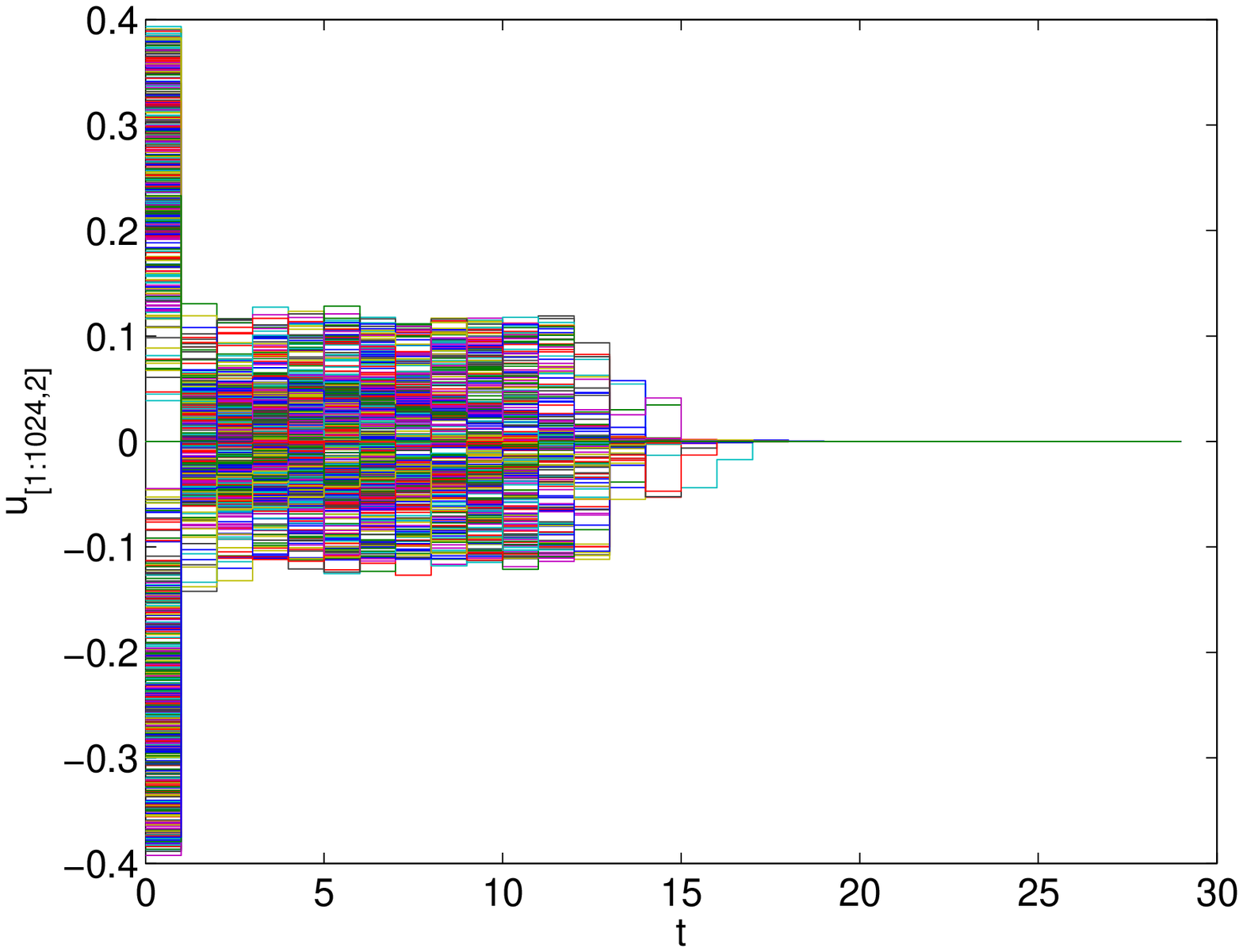}}
                 \caption{State and input trajectories of the $1024$ masses with $\mbf x(0)$ as in Figure \ref{fig:firstFrame}.}
                 \label{fig:variablesCart1024}
               \end{figure}

               \subsubsection{Remarks}
                    The array of masses in Figures \ref{fig:exampleCart2D} and \ref{fig:frameCart1024} is a marginally stable system irrespectively of the magnitude of coupling terms. Indeed, no mass is bound to a fixed reference frame through springs. Next we show that designing local MPC controllers neglecting coupling might compromise feasibility at all times. We used the same MPC settings as for the controllers proposed in Section \ref{sec:lssexample}. Therefore for each subsystem we solve the following MPC problem
                    \begin{subequations}
                      \label{eq:decMPCProblem2}
                      \begin{align}
                        &\label{eq:costMPCProblem2}\Pset_i^N(\subss x i(t)) = \min_{\substack{\subss u i(0:N_i-1|t)}}\sum_{k=0}^{N_i-1} \subss x i(k|t)^T\matr{10&0&0&0\\0&10&0&0\\0&0&10&0\\0&0&0&10}\subss x i(k|t) +\subss u i(k|t)^T\matr{ 1 & 0 \\ 0 & 1}\subss u i(k|t) &\\
                        &\label{eq:dynproblem2}\subss x i(k+1|t)=A_{ii}\subss x i(k|t)+B_i\subss u i(k|t),\qquad k\in0:N_i-1 \\
                        &\label{eq:inhXproblem2}\subss x i(k|t)\in\Xset_i,\qquad\qquad\qquad\qquad\qquad\qquad\quad      k\in1:N_i-1 \\
                        &\label{eq:inVproblem2}\subss u i(k|t)\in\Uset_i,\qquad\qquad\qquad\qquad\qquad\qquad\quad     k\in0:N_i-1 \\
                        &\label{eq:inTerminalSet2}\subss x i(N_i|t)=0
                      \end{align}
                    \end{subequations}
                    where $N_i=20$, $i\in\MM$. Note that MPC problems \eqref{eq:decMPCProblem2} do not depend on coupling terms, both for the online computation of control inputs $\subss u i$ and for the design of the cost function and terminal region.

                    Consider the initial state $\subss x
                    1(0)=(1.5,0.8,0,0)$, $\subss x 2(0)=(1.5,0,0,0)$,
                    and $\subss x l(0)=(0,0,0,0)$, $l\in 3:1024$, that
                    is feasible for the MPC problems
                    \eqref{eq:decMPCProblem2}. For subsystem
                    $\subss\Sigma 1$, we obtain $\subss u
                    1(0|0)=(-0.6304,0.0000)$ and hence, setting $\subss u
                    1(0)=\subss u 1(0|0)$, one gets
                    \begin{equation*}
                      \begin{aligned}
                        \subss x 1(1) = \matr{ 0.9987 & 0.1987 &  0	& 0 \\ -0.01245 & 	 0.9863 &	0 & 0 \\  0 & 0 & 0.9987 & 0.1987 \\  0 & 0 & -0.0125 & 0.9863}\subss x 1(0) +\matr{ 0.2497 & 0 \\ 2.4909 & 0 \\ 0 & 0.2497 \\ 0 & 2.4909 }\subss u 1(0) +\\+ \matr{ 0.0012 & 0.0012 & 0 & 0 \\ 0.0124 & 0.0124 & 0 & 0 \\ 0 & 0 & 0 & 0 \\ 0 & 0 & 0 & 0 }\subss x 2(0)
                      \end{aligned}
                    \end{equation*}
                   Therefore,
                    $$
                    \subss x 1(1) = \matr{ 1.5015 \\ -0.7813 \\ 0 \\ 0 }\notin\Xset_1.
                    $$
                    showing that at time $t=1$ the MPC problem
                    \eqref{eq:decMPCProblem2} for subsystem $\subss\Sigma 1$ is infeasible.


     \section{Conclusions}
          \label{sec:conclusions}
          In this paper we proposed a decentralized MPC scheme based on the notion of tube-based control \cite{Rakovic2005b}. The proposed scheme guarantees asymptotic stability of the closed-loop system and constraints satisfaction at each time instant. Our design procedures allows plug-and-play operations: if a new subsystem enters the network we can design a local controller using information from predecessor systems. Differently from \cite{Riverso2012a,Riverso2013c} we can compute local controllers solving LP optimization problems only. Moreover, no assumption involves quantities of the overall system. Future research will consider generalizations of PnP-MPC design to output feedback schemes and tracking problems.

     \appendix
          \section{Proof of Theorem \ref{thm:mainclosedloop}}
               \label{sec:proofthmmain}
               We start introducing a definition and a few Lemmas that will be used in the proof of Theorem \ref{thm:mainclosedloop}.\\
               \begin{defi}
                 A convex body is a nonempty convex and compact set.
               \end{defi}
               \begin{lem}\cite{Schneider1993}
                 \label{lem:SFlem}
                 Let $A$ and $B$ be convex bodies. The support function of $A$ is $h_A(x)=\sup_{y\in A} x'y$ and it has the following properties: $h_A$ is sublinear (i.e. $h_A(\alpha x)=\alpha h_A(x)$, $\forall\alpha\geq 0$ and $h_A$, $h_A(x+y)\leq h_A(x)+h_A(y)$), $h_{\lambda A}=\lambda h_A$, $\forall\lambda>0$, $h_{A\oplus B}=h_A+h_B$ and $A\subseteq B\Leftrightarrow h_A\leq h_B$.
               \end{lem}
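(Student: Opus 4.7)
The plan is to verify each of the five properties by unwinding the definition $h_A(x) = \sup_{y \in A} x'y$ and using elementary properties of the supremum; only the reverse direction of the set-inclusion characterization requires nontrivial input (a separation argument). Throughout, compactness of $A$ and $B$ ensures the suprema are attained, and nonemptiness keeps the support functions finite-valued, so all manipulations below are justified.

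First I would dispatch positive homogeneity and subadditivity. For $\alpha \geq 0$, nonnegativity lets us pull the scalar out of the supremum:
\begin{equation*}
h_A(\alpha x) = \sup_{y \in A}(\alpha x)'y = \alpha \sup_{y \in A} x'y = \alpha\, h_A(x).
\end{equation*}
For subadditivity, distribute the inner product and use the inequality $\sup(f+g) \leq \sup f + \sup g$:
\begin{equation*}
h_A(x+y) = \sup_{z \in A}\bigl(x'z + y'z\bigr) \leq \sup_{z \in A} x'z + \sup_{z \in A} y'z = h_A(x) + h_A(y).
\end{equation*}
The identity $h_{\lambda A} = \lambda h_A$ for $\lambda > 0$ follows by the change of variable $y = \lambda y'$ in the defining supremum, which is a bijection $A \to \lambda A$.

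Next I would prove $h_{A \oplus B} = h_A + h_B$. Every $z \in A \oplus B$ decomposes uniquely for the purpose of the supremum as $z = a + b$ with $a \in A$, $b \in B$, so
\begin{equation*}
h_{A \oplus B}(x) = \sup_{a \in A,\, b \in B} x'(a+b) = \sup_{a \in A} x'a + \sup_{b \in B} x'b = h_A(x) + h_B(x),
\end{equation*}
where separating the joint supremum is legitimate because the two summands depend on disjoint optimization variables taken over a product set.

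Finally, for the equivalence $A \subseteq B \iff h_A \leq h_B$, the forward direction is immediate: restricting the supremum to a subset cannot increase it. The reverse direction is the only substantive step and is what I expect to be the main obstacle, as it is the only place where convexity and compactness enter in an essential way. I would argue by contraposition: suppose $A \not\subseteq B$ and pick $y_0 \in A \setminus B$. Since $B$ is a convex body and $\{y_0\}$ is a compact convex set disjoint from $B$, the Hahn--Banach (or finite-dimensional strict separation) theorem yields $x_0$ with $x_0'y_0 > \sup_{y \in B} x_0'y = h_B(x_0)$. But then $h_A(x_0) \geq x_0'y_0 > h_B(x_0)$, contradicting $h_A \leq h_B$. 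This establishes the equivalence and completes the lemma.
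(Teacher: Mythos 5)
Your proof is correct. Note that the paper itself offers no proof of this lemma: it is stated with a citation to Schneider's monograph on convex bodies, so there is no argument in the paper to compare against. Your verification is the standard one and each step is sound --- positive homogeneity and subadditivity from elementary supremum manipulations, $h_{\lambda A}=\lambda h_A$ by the change of variable, the Minkowski-sum identity by separating a supremum over a product set, and the set-inclusion equivalence via strict separation of the point $y_0\in A\setminus B$ from the compact convex set $B$. You correctly identify the reverse implication of the last property as the only place where convexity and closedness of $B$ are essential; without convexity the equivalence fails (the support function cannot distinguish a set from its closed convex hull). One cosmetic slip: in the Minkowski-sum step the decomposition $z=a+b$ is generally \emph{not} unique, but this is irrelevant since the identity $\sup_{z\in A\oplus B}x'z=\sup_{a\in A,\,b\in B}x'(a+b)$ holds regardless --- you are taking the supremum over all pairs, and every value $x'z$ with $z\in A\oplus B$ is realized by at least one pair. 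I would simply delete the word ``uniquely.''
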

               \begin{lem}
                 \label{lem:outerbound}
                 Let $A=\{x\in\Rset^n:\AAA x\leq \One\}$, $\AAA=[a_1,\ldots,a_p]^T\in\Rset^{p\times n}$ and assume that $A\ominus\ball{\beta}(0)$ strictly contains the origin in its interior. Then,
                 \begin{enumerate}[a)]
                 \item\label{enu:outerbounda} $A\ominus\ball{\beta}(0)=\{x\in\Rset^n:\AAA x\leq \One-\beta(\norme{a_1}{},\ldots,\norme{a_p}{})\}$
                 \item\label{enu:outerboundb} $\beta\norme{a_i}{}<1$, $\forall i\in 1:p$
                 \item\label{enu:outerboundc} defining $\psi=\min_{i\in 1:p}\norme{a_i}{}$ one has $A\ominus\ball{\beta}(0)\subseteq (1-\beta\psi)A$
                 \end{enumerate}
                 \begin{proof}
                   Proceeding as is the proof of point 8 of Proposition 3.28 of \cite{Blanchini2008} one gets
                   $$
                   A\ominus\ball{\beta}(0)=\{x\in\Rset^n:\AAA x\leq \tilde g\}
                   $$
                   where $\tilde g=(\tilde g_1,\ldots,\tilde g_p)$, $\tilde g_i=1-h_{\ball{\beta}(0)}(a_i)$, and $h_{\ball{\beta}(0)}(\cdot)$ is the support function of $\ball{\beta}(0)$. Point (\ref{enu:outerbounda}) follows from $h_{\ball{\beta}(0)}(x)=\beta\norme{x}{}$ (that can be verified using the definition of support functions).\\
                   From point (\ref{enu:outerbounda}) one has $x\in A\ominus\ball{\beta}(0)$ if and only if
                   \begin{equation}
                     \label{eq:outerboundeq1}
                     a_i^Tx\leq 1-\beta\norme{a_i}{},~\forall i\in 1:p
                   \end{equation}
                   and in order to have that all constraints \eqref{eq:outerboundeq1} are fulfilled for $x=0$ and no one is active at $x=0$, point (\ref{enu:outerboundb}) must be verified.\\
                   For point (\ref{enu:outerboundc}), one has
                   \begin{equation}
                     \label{eq:outerboundeq2}
                     A\ominus\ball{\beta}(0)\subseteq\{x\in\Rset^n:\AAA x\leq (1-\beta\psi)\One\}=\{x\in\Rset^n:\AAA (\frac{x}{1-\beta\psi})\leq\One\}
                   \end{equation}
                   where the last equality holds because, from point (\ref{enu:outerboundb}), $\beta\psi<1$. The rightmost set in \eqref{eq:outerboundeq2} is $(1-\beta\psi)A$ and this concludes the proof.
                 \end{proof}
               \end{lem}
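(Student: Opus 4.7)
The plan is to derive all three parts from the support function machinery of Lemma \ref{lem:SFlem} together with the standard halfspace-by-halfspace formula for the Pontryagin difference of a polytope and a convex body.

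For part (\ref{enu:outerbounda}), I would start from the definition: $x\in A\ominus\ball{\beta}(0)$ iff $x\oplus\ball{\beta}(0)\subseteq A$, i.e.\ iff $a_i^T(x+y)\leq 1$ for every $y\in\ball{\beta}(0)$ and every $i\in1:p$. Taking the supremum over $y$ in each halfspace constraint gives $a_i^T x + h_{\ball{\beta}(0)}(a_i)\leq 1$. Following the hint to Proposition 3.28 of \cite{Blanchini2008}, this yields the polyhedral description $\AAA x\leq \tilde g$ with $\tilde g_i=1-h_{\ball{\beta}(0)}(a_i)$. It then remains to observe that $h_{\ball{\beta}(0)}(a_i)=\sup_{y:\norme{y}{}\leq\beta} a_i^T y = \beta\norme{a_i}{}$, which is immediate from the Cauchy--Schwarz inequality and the choice $y=\beta a_i/\norme{a_i}{}$ (or directly from sublinearity and $h_{\ball{1}(0)}(a_i)=\norme{a_i}{}$). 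Substituting into $\tilde g$ gives the stated formula.

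Part (\ref{enu:outerboundb}) follows by evaluating the description from part (\ref{enu:outerbounda}) at $x=0$: by hypothesis the origin lies in the interior of $A\ominus\ball{\beta}(0)$, so none of the constraints $a_i^T x\leq 1-\beta\norme{a_i}{}$ can be active or violated at $x=0$, i.e.\ $0<1-\beta\norme{a_i}{}$ for all $i\in1:p$, which is equivalent to $\beta\norme{a_i}{}<1$.

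For part (\ref{enu:outerboundc}) I would chain inclusions: since $\psi=\min_i\norme{a_i}{}\leq\norme{a_i}{}$, one has $1-\beta\norme{a_i}{}\leq 1-\beta\psi$ for every $i$, hence any $x$ satisfying the tightened constraints from part (\ref{enu:outerbounda}) also satisfies $\AAA x\leq(1-\beta\psi)\One$. By part (\ref{enu:outerboundb}) we have $1-\beta\psi>0$, so we can rewrite $\AAA x\leq(1-\beta\psi)\One$ as $\AAA\bigl(x/(1-\beta\psi)\bigr)\leq\One$, which is exactly the condition $x/(1-\beta\psi)\in A$, i.e.\ $x\in(1-\beta\psi)A$. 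No step appears to present a real obstacle: the only subtlety is verifying that the halfspace-wise computation of the Pontryagin difference is indeed valid, which is standard and can be quoted from \cite{Blanchini2008} as the authors do.
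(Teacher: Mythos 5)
Your proposal is correct and follows essentially the same route as the paper: the halfspace-wise Pontryagin-difference formula (which the paper quotes from Proposition 3.28 of \cite{Blanchini2008} and you additionally rederive via $\sup_{y}a_i^T(x+y)$), the identity $h_{\ball{\beta}(0)}(a_i)=\beta\norme{a_i}{}$, evaluation at $x=0$ for part b), and the rescaling $\AAA x\leq(1-\beta\psi)\One\Leftrightarrow x/(1-\beta\psi)\in A$ for part c). No gaps.
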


               The next result shows that the control law $\bkappa_i(\subss x i)$ defined in \eqref{eq:kappainvcomputed} is homogeneous.
               \begin{lem}
                 \label{lem:homo}
                 For $\subss z i\in\Zset_i$ and $\rho\geq 0$ one has
                 $$
                 \bkappa_i^s(\rho\subss z i)=\rho\bkappa_i^s(\subss z i),~s=0,\ldots,k_i-1
                 $$
                 and hence $\bkappa_i(\rho\subss z i)=\rho\bkappa_i(\subss z i)$.
                 \begin{proof}
                   Let $\bar\beta_i^{(s,f)}$, $f\in 1:q_i$, $s\in 0:k_i-1$ and $\bar\mu$ be the optimizers to $\bar\Pset_i(\subss z i)$. One can easily verify that $\beta_i^{(s,f)}=\rho\bar\beta_i^{(s,f)}$ and $\mu=\rho\bar\mu$ fulfill the constraints \eqref{eq:kappainvBeta}-\eqref{eq:pzInvproblem} for $\bar\Pset_i(\rho\subss z i)$. We show now that these values are also optimal for $\bar\Pset_i(\rho\subss z i)$. By contradiction, assume that $\tilde\beta_i^{(s,f)}$, $\tilde\mu$ are the optimizers to $\bar\Pset_i(\rho\subss z i)$ giving the optimal cost $\tilde\mu<\rho\bar\mu$. One can easily verify that $\beta_i^{(s,f)}=\rho^{-1}\tilde\beta_i^{(s,f)}$ and $\mu=\rho^{-1}\tilde\mu$ verify the constraints \eqref{eq:kappainvBeta}-\eqref{eq:pzInvproblem} for $\bar\Pset_i(\subss z i)$ and yield a cost $\rho^{-1}\tilde\mu<\bar\mu$. This contradicts the optimality of $\bar\mu$ for $\bar\Pset_i(\subss z i)$.
                 \end{proof}
               \end{lem}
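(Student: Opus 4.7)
The plan is to exploit the fact that the LP $\bar\Pset_i(\subss z i)$ in \eqref{eq:kappainv} is positively homogeneous in the parameter $\subss z i$: every constraint and the objective are linear in $(\beta_i^{(s,f)},\mu)$, with right-hand side either zero or linear in $\subss z i$, so scaling $\subss z i$ by $\rho\ge 0$ can be absorbed by scaling the decision vector by the same factor. Concretely, I would first verify that if $(\bar\beta_i^{(s,f)},\bar\mu)$ is feasible for $\bar\Pset_i(\subss z i)$, then $(\rho\bar\beta_i^{(s,f)},\rho\bar\mu)$ is feasible for $\bar\Pset_i(\rho\subss z i)$, checking the constraints one by one: \eqref{eq:kappainvBeta} and \eqref{eq:kappainvmu} are preserved because $\rho\ge 0$, \eqref{eq:kappainvSum1} rescales to $\sum_f \rho\bar\beta_i^{(s,f)}=\rho\bar\mu$, and multiplying both sides of \eqref{eq:pzInvproblem} by $\rho$ yields the equality at the parameter $\rho\subss z i$. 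The cost of this scaled point is $\rho\bar\mu$, which shows $\mu^*(\rho\subss z i)\le \rho\,\mu^*(\subss z i)$.

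For the reverse inequality I would argue by contradiction. Suppose $\rho>0$ and let $(\tilde\beta_i^{(s,f)},\tilde\mu)$ be optimal for $\bar\Pset_i(\rho\subss z i)$ with $\tilde\mu<\rho\bar\mu$. Since $\rho>0$, I can rescale by $\rho^{-1}$ to obtain $(\rho^{-1}\tilde\beta_i^{(s,f)},\rho^{-1}\tilde\mu)$, which by the same verification is feasible for $\bar\Pset_i(\subss z i)$ with cost $\rho^{-1}\tilde\mu<\bar\mu$, contradicting the optimality of $\bar\mu$. The case $\rho=0$ is handled separately: the argument already noted immediately below \eqref{eq:kappainvcomputed} shows that at $\subss z i=0$ the problem admits $\mu=0$ and $\beta_i^{(s,f)}=0$ as an optimum, giving $\bar\kappa_i^s(0)=0$, so both sides of the claimed identity vanish. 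Hence $\mu^*(\rho\subss z i)=\rho\,\mu^*(\subss z i)$ and the explicitly constructed $(\rho\bar\beta_i^{(s,f)},\rho\bar\mu)$ is itself an optimizer of $\bar\Pset_i(\rho\subss z i)$.

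Plugging this scaled optimizer into the definition \eqref{eq:kappainvcomputed} then yields
\[
\bar\kappa_i^s(\rho\subss z i)=\sum_{f=1}^{q_i}\rho\bar\beta_i^{(s,f)}\subss\bu i^{(s,f)}=\rho\,\bar\kappa_i^s(\subss z i),\qquad s=0,\dots,k_i-1,
\]
and summing over $s$ after multiplying by $(1-\alpha_i)^{-1}$ gives $\bkappa_i(\rho\subss z i)=\rho\,\bkappa_i(\subss z i)$.

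The main subtlety is non-uniqueness of LP optimizers: a priori, $\bkappa_i$ is only well-defined once a selection rule among optimal $(\beta,\mu)$-tuples is fixed. The argument above sidesteps this by \emph{exhibiting} the particular optimizer $(\rho\bar\beta,\rho\bar\mu)$ at the scaled parameter, so that the control actually fed into \eqref{eq:kappainvcomputed} inherits the scaling. This is consistent with the continuous piecewise affine selection invoked after \eqref{eq:kappainvcomputed} via \cite{Gal1995}: since the selection is piecewise linear and agrees with zero at $\subss z i=0$, it is automatically positively homogeneous of degree one on each conic cell, which is exactly what the lemma asserts.
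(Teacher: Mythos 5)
Your proposal is correct and follows essentially the same route as the paper's proof: scale the optimizer of $\bar\Pset_i(\subss z i)$ by $\rho$ to get a feasible point for $\bar\Pset_i(\rho\subss z i)$, then rule out a strictly better optimum by rescaling with $\rho^{-1}$ and contradicting optimality at $\subss z i$. Your explicit treatment of the $\rho=0$ case and of the non-uniqueness of LP optimizers is slightly more careful than the paper's argument, which implicitly assumes $\rho>0$ in the contradiction step and does not address the selection issue.
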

               \begin{lem}
                 \label{lem:shrink}
                 If $\subss x i\in\rho_i\Zset_i$ and $\subss{\tilde w} i\in\eta_i\bZset_i^0$ where $\rho_i\geq\eta_i>0$, then $\subss \px i=A_{ii}\subss x i+B_i\bkappa_i(\subss x i)+\subss{\tilde w} i\in\rho_i\Zset_i\ominus(\rho_i-\eta_i)\bZset_i^0$.
                 \begin{proof}
                   Let $\subss\tx i=\frac{\subss x i}{\rho_i}$ and $\subss{\tilde w} i =\frac{\subss w i}{\eta_i}$. From standard arguments in \cite{Rakovic2010} one can write $\subss\tx i\in\Zset_i$ as
                   $$
                   \subss\tx i=(1-\alpha_i)^{-1}\sum_{s=0}^{k_i-1}\sigma_i^s(\subss\tx i)
                   $$
                   where $\sigma_i^s(\subss\tx i)\in\bZset^s_i$ are suitable functions. Then one has
                   $$
                   \subss x i=\rho_i(1-\alpha_i)^{-1}\sum_{s=0}^{k_i-1}\sigma_i^s(\subss\tx i)
                   $$
                   Furthermore, always from \cite{Rakovic2010} one has
                   $$
                   \bkappa_i(\subss\tx i)=(1-\alpha_i)^{-1}\sum_{s=0}^{k_i-1}\bkappa_i^s(\subss\tx i)
                   $$
                   and, from Lemma \ref{lem:homo},
                   $$
                   \bkappa_i(\subss x i)=\rho_i(1-\alpha_i)^{-1}\sum_{s=0}^{k_i-1}\bkappa_i^s(\subss\tx i).
                   $$
                   Computing $\subss\px i$ one gets
                   \begin{subequations}
                     \label{eq:pxshrink}
                     \begin{align}
                       \label{eq:pxshrink1}\subss x i&=\rho_i(1-\alpha_i)^{-1}\sum_{s=0}^{k_i-1}(\underbrace{A_{ii}\sigma_i^s(\subss\tx i)+B_i\bkappa_i(\subss\tx i)}_{t_i^s})+\subss{\tilde w} i\\
                       \label{eq:pxshrink2}                &\in\left[\rho_i(1-\alpha_i)^{-1}\bigoplus_{s=0}^{k_i-1}\bZset_i^{s+1}\right]\oplus\left[\eta_i\bZset_i^0\right]\\
                       \label{eq:pxshrink3}                &=\left[\rho_i(1-\alpha_i)^{-1}\bigoplus_{s=1}^{k_i-1}\bZset_i^{s}\right]\oplus\left[\rho_i(1-\alpha_i)^{-1}\bZset_i^{k_i}\right]\oplus\left[\eta_i\bZset_i^0\right]\\
                       \label{eq:pxshrink4}                &=\left\{\left[\rho_i(1-\alpha_i)^{-1}\bigoplus_{s=1}^{k_i-1}\bZset_i^s\right]\oplus\rho_i\alpha_i(1-\alpha_i)^{-1}\bZset_i^0\oplus\rho_i\bZset_i^0\oplus\eta_i\bZset_i^0\right\}\ominus\rho_i\bZset_i^0\\
                       \label{eq:pxshrink5}                &=\left\{\left[ \rho_i(1-\alpha_i)^{-1}\bigoplus_{s=0}^{k_i-1}\bZset_i^s\right]\oplus\eta_i\bZset_i^0\right\}\ominus\rho_i\bZset_i^0\\
                       \label{eq:pxshrink6}                &=\left\{\rho_i\Zset_i\oplus\eta_i\bZset_i^0\right\}\ominus\rho_i\bZset_i^0\\
                       \label{eq:pxshrink7}                &\subseteq\rho_i\Zset_i\ominus(\rho_i-\eta_i)\Zset_i^0
                     \end{align}
                   \end{subequations}
                   that is the desired result. Note that
                   \begin{itemize}
                   \item in \eqref{eq:pxshrink2} we used $t_i^s\in\bZset_i^{s+1}$ (that holds by construction of sets $\bZset_i^s$);
                   \item in \eqref{eq:pxshrink4} we used the property that $(A\oplus B)\ominus B=A$, if $A\subseteq\Rset^n$ and $B\subseteq\Rset^n$ are convex bodies (see Lemma 3.18 in \cite{Schneider1993}), and the fact that \eqref{eq:Thetai} implies $\bZset_i^{k_i}\subseteq\alpha_i\bZset_i^0$;
                   \item in \eqref{eq:pxshrink5} we used $\rho_i\alpha_i(1-\alpha_i)^{-1}\bZset_i^0\oplus\rho_i\bZset_i^0=\rho_i(\alpha_i(1-\alpha_i)^{-1}+1)\bZset_i^0=\rho_i(1-\alpha_i)^{-1}\bZset_i^0$;
                   \item in \eqref{eq:pxshrink6} we used \eqref{eq:Zrcilambda};
                   \item in \eqref{eq:pxshrink7} we used the property that if $\rho>\eta>0$ and $A$ and $B$ are convex bodies, then
                     \begin{equation}
                       \label{eq:propertyConvexBody}
                       (\rho A\oplus\eta B)\ominus\rho B\subseteq\rho A\ominus(\rho-\eta)B.
                     \end{equation}
                     The inclusion \eqref{eq:propertyConvexBody} can be shown as follows. One has, from the definition of the operator $\ominus$,
                     $$
                     x\in(\rho A\oplus\eta B)\ominus\rho B\Leftrightarrow x\oplus\rho B\subseteq\rho A\oplus\eta B
                     $$
                     and therefore, using Lemma \ref{lem:SFlem}
                     \begin{equation}
                       \label{eq:hsupport}
                       \begin{aligned}
                         h_{\{x\}}+\rho h_B&\leq \rho h_A+\eta h_B\\
                         h_{\{x\}}+(\rho-\eta)h_B&\leq\rho h_A.
                       \end{aligned}
                     \end{equation}
                     Since $(\rho-\eta)>0$, one has that $(\rho-\eta)h_B$ is the support function of $(\rho-\eta)B$. Then \eqref{eq:hsupport} is equivalent to
                     $$
                     x\oplus(\rho-\eta)B\subseteq\rho A
                     $$
                     that is $x\in\rho A\ominus(\rho-\eta)B$.
                   \end{itemize}
                 \end{proof}
               \end{lem}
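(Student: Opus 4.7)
The plan is to leverage the parametric description $\Zset_i=(1-\alpha_i)^{-1}\bigoplus_{s=0}^{k_i-1}\bZset_i^s$ from Proposition \ref{prop:feasytheta}, together with the homogeneity of $\bkappa_i$ established in Lemma \ref{lem:homo}. First I would normalize by setting $\subss{\tilde x}i=\subss x i/\rho_i\in\Zset_i$ and decomposing it as $(1-\alpha_i)^{-1}\sum_{s=0}^{k_i-1}\sigma_i^s(\subss{\tilde x}i)$ with $\sigma_i^s(\subss{\tilde x}i)\in\bZset_i^s$. The structure of the sets $\bUset_{z_i}^s$ and the implicit representation of $\bkappa_i$ in \eqref{eq:kappainvcomputed} then give an analogous decomposition $\bkappa_i(\subss{\tilde x}i)=(1-\alpha_i)^{-1}\sum_s \bar\kappa_i^s(\subss{\tilde x}i)$, and by Lemma \ref{lem:homo} the same decomposition scaled by $\rho_i$ applies to $\subss x i$ and $\bkappa_i(\subss x i)$.

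Next I would exploit the dynamics constraint in $\Theta_i$, which guarantees that the pointwise map $\sigma_i^s\mapsto A_{ii}\sigma_i^s+B_i\bar\kappa_i^s(\subss{\tilde x}i)$ lands in $\bZset_i^{s+1}$. Summing over $s$ and adding the disturbance term yields
\begin{equation*}
\subss\px i\in\rho_i(1-\alpha_i)^{-1}\bigoplus_{s=0}^{k_i-1}\bZset_i^{s+1}\oplus\eta_i\bZset_i^0.
\end{equation*}
Shifting the index and isolating the terminal block gives $\rho_i(1-\alpha_i)^{-1}\bigoplus_{s=1}^{k_i-1}\bZset_i^s$ plus $\rho_i(1-\alpha_i)^{-1}\bZset_i^{k_i}$. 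Using the terminal containment $\bZset_i^{k_i}\subseteq\alpha_i\bZset_i^0$ (a direct consequence of \eqref{eq:SkS01} combined with \eqref{eq:SkS02}--\eqref{eq:SkS03}) together with the scalar identity $\alpha_i(1-\alpha_i)^{-1}+1=(1-\alpha_i)^{-1}$, I can absorb the terminal block into an extra copy of $\rho_i\bZset_i^0$ and reassemble the whole Minkowski sum as $\rho_i\Zset_i\oplus\eta_i\bZset_i^0$, modulo a cancellation of $\rho_i\bZset_i^0$.

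The final step is to convert this Minkowski-sum expression into the desired Minkowski-difference form. I would insert and cancel a $\rho_i\bZset_i^0$ term using the identity $(A\oplus B)\ominus B=A$ for convex bodies (Lemma 3.18 of \cite{Schneider1993}) and then invoke the auxiliary inclusion
\begin{equation*}
(\rho A\oplus\eta B)\ominus\rho B\subseteq\rho A\ominus(\rho-\eta)B,\qquad \rho\ge\eta>0,
\end{equation*}
applied with $A=\Zset_i$, $B=\bZset_i^0$, $\rho=\rho_i$, $\eta=\eta_i$, to finish.

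The main obstacle is proving this last auxiliary inclusion, since the cancellation of $\rho B$ against $\rho B$ requires subadditivity rather than additivity of support functions. I would prove it via Lemma \ref{lem:SFlem}: characterize $x\in(\rho A\oplus\eta B)\ominus\rho B$ by $h_{\{x\}}+\rho h_B\leq \rho h_A+\eta h_B$, rearrange to $h_{\{x\}}+(\rho-\eta)h_B\leq \rho h_A$, observe that since $\rho-\eta\ge 0$ the left-hand side equals $h_{\{x\}\oplus(\rho-\eta)B}$, and conclude $\{x\}\oplus(\rho-\eta)B\subseteq\rho A$, i.e.\ $x\in\rho A\ominus(\rho-\eta)B$. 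Beyond this, the principal bookkeeping difficulty is keeping track of the factor $(1-\alpha_i)^{-1}$ when merging the $\alpha_i(1-\alpha_i)^{-1}\bZset_i^0$ term, arising from the terminal inclusion $\bZset_i^{k_i}\subseteq\alpha_i\bZset_i^0$, with the remaining blocks into a clean copy of $\rho_i\Zset_i$.
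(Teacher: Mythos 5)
Your proposal is correct and follows essentially the same route as the paper's proof: the same normalization and decomposition of $\Zset_i$ into the blocks $\bZset_i^s$, the same use of the one-step containment $t_i^s\in\bZset_i^{s+1}$ and the terminal inclusion $\bZset_i^{k_i}\subseteq\alpha_i\bZset_i^0$, and the same support-function argument for the auxiliary inclusion $(\rho A\oplus\eta B)\ominus\rho B\subseteq\rho A\ominus(\rho-\eta)B$. No substantive differences to report.
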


               \textbf{Proof of Theorem \ref{thm:mainclosedloop}}\\
               The first part of the proof uses arguments similar to the ones adopted for proving Theorem 1 both in \cite{Farina2012} and \cite{Riverso2012a,Riverso2013c}.\\
               We first show recursive feasibility, i.e. that $\subss x i(t)\in\Xset_i^N,~\forall i\in\MM$ implies $\subss x i(t+1)\in\Xset_i^N$.\\
               Assume that, at instant $t$, $\subss x i(t)\in\Xset_i^N$. The optimal nominal input and state sequences obtained by solving each MPC-$i$ problem $\Pset_i^N$ are $\subss v i(0:N_i-1|t)=\{\subss v i(0|t),\ldots,\subss v i(N_i-1|t)\}$ and $\subss \hx i(0:N_i|t)=\{\subss \hx i(0|t),\ldots,\subss \hx i(N_i|t)\}$, respectively. Define $\subss{v}{i}^{aux}(N_i|t)=\kappa_i^{aux}(\subss\hx i(N_i|t))$ and compute $\subss{\hx}{i}^{aux}(N_i+1|t)$ according to \eqref{eq:dynproblem} from $\subss\hx i(N_i|t)$ and $\subss v i(N_i|t)=\subss{v}{i}^{aux}(N_i|t)$. Note that, in view of constraint \eqref{eq:inTerminalSet} and points (\ref{enu:invariantAux}) and (\ref{enu:terminalSetAux}) of Assumption \ref{ass:axiomsMPC}, $\subss{v}{i}^{aux}(N_i|t)\in\Vset_i$ and  $\subss{\hx}{i}^{aux}(N_i+1|t)\in\hXset_{f_i}\subseteq\hXset_i$. We also define the input sequence
               \begin{equation}
                 \label{eq:feasibleInput}
                 \subss\bv i(1:N_i|t)=\{\subss v i(1|t),\ldots,\subss v i(N_i-1|t),\subss{v}{i}^{aux}(N_i|t)\}
               \end{equation}
               and the state sequence produced by \eqref{eq:dynproblem} from the initial condition $\subss\hx i(0|t)$ and the input sequences $\subss\bv i(1:N_i|t)$,
               \begin{equation}
                 \label{eq:feasibleState}
                 \subss\hbx i(1:N_i+1|t+1)=\{\subss\hx i(1|t),\ldots,\subss\hx i(N_i|t),\subss{\hx}{i}^{aux}(N_i+1|t)\}.
               \end{equation}
               In view of the constraints \eqref{eq:decMPCProblem} at time $t$ and recalling that $\Zset_i$ is an RCI set, we have that $\subss x i(t+1)-\subss\hx i(1|t)\in\Zset_i$. Therefore, we can conclude that the state and the input sequences $\subss\hbx i(1:N_i+1|t)$ and $\subss\bv i(1:N_i|t)$ are feasible at $t+1$, since constraints \eqref{eq:inZproblem}-\eqref{eq:inTerminalSet} are satisfied. This proves recursive feasibility.

               We now prove convergence of the optimal cost function.

               We define $\Pset_i^{N,0}(\subss\hx i(0|t))=\min_{\subss v i(0:N_i-1|t)}\sum_{k=0}^{N_i-1}\ell_i(\subss\hx i(k),\subss v i(k))+V_{f_i}(\subss\hx i(N_i))$ subject to the constraints \eqref{eq:dynproblem}-\eqref{eq:inTerminalSet}. By optimality, using the feasible control law \eqref{eq:feasibleInput} and the corresponding state sequence \eqref{eq:feasibleState} one has
               \begin{equation}
                 \label{eq:optimcost}
                 \Pset_i^{N,0}(\subss\hx i(1|t))\leq\sum_{k=1}^{N_i}\ell_i(\subss\hx i(k|t),\subss v i(k|t))+V_{f_i}(\subss{\hx}{i}^{aux}(N_i+1|t+1))
               \end{equation}
               where it has been set $\subss v i(N_i|t)=\subss{v}{i}^{aux}(N_i|t)$. Therefore we have
               \begin{equation}
                 \label{eq:decrease}
                 \begin{aligned}
                   \Pset_i^{N,0}(\subss\hx i(1|t))-\Pset_i^{N,0}(\subss\hx i(0|t))\leq&-\ell_i(\subss\hx i(0|t),\subss v i(0|t))+\ell_i(\subss\hx i(N_i|t),\subss{v}{i}^{aux}(N_i|t))+\\
                   &+V_{f_i}(\subss{\hx}{i}^{aux}(N_i+1|t))-V_{f_i}(\subss{\hx}{i}^{aux}(N_i|t)).
                 \end{aligned}
               \end{equation}
               In view of Assumption \ref{ass:axiomsMPC}-(\ref{enu:decterminal}), from \eqref{eq:decrease} we obtain
               \begin{equation}
                 \label{eq:decreasLyap}
                 \Pset_i^{N,0}(\subss\hx i(1|t))-\Pset_i^{N,0}(\subss\hx i(t))\leq-\ell_i(\subss\hx i(t),\subss v i(t))
               \end{equation}
               and therefore $\subss\hx i(0|t)\rightarrow 0$ and $\subss v i(0|t)\rightarrow 0$ as $t\rightarrow\infty$.

               Next we prove stability of the origin for the closed-loop system. We highlight that this part of the proof differs substantially from the proof of Theorem 1 both in \cite{Farina2012} and \cite{Riverso2012a,Riverso2013c}. For the sake of clarity, the proof is split in three distinct steps.
               \paragraph{Step 1} Prove that if $\mbf x(0)\in\Xset^N$ there is $\tilde T>0$ such that $\mbf x(\tilde T)\in\Zset$.

                    Recalling that the state $\mbf x(t)$ evolves according to the equation \eqref{eq:controlled_model}, we can write
                    \begin{equation}
                      \label{eq:controlled_model_rewrite}
                      \mbf x(t+1)=\mbf{A_Dx}(t)+\mbf B\bkappa(\mbf x(t))+\mbf{A_cx}(t)+\bar\eta(t)
                    \end{equation}
                    where
                    \begin{equation}
                      \label{eq:bareta}
                      \bar\eta(t)=\mbf{B}( \mbf v(t)+\bkappa(\mbf z(t))-\bkappa(\mbf x(t)) )
                    \end{equation}
                    and $\mbf z(t)=\mbf x(t)-\mbf\hx(0|t)$. In particular, if $\mbf x(0)\in\Xset^N$, recursive feasibility shown above implies that \eqref{eq:controlled_model_rewrite} holds for all $t\geq 0$.\\
                    Note that in view of Assumption \ref{ass:Z0}, the LP problem \eqref{eq:kappainv} is feasible for all $\subss z i\in\Rset^{n_i}$. Indeed \eqref{eq:kappainvSum1} and \eqref{eq:pzInvproblem} require that there are $\mu>0$ and $\subss\bz i^s\in\mu\bZset_i^s$, $s\in 0:k_i-1$ such that $\subss z i=\sum_{i=0}^{k_i-1}\subss\bz i^s$ and since $\bZset_i^0\supset\ball{\omega_i}(0)$ (i.e. $\bar\Zset_i^0$ is full dimensional), these quantities always exist. This implies that the function $\bkappa(\mbf x(t))$ in \eqref{eq:controlled_model_rewrite} is always well defined.\\
                    We already proved the asymptotic convergence to zero of the nominal state $\mbf \hx(0|t)$ and the input signal $\mbf v(0|t)$ and hence we proved that
                    \begin{equation}
                      \label{eq:hxvdelta}
                      \forall\delta>0,~\exists T_1>0:~\norme{\mbf \hx(0|t)}{}\leq\delta\mbox{ and }\norme{\mbf v(0|t)}{}\leq\delta,~\forall t\geq T_1.
                    \end{equation}
                    Moreover function $\bkappa(\cdot)$ is piecewise affine and continuous. In view of this, $\bkappa(\cdot)$ is also globally Lipschitz, i.e.
                    \begin{equation}
                      \label{eq:lipschitz}
                      \exists~L>0~:~\norme{\bkappa(\mbf x-\mbf\hx)-\bkappa(\mbf x)}{}\leq L\norme{\mbf\hx}{}
                    \end{equation}
                    for all $(\mbf x,\mbf\hx)$ such that $\mbf x\in\Xset$ and $\mbf x-\mbf\hx\in\Zset$. Using \eqref{eq:lipschitz} one can show that, for all $\epsilon >0$, setting $\delta=\frac{\epsilon}{\norme{B}{}(1+L)}$ the following implication holds
                    \begin{equation}
                      \label{eq:etaepsilon}
                      \norme{\mbf \hx(0|t)}{}\leq\delta\mbox{ and }\norme{\mbf v(0|t)}{}\leq\delta\Rightarrow\norme{\bar\eta(t)}{}\leq\epsilon,~\forall x(t)\in\Xset.
                    \end{equation}
                    Therefore, from \eqref{eq:hxvdelta}
                    \begin{equation}
                      \label{eq:epsilonT}
                      \forall\epsilon>0,~\exists T_1>0:~\norme{\bar\eta(t)}{}\leq\epsilon,~\forall t\geq T_1.
                    \end{equation}
                    Since $\norme{\mbf\hx(0|t)}{}\rightarrow 0$, as $t\rightarrow\infty$, and $\Zset$ contains $\prod_{i=1}^M\ball{\omega_i}(0)$, then
                    \begin{equation}
                      \label{eq:deltaz}
                      \forall\delta_z>0,~\exists T_2>0:~\mbf\hx(0|t)\in\delta_z\Zset,~\forall t\geq T_2
                    \end{equation}
                    and hence, from \eqref{eq:inZproblem},
                    \begin{equation}
                      \label{eq:xafterT2}
                      \mbf x(t)=\mbf \hx(0|t)+(\mbf x(t)-\mbf\hx(0|t))\in(1+\delta_z)\Zset,~\forall t\geq T_2.
                    \end{equation}
                    From \eqref{eq:controlled_model_rewrite} we have, for all $i\in\MM$,
                    \begin{equation}
                      \label{eq:fromCollectivemodel}
                      \subss x i(t+1)=A_{ii}\subss x i(t)+B_i\bkappa_i(\subss x i(t))+\subss{\tilde w}i(t).
                    \end{equation}
                    where $\subss{\tilde w} i=\sum_{j\in\NN_i}A_{ij}\subss x j+\subss{\bar\eta} i$, $\forall i\in\MM$. Setting $\bar T=\max\{T_1,T_2\}$ and using \eqref{eq:epsilonT} and \eqref{eq:xafterT2}, one has, $\forall t\geq\bar T$
                    \begin{equation}
                      \label{eq:tildewin1plusdeltaz}
                      \subss{\tilde w} i\in(1+\delta_z)\bigoplus_{j\in\NN_i}A_{ij}\Zset_j\oplus\ball\epsilon(0).
                    \end{equation}
                    From Assumption \ref{ass:constr_satisf} we have
                    \begin{subequations}
                      \begin{align}
                        \label{eq:setPredecessorsa}\bigoplus_{j\in\NN_i}A_{ij}\Zset_j&\subseteq\bigoplus_{j\in\NN_i}A_{ij}\left[\Xset_j\ominus\ball{\rho_{j,1}}(0)\right]\\
                        \label{eq:setPredecessorsb}                                               &\subseteq\bigoplus_{j\in\NN_i}\left[\left(A_{ij}\Xset_j\right)\ominus\left(A_{ij}\ball{\rho_{j,1}}(0)\right)\right]\\
                        \label{eq:setPredecessorsc}                                               &\subseteq\left(\bigoplus_{j\in\NN_i}A_{ij}\Xset_j\right)\ominus\left(\bigoplus_{j\in\NN_i}A_{ij}\ball{\rho_{j,1}}(0)\right)\\
                        \label{eq:setPredecessorsd}                                               &\subseteq\Wset_i\ominus\left(\bigoplus_{j\in\NN_i}A_{ij}\ball{\rho_{j,1}}(0)\right)\\
                      \end{align}
                    \end{subequations}
                    Manipulations \eqref{eq:setPredecessorsb} and \eqref{eq:setPredecessorsc} are justified as follows. Let $U_1$, $U_2$, $V_1$, $V_2$ be convex bodies in $\Rset^n$. Then \cite{Markov1998,Schneider1993},
                    \begin{itemize}
                    \item \eqref{eq:setPredecessorsb} follows from $G(U\ominus V)\subseteq GU\ominus GV$, where $G\in\Rset^{n\times n}$,
                    \item \eqref{eq:setPredecessorsc} follows from $(U_1\ominus V_1)\oplus(U_2\ominus V_2)\subseteq (U_1\oplus U_2)\ominus(V_1\oplus V_2)$.
                    \end{itemize}
                    Therefore, there is $\xi_i\in[0,1)$ (that does not depend on $\epsilon$ and $\delta_z$) such that
                    \begin{equation}
                      \label{eq:sumZjinepsW}
                      \bigoplus_{j\in\NN_i}A_{ij}\Zset_j\subseteq\xi_i\Wset_i,
                    \end{equation}
                    and then, from \eqref{eq:tildewin1plusdeltaz},
                    \begin{equation}
                      \label{eq:tildewindeltazepsW}
                      \subss{\tilde w}i\in(1+\delta_z)\xi_i\Wset_i\oplus\ball{\epsilon}(0),~\forall t\geq\bar T.
                    \end{equation}
                    Note that in \eqref{eq:epsilonT} the parameter $\epsilon>0$ can be chosen arbitrarily small. Assume that it verifies $\epsilon<(1+\delta_z)\xi_i\omega_i$, $\forall i\in\MM$ where $\omega_i$ are the radii of the balls in Assumption \ref{ass:Z0}. Then, using Assumption \ref{ass:Z0} we get for $t\geq\bar T$
                    \begin{equation}
                      \label{eq:wtildei}
                      \subss{\tilde w}i(t)\in(1+\delta_z)\xi_i(\Wset_i\oplus\ball{\omega_i}(0))\subseteq(1+\delta_z)\xi_i\bZset_i^0.
                    \end{equation}
                    In view of \eqref{eq:xafterT2} and \eqref{eq:wtildei}, Lemma \ref{lem:outerbound} guarantees that
                    \begin{equation}
                      \label{eq:plusxindeltazZ}
                      \subss\px i\in (1+\delta_z)(\Zset_i\ominus(1-\xi_i)\bZset_i^0)
                    \end{equation}
                    From Assumption \ref{ass:Z0}, one has $\Zset_i\ominus (1-\xi_i)\bZset_i^0\subset\Zset_i\ominus\ball{(1-\xi_i)\omega_i}(0)$ and hence, since $\Zset_i$ contains the origin in its interior, there is $\mu_i\in [0,1)$ such that $\Zset_i\ominus (1-\xi_i)\Zset_i^0\subset\mu_i\Zset_i$. From \eqref{eq:plusxindeltazZ} we get
                    \begin{equation}
                      \label{eq:plusxinmuZi}
                      \subss\px i\in (1+\delta_z)\mu_i\Zset_i.
                    \end{equation}
                    If in \eqref{eq:deltaz} we set $\delta_z$ such that $(1+\delta_z)\mu_i<1$, we have shown that for $t=\bar T$ it holds
                    $$
                    \subss x i(\bar T+1)\in\Zset_i
                    $$
                    and the proof of Step 1 is concluded setting $\tilde T=\bar T+1$.

               \paragraph{Step 2} Prove that if $\mbf x(0)\in\Xset^N$, then $\mbf x(t)\rightarrow 0$ as $t\rightarrow +\infty$.

                    Set $t=\tilde T$. Since from Step 1 (that holds if $\mbf x(0)\in\Xset^N$), one has $\subss x i(t)\in\Zset_i$, under Assumption \ref{ass:axiomsMPC}, the optimizers to $\Pset_i^N(\subss x i(t))$ are $\subss v i(0|t)=0$ and $\subss \hx i(0|t)=0$. Hence, from \eqref{eq:bareta} one has $\bar\eta(t)=0$ and \eqref{eq:fromCollectivemodel} holds with $\subss{\tilde w} i(t)=\sum_{j\in\NN_i}A_{ij}\subss x j(t)$. Furthermore, since $\subss u i=\bkappa_i(\subss x i)$ is the control law that makes the set $\Zset_i$ RCI with respect to $\subss{\tilde w} i$, one has $\subss x i(t+1)\in\Zset_i$. The previous arguments can be applied in a recursive fashion showing that, $\forall t\geq\tilde T$
                    \begin{equation}
                      \label{eq:xiinZi}
                      \subss x i(t)\in\Zset_i
                    \end{equation}
                    \begin{equation}
                      \label{eq:tildewiequalAijxj}
                      \subss{\tilde w}i(t)=\sum_{j\in\NN_i}A_{ij}\subss x j(t).
                    \end{equation}
                    From \eqref{eq:xiinZi}, \eqref{eq:tildewiequalAijxj}, \eqref{eq:sumZjinepsW} and Assumption \ref{ass:Z0} we also have
                    \begin{equation}
                      \label{eq:tildewiinxiZseti0}
                      \subss{\tilde w}i(t)\in\bigoplus_{j\in\NN_i}A_{ij}\Zset_j\subseteq\xi_i\Wset_i\subseteq\xi_i\bar\Zset_i^0.
                    \end{equation}
                    Set $t=\tilde T$ and $\lambda_i(t)=1$, $\forall i\in\MM$. From \eqref{eq:xiinZi} and \eqref{eq:tildewiinxiZseti0} it holds
                    \begin{equation}
                      \label{eq:xintildelambdaZseti}
                      \subss x i(t)\in\lambda_i(t)\Zset_i
                    \end{equation}
                    \begin{equation}
                      \label{eq:tildewiinlambdaixibsetZi0}
                      \subss{\tilde w} i(t)\in\lambda_i(t)\xi_i\bZset_i^0.
                    \end{equation}
                    From Lemma \ref{lem:shrink} we have
                    \begin{equation}
                      \label{eq:xiplusinZminuZ0}
                      \subss x i(t+1)\in\lambda_i(t)\Zset_i\ominus(1-\xi_i)\lambda_i(t)\bZset_i^0.
                    \end{equation}
                    From Assumption \ref{ass:Z0}, it holds $\ball{\omega_i}(0)\subseteq\bZset_i^0$. Then
                    \begin{equation}
                      \label{eq:xt1inlambdaZ}
                      \subss x i(t+1)\in\lambda_i(t)\Zset_i\ominus\ball{(1-\xi_i)\lambda_i(t)\omega_i}(0).
                    \end{equation}
                    The next goal is to compute $\tilde\lambda_i(t+1)<\lambda_i(t)$ such that
                    \begin{equation}
                      \label{eq:xt1inlambdatildeZ}
                      \subss x i(t+1)\in\tilde\lambda_i(t+1)\Zset_i.
                    \end{equation}
                    This will be achieved using Lemma \ref{lem:outerbound} with $A=\Zset_i$ and $\beta=(1-\xi_i)\omega_i$. We have to verify that $A\ominus\ball{\beta}(0)$ contains the origin in its interior and since $\ball{\omega_i}(0)\in\bZset_i^0$, it is enough to show that the smaller set $\Zset_i\ominus (1-\xi_i)\bZset_i^0$ contains the origin in its interior.\\
                    From \eqref{eq:Zrcilambda} one has
                    \begin{equation}
                      \label{eq:Ziinclusions}
                      \begin{aligned}
                        \Zset_i\ominus (1-\xi_i)\bZset_i^0&=\left( \left( (1-\alpha_i)^{-1}\bigoplus_{s=1}^{k_i-1}\bar\Zset_i^s \right) \oplus (1-\alpha_i)^{-1}\bZset_i^0 \right)\ominus (1-\xi_i)\bZset_i^0\\
                        &\supseteq\left( (1-\alpha_i)^{-1}\bZset_i^0\ominus (1-\xi_i)\bZset_i^0 \right)\oplus \left((1-\alpha_i)^{-1}\bigoplus_{s=1}^{k_i-1}\Zset_i^s \right)
                      \end{aligned}
                    \end{equation}
                    where the last inclusion follows from the fact that for generic sets $C$, $D$ and $E$ in $\Rset^n$, it holds \cite{Markov1998}
                    $$
                    (C\oplus D)\ominus E\supseteq (D\ominus E)\oplus C.
                    $$
                    Note that the origin is strictly contained in $\bZset_i^0$ (from Assumption \ref{ass:Z0}) and also in sets $\bZset_i^s$, $s\in 1:k_i-1$, (by construction). Since in \eqref{eq:Zrcilambda} $\alpha_i\in [0,1)$ and we have chosen $\xi_i\in [0,1)$, one has $(1-\alpha_i)^{-1}>(1-\xi_i)$ and therefore
                    $$
                    0\in (1-\alpha_i)^{-1}\bZset_i^0\ominus (1-\xi_i)\bZset_i^0.
                    $$
                    Since the origin is strictly contained in all summands appearing in \eqref{eq:Ziinclusions}, it is also strictly contained in $A\ominus\ball{\beta}(0)$. Letting $\bZset_i^0=\{z\in\Rset^{n_i}:\bar\ZZ_i^0 z\leq\One\}$ and $\bar\ZZ_i^0=[\bar z_{i,1}^0,\ldots,\bar z_{i,q_i}^0]^T$, from point (\ref{enu:outerboundc}) of Lemma \ref{lem:outerbound} we get
                    $$
                    \Zset_i^0\ominus\ball{(1-\xi_i)\omega_i}(0)\subseteq (1-(1-\xi_i)\omega_i\psi_i)\Zset_i,~\psi_i=\min_{j\in 1:q_i}\norme{\bar\ZZ_{i,j}^0}{}.
                    $$
                    From \eqref{eq:xt1inlambdaZ}, one obtains that \eqref{eq:xt1inlambdatildeZ} is fulfilled with
                    \begin{equation}
                      \label{eq:dyntildelambda}
                      \left\{
                        \begin{aligned}
                          \tilde\lambda_i(t+1)&=a_i\lambda_i(t)\\
                          a_i&=1-(1-\xi_i)\omega_i\psi_i
                        \end{aligned}
                      \right.
                    \end{equation}
                    and from point (\ref{enu:outerboundb}) of Lemma \ref{lem:outerbound}, it holds $\abs{a_i}<1$.\\
                    From \eqref{eq:tildewiequalAijxj} and \eqref{eq:xt1inlambdatildeZ} we have
                    $$
                    \subss{\tilde w}i(t+1)\in\bigoplus A_{ij}\tilde\lambda_j(t+1)\Zset_j
                    $$
                    and setting
                    \begin{equation}
                      \label{eq:dynlambdamaxtildelambda}
                      \lambda_i(t+1)=\max_{{j\in\NN_i}\cup\{i\}}\tilde\lambda_j(t+1)
                    \end{equation}
                    it holds
                    \begin{equation}
                      \label{eq:tildewitplus1in}
                      \subss{\tilde w}i(t+1)\in\lambda_i(t+1)(\bigoplus_{j\in\NN_i}A_{ij}\Zset_j)\subseteq\lambda_i(t+1)\xi_i\bZset_i^0
                    \end{equation}
                    where we used \eqref{eq:sumZjinepsW} and Assumption \ref{ass:Z0}. Note that \eqref{eq:tildewitplus1in} corresponds to \eqref{eq:tildewiinlambdaixibsetZi0} at time $t+1$. Similarly, from \eqref{eq:xt1inlambdatildeZ} and \eqref{eq:dynlambdamaxtildelambda} we have that \eqref{eq:xintildelambdaZseti} holds at time $t+1$. Furthermore $0\leq\tilde\lambda_i(t+1)<1$, $0\leq\lambda_i (t+1)<1$. Therefore, the previous arguments can be applied in a recursive fashion to prove that \eqref{eq:xintildelambdaZseti}, \eqref{eq:tildewiinlambdaixibsetZi0}, \eqref{eq:dyntildelambda} and \eqref{eq:dynlambdamaxtildelambda} hold for all $t\geq\tilde T$ and $i\in\MM$. \\
                    In order to conclude the proof of Step 2, we show that $\lambda_i(t)$ given by \eqref{eq:dynlambdamaxtildelambda} converge to zero as $t\rightarrow +\infty$. Indeed, from \eqref{eq:xintildelambdaZseti} this implies that $\mbf x(t)\rightarrow 0$ as $t\rightarrow\infty$. Let $\bar{\lambda}(t)=\max_{i\in\MM}\lambda_i(t)$ and $\bar a=\max_{i\in\MM}a_i$. From \eqref{eq:dyntildelambda} and \eqref{eq:dynlambdamaxtildelambda} one has
                    \begin{equation}
                      \label{eq:lambdabartplus1}
                      \bar\lambda (t+1)=\max_{i\in\MM}\lambda_i (t+1)=\max_{i\in\MM}\max_{j\in\NN_i\cup\{i\}} a_i\lambda_i (t)\leq \bar a\max_{i\in\MM}\max_{j\in\NN_i\cup\{i\}}\lambda_i(t)=\bar a\bar\lambda (t)
                    \end{equation}
                    Being $a\in [0,1)$ and $\bar\lambda (\tilde T)=1$, \eqref{eq:lambdabartplus1} implies that $\bar\lambda (t)\rightarrow 0$ as $t\rightarrow\infty$ and this concludes the proof of Step 2.

               \paragraph{Step 3} Prove stability of the origin of the closed-loop system \eqref{eq:controlled_model}.

                    Note that
                    \begin{equation}
                      \label{eq:ZiinXiN}
                      \Zset_i\subseteq\Xset_i^N
                    \end{equation}
                    as shown at the beginning of Step 2. Moreover, from Assumption \ref{ass:Z0} and \eqref{eq:Zrcilambda} (where, from \eqref{eq:Thetai}, $\alpha_i\in [0,1)$)
                    $$
                    \ball{\omega_i}(0)\subseteq\Zset_i
                    $$
                    and then $\Zset$ is a neighborhood of the origin of $\Rset^n$.
For a given $\epsilon>0$ let $\rho>0$ be such that $\rho<1$ and $\rho \Zset\subseteq B_\epsilon(0)$. As shown at the beginning of Step 2, if $\mbf x\in\Zset$ then problems $\Pset_i^N(\subss x i)$, $i\in\MM$ are feasible, the closed-loop dynamics \eqref{eq:controlled_model} reduces to
                    \begin{equation}
                      \label{eq:closedloopinZ}
                      \mbf \px = \mbf{Ax+B\bkappa(x)}
                    \end{equation}
                    and $\Zset$ is an invariant set for \eqref{eq:closedloopinZ}. This implies that if $\mbf x(0)\in\Zset$, then $\mbf x(t)$ is well defined $\forall t\geq 0$. In order to conclude the proof we have to show that also $\rho\Zset$ is invariant for \eqref{eq:closedloopinZ}. If $\tilde{\mbf x}\in\rho\Zset$ then there is $\mbf x\in\Zset$ such that $\tilde{\mbf x}=\rho\mbf x$. From \eqref{eq:closedloopinZ} and Lemma \ref{lem:homo} one has
                    $$
                    \tilde{\mbf x}^+=\rho\mbf{Ax}+\rho\mbf{B\bkappa(x)}=\rho\mbf\px
                    $$
                    and since $\mbf\px\in\Zset$ then $\tilde{\mbf x}^+\in\rho\Zset$.
               \begin{flushright}$\square$\end{flushright}

     \bibliographystyle{IEEEtran}
     \bibliography{Optimized_PnP_MPC_report}

\end{document}